\def\fcmp{\mathbin{\raise 0.6ex\hbox{\oalign{\hfil$\scriptscriptstyle      \mathrm{o}$\hfil\cr\hfil$\scriptscriptstyle\mathrm{9}$\hfil}}}}
\newcommand{\cat}[1]{\ensuremath{\mathbf{#1}}}
\newcommand{\Unitary}{\cat{Unitary}}
\newcommand{\id}{\ensuremath{\mathrm{id}}}
\newcommand{\V}{\ensuremath{\mathsf{V}}}
\newcommand{\Z}{\ensuremath{\mathsf{Z}}}
\newcommand{\Pg}{\ensuremath{\mathsf{P}}}
\newcommand{\Sg}{\ensuremath{\mathsf{S}}}
\newcommand{\X}{\ensuremath{\mathsf{X}}}
\newcommand{\K}{\ensuremath{\mathsf{K}}}
\newcommand{\SX}{\ensuremath{\sqrt{\mathsf{X}}}}
\newcommand{\T}{\ensuremath{\mathsf{T}}}
\newcommand{\Had}{\ensuremath{\mathsf{H}}}
\newcommand{\CX}{\ensuremath{\mathsf{CX}}}
\newcommand{\CSX}{\ensuremath{\mathsf{CSX}}}
\newcommand{\CSXI}{\ensuremath{\mathsf{CSXdg}}}
\newcommand{\CZ}{\ensuremath{\mathsf{CZ}}}
\newcommand{\CCX}{\ensuremath{\mathsf{CCX}}}
\newcommand{\Swap}{\ensuremath{\mathsf{SWAP}}}
\newcommand{\Midswap}{\ensuremath{\mathsf{Midswap}}}
\newcommand{\Mat}{\ensuremath{\mathsf{Mat}}}
\newcommand{\ThreeMat}{\ensuremath{\mathsf{3Mat}}}
\newcommand{\Ctrl}{\ensuremath{\mathsf{Ctrl}}}
\newcommand{\nCtrl}{\ensuremath{\mathsf{nCtrl}}}
\newcommand{\of}{\mathbin{:}}
\newcommand{\defeq}{\mathbin{::=}}
\newcommand{\fromto}{\leftrightarrow}
\DeclarePairedDelimiter{\sem}{\llbracket}{\rrbracket}
\DeclarePairedDelimiter{\stsem}{\llparenthesis}{\rrparenthesis}
\newcommand{\PiLang}{\ensuremath{\Pi}}
\newcommand{\SPiLang}{\ensuremath{\sqrt{\Pi}}}
\newcommand{\zt}{\ensuremath{\mathbb{0}}}
\newcommand{\ot}{\ensuremath{\mathbb{1}}}
\newcommand{\bool}{\ensuremath{\mathbb{2}}}
\newcommand{\cm}{\mathit}
\newcommand{\pid}{\cm{id}}
\newcommand{\swapp}{\cm{swap}^{+}}
\newcommand{\assocp}{\cm{assocr}^+}
\newcommand{\associp}{\cm{assocl}^+}
\newcommand{\unitepl}{\cm{unite}^{+}l}
\newcommand{\unitipl}{\cm{uniti}^{+}l}
\newcommand{\swapt}{\cm{swap}^{\times}}
\newcommand{\assoct}{\cm{assocr}^{\times}}
\newcommand{\associt}{\cm{assocl}^{\times}}
\newcommand{\unitetl}{\cm{unite}^{\times}l}
\newcommand{\unititl}{\cm{uniti}^{\times}l}
\newcommand{\dist}{\cm{dist}}
\newcommand{\factor}{\cm{factor}}
\newcommand{\absorbl}{\cm{absorbl}}
\newcommand{\factorzr}{\cm{factorzr}}
\newcommand{\seqq}{\fcmp}
\newcommand{\one}{\textsf{1}}
\newcommand{\sprod}{\bullet}
\newcommand{\ctrlgate}{\textsc{ctrl}}
\newcommand{\vgate}{\textsc{v}}
\newcommand{\vigate}{\textsc{vi}}
\newcommand{\ogate}{\textsc{w}}
\newcommand{\oigate}{\textsc{wi}}
\newcommand{\xgate}{\textsc{x}}
\newcommand{\cxgate}{\textsc{cx}}
\newcommand{\ccxgate}{\textsc{ccx}}
\begin{document}
\title{With a Few Square Roots, Quantum Computing is as Easy as \PiLang} 
\author{Jacques Carette} 
\email{carette@mcmaster.ca}
\orcid{0000-0001-8993-9804}
\affiliation{
  \institution{McMaster University}
  \city{Hamilton}
  \state{Ontario}
  \country{Canada}
}

\author{Chris Heunen}
\email{Chris.Heunen@ed.ac.uk}
\orcid{0000-0001-7393-2640}
\affiliation{
  \institution{University of Edinburgh}
  \city{Edinburgh}
  \country{United Kingdom}
}

\author{Robin Kaarsgaard}
\email{kaarsgaard@imada.sdu.dk}
\orcid{0000-0002-7672-799X}
\affiliation{
  \institution{University of Southern Denmark}
  \city{Odense}
  \country{Denmark}
}

\author{Amr Sabry}
\email{sabry@indiana.edu}
\orcid{0000-0002-1025-7331}
\affiliation{
  \institution{Indiana University}
  \city{Bloomington}
  \state{Indiana}
  \country{United States of America}
}

\begin{abstract}
  Rig groupoids provide a semantic model of \PiLang, a universal
  classical reversible programming language over finite types. We
  prove that extending rig groupoids with just two maps and three
  equations about them results in a model of quantum computing that
  is computationally universal and equationally sound and complete for
  a variety of gate sets. The first map corresponds to an
  $8^{\text{th}}$ root of the identity morphism on the unit $1$. The
  second map corresponds to a square root of the symmetry on $1+1$. As
  square roots are generally not unique and can sometimes even be
  trivial, the maps are constrained to satisfy a nondegeneracy axiom,
  which we relate to the Euler decomposition of the Hadamard gate. The
  semantic construction is turned into an extension of \PiLang, called
  \SPiLang, that is a computationally universal quantum programming
  language equipped with an equational theory that is sound and
  complete with respect to the Clifford gate set, the standard gate
  set of Clifford+T restricted to $\le 2$ qubits, and the
  computationally universal Gaussian Clifford+T gate set.
\end{abstract}

\maketitle
\thispagestyle{empty}
\pagestyle{plain}

%%%%%%%%%%%%%%%%%%%%%%%%%%%%%%%%%%%%%%%%%%%%%%%%%%%%%%%%%%%%%%%%%%%%%%%%
\section{Introduction} 

Just like in the classical case, quantum computing can be built up
from booleans and associated operations. The quantum version of
boolean negation is the \X\ gate defined by $\X \ket{0} = \ket{1}$ and
$\X \ket{1} = \ket{0}$.  The quantum circuit model also includes a
gate \SX\ (also known as the \V\ gate) that is the ``square root of \X.''
Informally \SX\ performs half of the action of the \X\ gate, \textit{i.e.},
if we imagine a trajectory from $\ket{0}$ to $\ket{1}$ and another
trajectory from $\ket{1}$ to $\ket{0}$, then one application of \SX\
follows half the relevant trajectory. The standard approach to model
this behaviour is to explicitly express the intermediate midpoints
as complex vectors~\cite{hayes:sqrtnot,satohetal:ctrlv}:
\[
  \SX \ket{0} = \frac{1+i}{2} \ket{0} + \frac{1-i}{2} \ket{1}
  \qquad\qquad
  \SX \ket{1} = \frac{1-i}{2} \ket{0} + \frac{1+i}{2} \ket{1}
\]
One can verify that:
\begin{equation*}\label{eq:VVX}
\begin{array}{rcl}
\SX (\SX~\ket{0}) &=&
  \SX (\frac{1+i}{2} \ket{0} + \frac{1-i}{2} \ket{1}) \\[1.0ex]
&=& 
    \frac{1+i}{2} \SX \ket{0} + \frac{1-i}{2} \SX \ket{1} \\[1.0ex]
&=& 
    \frac{1+i}{2} (\frac{1+i}{2} \ket{0} + \frac{1-i}{2} \ket{1}) + 
    \frac{1-i}{2} (\frac{1-i}{2} \ket{0} + \frac{1+i}{2} \ket{1}) \\[1.0ex]
&=& 
    \frac{i}{2} \ket{0} + \frac{1}{2} \ket{1} - \frac{i}{2} \ket{0} + \frac{1}{2} \ket{1} \\[1.0ex]
&=& 
    \ket{1} 
\end{array}
\end{equation*}
and similarly that $\SX (\SX~\ket{1}) = \ket{0}$.  As is
evident in this tiny example, reasoning this way about quantum programs
is overwhelmed by complex numbers and linear algebra.

Our first insight is that we do \emph{not} need to explicitly
represent the intermediate points. All we need to know about them are
two things: (i) they exist, and (ii) they satisfy one critical
axiom. Technically, we demonstrate that the following categorical
model is, not only computationally universal for quantum computing,
but also sound and complete for several modes of unitary quantum
computing.  

\medskip
\begin{defno}
  The model consists of a rig groupoid $(\cat{C},\otimes,\oplus,O,I)$
  equipped with maps $\omega \colon I \to I$ and
  $\V \colon I \oplus I \to I \oplus I$ satisfying the equations:
  \[
    (\text{E1}) ~\omega^8 = \id
    \qquad
    (\text{E2}) ~\V^2 = \sigma_\oplus 
    \qquad
    (\text{E3}) ~\V \circ \Sg \circ \V = \omega^2 \sprod \Sg \circ \V \circ \Sg
  \]
  % \begin{enumerate}[label=(E\arabic*)]
  %   \item $\omega^8 = \id$, 
  %   \item $\V^2 = \sigma_\oplus$, 
  %   \item $\V \circ \Sg \circ \V = \omega^2 \sprod \Sg \circ \V \circ \Sg$ 
  % \end{enumerate}
  where $\circ$ is sequential composition, $\sprod$ is scalar
  multiplication (cf.\ Def.~\ref{def:sprod}), $\sigma_\oplus$ is the
  symmetry on $I \oplus I$, exponents are iterated sequential
  compositions, and $\Sg \colon I \oplus I \to I \oplus I$ is defined
  as $\Sg = \id \oplus \omega^2$.
\end{defno}

\smallskip\noindent
In the definition, the rig groupoid $\cat{C}$ models an underlying
reversible classical programming language. By convention, booleans in
this language are represented as values of type $I \oplus I$ with one
injection representing \texttt{false}, the other representing
\texttt{true}, and the symmetry $\sigma_\oplus \colon I \oplus I \to I
\oplus I$
\begin{wrapfigure}{r}{0.32\textwidth}
\begin{center}
\includegraphics[scale=0.2]{"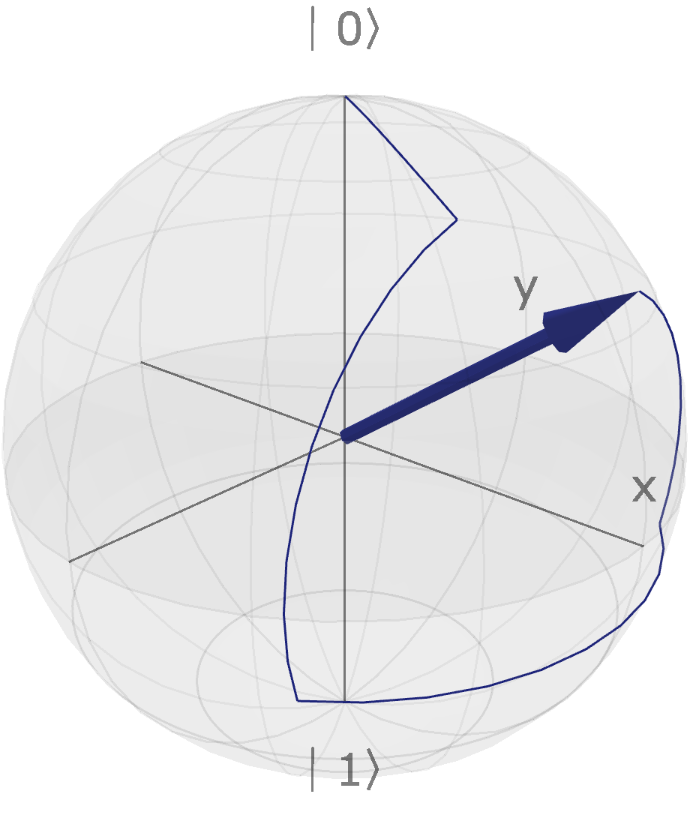"}
\qquad\qquad
\includegraphics[scale=0.2]{"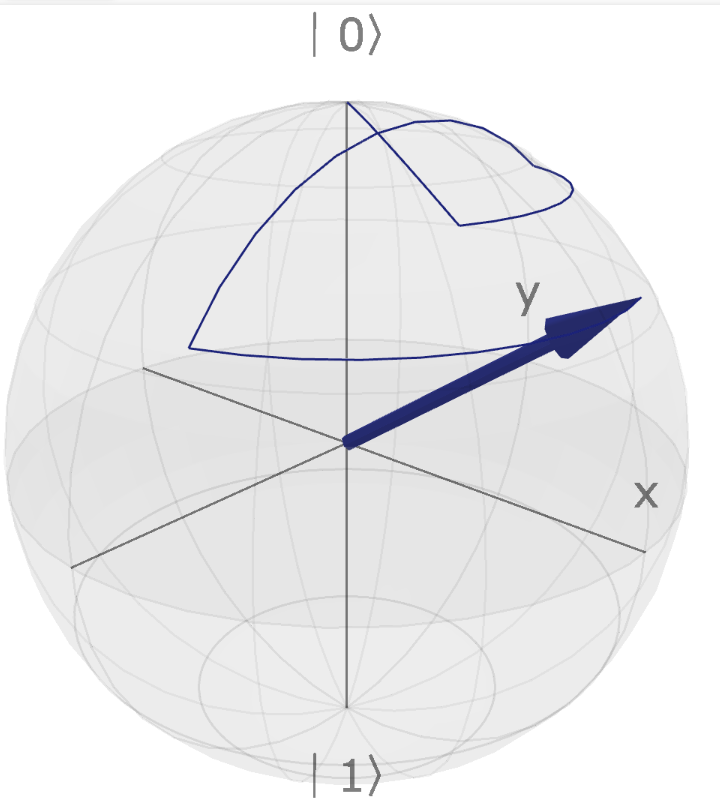"}
\end{center}
\caption{$XZX$ and $ZXZ$ rotations with all angles at $\pi/2$.}
\label{fig:rotations}
\end{wrapfigure}
representing boolean negation. The quantum model has two  
 additional morphisms $\omega$ and \V. The map $\omega$ is a primitive 
$8^{\text{th}}$ root of the identity; its semantics is partially  
 specified by (E1). The map \V\ is the square root of boolean  
negation; its semantics is partially specified by (E2).  So far, we  
 have postulated the existence of square roots but without needing to  
 write any actual complex numbers: they are just morphisms partially
specified by (E1) and (E2). At this point, it would be consistent to
choose $\omega = \id$ but this would not lead to a universal quantum
model. To understand how (E3) selects just the ``right'' square
root, we recall that the \emph{Euler decomposition} expresses any
1-qubit unitary gate as a product of a global phase and three
rotations along two fixed orthogonal axes, and that \Sg\ and \V\
correspond to rotations in complementary bases. In that light, axiom
(E3) picks the $Z$-basis and the $X$-basis as the two axes and
enforces that decompositions along $ZXZ$ or $XZX$ are equal (up to a
physically unimportant global phase). This ensures that it is
immaterial which of \Sg\ and \V\ rotations is mapped to the $Z$- or
$X$-basis and additionally ensures that the angle of the \Sg\ rotation
(induced by the $\omega^2$ in the definition of \Sg) is $\pi/2$. As a
helpful illustration, Fig.~\ref{fig:rotations} shows that, with the
standard choice for the computational basis in the $Z$-direction,
starting from an arbitrary state (near the North pole in the figure),
a sequence of $\pi/2$-$XZX$ rotations (top) is equivalent to a
sequence of $\pi/2$-$ZXZ$ rotations (bottom). Were the angle of the
$Z$-rotation different due to a different choice of~$\omega$, the two
sequences of rotations would not be equivalent.

This approach reduces reasoning about quantum programs to manipulating
the coherence conditions of rig
categories~\cite{laplaza:distributivity} extended with the axioms
(E1), (E2), and (E3). The calculation that $\SX \circ \SX = X$ follows
by (E2). Many quantum equivalences follow similarly. For example, the
proof that $\Sg \circ \Sg$ is equivalent to the $\Z$ gate defined as
$\id \oplus \omega^4$ follows by:
\[
\Sg \circ \Sg 
 = (\id \oplus \omega^2) \circ (\id \oplus \omega^2) 
 = (\id \circ \id) \oplus (\omega^2 \circ \omega^2) 
 = \id \oplus \omega^4 
 = \Z
\]
The proof uses just the coherence conditions of rig categories and is,
along with many other results, formalised in an extension of the
\texttt{agda-categories} library~\cite{10.1145/3437992.3439922}
included in the supplementary material.

The equational theory extracted from the semantic model is sound and
complete with respect to \emph{arbitrary Clifford circuits},
\emph{Clifford+T circuits of at most $2$ qubits}, and \emph{arbitrary
  Gaussian Clifford+T circuits}. These completeness theorems, 
Thms.~\ref{thm:ncliff}, \ref{thm:clifft2}, and \ref{thm:gclifft}, form 
our main technical results:
\begin{itemize}
\item Completeness for \emph{Arbitrary Clifford circuits}
  (cf. Thm~\ref{thm:ncliff}). Circuits built from Clifford gates are
  important in quantum computing for two related reasons. First,
  Clifford gates are exactly those quantum gates that normalise the
  Pauli matrices, which provide a linear-algebraic basis for a single
  qubit. Clifford gates include, and are in fact generated by, \Had,
  \Sg, and \CX. Second, although Clifford circuits may ``look
  quantum,'' they are in fact efficiently simulatable by a
  probabilistic classical computation, by the Gottesman-Knill
  theorem~\cite{gottesmanknill}. 
\item Completeness for \emph{Clifford+T circuits of at most 2 qubits}
  (cf. Thm~\ref{thm:clifft2}). To move beyond classical probabilistic
  machines in computational power, other quantum gates need to be
  considered. One popular choice is to extend the Clifford set with
  the \T\ gate. The restriction to $\leq 2$ qubits is a stepping stone
  to the next result. 
\item Completeness for \emph{Arbitrary Gaussian Clifford+T circuits}
  (cf. Thm~\ref{thm:comp_unz}). Another universal quantum gate set is
  given by $\{\X,\CX,\CCX,\Sg,\K\}$~\cite{Amy2020numbertheoretic,
    bianselinger:unz}. Such circuits can be characterised
  algebraically as those unitary matrices with entries in the ring
  $\mathbb{Z}[\frac{1}{2},i]$ of Gaussian dyadic
  rationals~\cite{Amy2020numbertheoretic}.
\end{itemize}
To summarise, we have developed a vastly simplified axiomatic
treatment of quantum computation using the coherence conditions of rig
categories extended with morphisms modeling roots of the identity and
a square root of the symmetry $\sigma_\oplus \colon I \oplus I \to I
\oplus I$.

This formalism provides, to our knowledge, the first sound and
complete equational theory for a computationally universal unitary
quantum programming language. As this approach avoids imposing
specific assumptions about gate sets or implementation details, it
could serve to bridge the gap between quantum programming languages
and the various gate sets used in the quantum circuit model. Further,
it could serve as a "theory of equational theories" capable of
describing and analyzing various modes of quantum computing, such as
different gate sets, without preference to any specific
approach. While this paper primarily focuses on qubit circuits due to
the abundance of finite presentation results, it does not reflect an
inherent limitation or assumption within the formalism. In fact, we
propose that this formalism could be used equally well to represent
and analyse circuits from qudit gate sets (\textit{e.g.}, qutrit
Clifford+T~\cite{yehwetering:quditcliffordt}).

\paragraph*{Related work}
Our result is distinguished from other calculi based on
ZX~\cite{coeckeduncan:zx}, notably ZH~\cite{backenskissinger:zh} and
PBS/LOv~\cite{clementetal:lov} in two fundamental aspects. First, ZX
and ZH describe quantum theory, not quantum computation. That is, they
are complete for all linear maps, not for unitary ones only. Indeed,
one of the major problems associated with the ZX calculus is circuit
extraction: to ensure that rewriting a quantum circuit ends up with a
quantum circuit again. This problem is
\#P-hard~\cite{beaudrapkissingerwetering:circuitextraction}. Second,
these calculi do not have universal equational theories,
as some of the axiom schemas involve existential quantifiers,
resulting from the Euler decomposition, that cannot be
eliminated~\cite{duncanperdrix:euler}. The theory presented here
builds on a different line of research that led to advances in
reversible quantum computing (\textit{e.g.}, \cite{10.1145/3498667,
  heunenkaarsgaard:qie, heunenkaarsgaardkarvonen:arrows,
  glueckkaarsgaardyokoyama:reversible}) and equational theories of
quantum circuits and unitaries~\cite{selinger:clifford,
  bianselinger:cliffordt, bianselinger:unz} (see also
\cite{thomsenkaarsgaardsoeken:ricercar}) arising from number-theoretic
insights (\textit{e.g.},
\cite{Amy2020numbertheoretic,gilesselinger:exact}). Our resulting
theory is sound, complete and universal, never considers more general
linear maps (unlike ZH/ZX), and relies only on universally quantified
equations (unlike PBS/LOv). Our work complements the work of
\citet{staton:effects}, which provides a sound and complete equational
theory of state preparation and measurement (which we do not consider
here), but does not consider an equational theory of unitaries.

\paragraph*{Outline.} We assume familiarity with category theory (in
particular rig categories, monoidal categories, and string diagrams)
and with the fundamentals of quantum computing. We provide a brief
review in the next section for the necessary notation and
conventions. Sec~\ref{sec:comb} motivates the use of combinator-based
languages to reason about quantum circuits. Sec.~\ref{sec:sqrtpi}
introduces the formal syntax of the combinator language \SPiLang\ used
as a technical device in this paper. Sec.~\ref{sec:den} gives the
denotational semantics of \SPiLang\ in extended rig
groupoids. Sec.~\ref{sec:sc} includes the main technical results that
establish soundness and completeness of \SPiLang\ for a variety of
gate sets. Sec.~\ref{sec:eq} describes the equational theory in
action. The concluding section puts the results in a larger context
and discusses their significance. Some of the proofs are relegated to
the appendix.

%% 25 pages excluding bib 

%% intro 2.5 pages
%% informal 2 pages
%% SPi 2.5 pages
%% Denot. 3 pages
%% Full abstraction 9 pages
%% Examples 3 pages
%% Conclusion 3 pages

%%%%%%%%%%%%%%%%%%%%%%%%%%%%%%%%%%%%%%%%%%%%%%%%%%%%%%%%%%%%%%%%%%%%%%%%
\section{Background} 

We recall here some basics of unitary quantum computing and rig
categories.

\subsection{Unitary quantum computing}
\label{subsec:unitaryqc}
For more details about this topic we refer to textbooks such as
\cite{nielsenchuang:qcqi, yanofskymannucci:qccs}.

Closed quantum systems are modelled mathematically by complex Hilbert
spaces $H$, which are complex vector spaces with an inner product
$\bra{-}\ket{-}$ that are complete as metric spaces (with respect to
the metric induced by the inner product). For example, a one-qubit
system is represented by $\mathbb{C}^2$, with vectors $\ket{0}
= (\begin{smallmatrix} 1 \\ 0
\end{smallmatrix})$ and $\ket{1} = (\begin{smallmatrix}
0 \\ 1
\end{smallmatrix})$ representing the two classical states. Hilbert
spaces $H$ and $K$ can be combined to form new ones using the
\emph{direct sum} $H \oplus K$ and \emph{tensor product} $H \otimes
K$: these can be seen as analogues of sum types and product types in
the sense that $\mathbb{C}^n \oplus \mathbb{C}^m \cong
\mathbb{C}^{n+m}$ and $\mathbb{C}^n \otimes \mathbb{C}^m \cong
\mathbb{C}^{nm}$.

Every linear map $f$ on a Hilbert space is associated with a
\emph{(Hermitian) adjoint} $f^\dagger$ satisfying $\bra{f
  \phi}\ket{\psi} = \bra{\phi}\ket{f^\dagger \psi}$. The
discrete time evolution of closed quantum systems is described by
\emph{unitaries}, which are linear isomorphisms $U$ satisfying $U^{-1}
= U^\dagger$. Some important examples of unitaries on $\mathbb{C}^2$
include the \emph{Hadamard} gate $\Had$, the $\X$ gate (the quantum
analogue of the classical \textsc{not} gate), and the \emph{phase
gates} $\Z$, $\Sg$, and $\T$, given by the matrices:
\[
\Had = \tfrac{1}{\sqrt{2}} \left(\begin{smallmatrix}
  1 & 1 \\    
  1 & -1 
\end{smallmatrix}\right)
\qquad
\X = \left(\begin{smallmatrix}
  0 & 1 \\    
  1 & 0 
\end{smallmatrix}\right)
\qquad
\Z = \left(\begin{smallmatrix}
  1 & 0 \\    
  0 & -1 
\end{smallmatrix}\right)
\qquad
\Sg = \left(\begin{smallmatrix}
  1 & 0 \\    
  0 & i 
\end{smallmatrix}\right)
\qquad
\T = \left(\begin{smallmatrix}
  1 & 0 \\    
  0 & \tfrac{1+i}{\sqrt{2}} 
\end{smallmatrix}\right)
\]
Any unitary $U$ acting on $H$ can be extended to a \emph{controlled}
variant acting on $\emph{C}^2 \otimes H$, given in matrix form by the
block diagonal matrix
\[
\left(\begin{matrix}
I & 0 \\ 0 & U  
\end{matrix}\right)
\]
where $I$ is the identity on $H$. This controlled-$U$ will apply $U$
to $H$ only if the given qubit was in the state $\ket{1}$; otherwise
it will do nothing. For example, the controlled-$X$ gate $\CX$ is
given by
\[
\CX = \left(\begin{smallmatrix}
  1 & 0 & 0 & 0 \\
  0 & 1 & 0 & 0 \\
  0 & 0 & 0 & 1 \\    
  0 & 0 & 1 & 0
\end{smallmatrix}\right)
\]
Similar to classical hardware description, low-level quantum
computations can be described at the level of qubits and gates using
quantum circuits, which we describe in further detail in
Sec.~\ref{sec:comb}, save for one crucial definition concerning
when a quantum gate set can be said to be universal:
\begin{definition}[Computational universality~\cite{aharonov:toffolihadamard}]
  A set of quantum gates $G$ is said to be \emph{strictly universal}
  if there exists a constant $n_0$ such that for any $n \ge n_0$, the
  subgroup generated by $G$ is dense in $\mathrm{SU}(2^n)$. The set $G$
  is said to be \emph{computationally universal} if it can be used to
  simulate to within $\epsilon$ error any quantum circuit which uses
  $n$ qubits and $t$ gates from a strictly universal set with only
  polylogarithmic overhead in $(n,t,1/\epsilon)$.
\end{definition}

\subsection{Rig categories}
\label{subsec:rigcats}
We refer to \cite{awodey:cat,heunenvicary:cqt} for more on (monoidal)
categories, and to \cite{johnsonyau:bimonoidal} for a recent textbook
on rig categories and their applications.

A category $\mathbf{C}$ is an algebraic structure capturing typed
processes: a category consists of some types (\emph{objects}) $X,Y,Z$
and some processes (\emph{morphisms}) $f,g,h$ such that each process
$f$ is assigned an input type (\emph{domain}) $X$ and an output type
(\emph{codomain}) $Y$, written $f : X \to Y$. Processes $f : X \to Y$
and $g : Y \to Z$ can be composed to form a new process $g \circ f : X
\to Z$ in such a way that composition is associative and unital
(\textit{i.e.}, every object $X$ is associated with an \emph{identity}
$\id_X : X \to X$ such that $f \circ \id_X = f = \id_Y \circ f$ for
all $f : X \to Y$). Thus, categories describe theories of processes
that can be composed in sequence: if a morphism $f$ has an inverse
$f^{-1}$ such that $f \circ f^{-1} = \id$ and $f^{-1} \circ f = \id$,
we say that $f$ is an \emph{isomorphism}. A category which contains
only isomorphisms is called a \emph{groupoid}.

A \emph{symmetric monoidal category} $(\mathbf{C},\otimes,I)$ is a
category that also permits parallel composition of objects and
morphisms: whenever one has objects $X$ and $Y$, there exists an
object $X \otimes Y$; similarly, morphisms $f : X \to Y$ and $g : Z
\to W$ give rise to $f \otimes g : X \otimes Z \to Y \otimes
W$. Further, we require that there is a distinguished object $I$ and
families of isomorphisms (indexed by objects $X,Y,Z$) $\lambda_\otimes
: I \otimes X \to X$ and $\rho_\otimes : I \otimes X \to X$ (the
\emph{unitors}); $\alpha_\otimes : (X \otimes Y) \otimes Z \to X
\otimes (Y \otimes Z)$ (the \emph{associator}); and $\sigma_\otimes :
X \otimes Y \to Y \otimes X$ (the \emph{symmetry}), satisfying some
equations (see, \textit{e.g.}, \cite[Chapter 1]{heunenvicary:cqt}).

A \emph{rig category} (or \emph{bimonoidal category})
$(\mathbf{C},\otimes,\oplus,I,O)$ is a category which is symmetric
monoidal in two different ways, such that one monoidal structure
distributes over the other. Precisely, it is a category such that
$(\mathbf{C},\otimes,I)$ and $(\mathbf{C},\oplus,O)$ are both
symmetric monoidal categories, and there are families of isomorphisms
(indexed by objects $X,Y,Z$) $\delta_L : X \otimes (Y \oplus Z) \to (X
\otimes Y) \oplus (X \otimes Z)$ and $\delta_R : (X \oplus Y) \otimes
Z \to (X \otimes Z) \oplus (Y \otimes Z)$ (the \emph{distributors})
and $\delta_0^L : O \otimes X \to O$ and $\delta_0^R : X \otimes O \to
O$ (the \emph{annihilators}), subject again to some equations (see
\cite{laplaza:distributivity}). A rig category which is simultaneously
a groupoid is called a \emph{rig groupoid}. The category \Unitary{}
of finite-dimensional Hilbert spaces and unitaries forms a rig groupoid
with its tensor product $\otimes$ and direct sum $\oplus$.
%%%%%%%%%%%%%%%%%%%%%%%%%%%%%%%%%%%%%%%%%%%%%%%%%%%%%%%%%%%%%%%%%%%%%%%%

%%%%%%%%%%%%%%%%%%%%%%%%%%%%%%%%%%%%%%%%%%%%%%%%%%%%%%%%%%%%%%%%%%%%%%%%
\section{Reasoning about Quantum Circuits with Combinators}
\label{sec:comb}
 
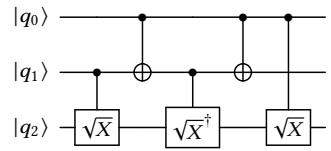
\begin{wrapfigure}{r}{0.32\textwidth}
      %trim=left bottom right top
  %\includegraphics[trim=0 0 0 30,clip,scale=0.5]{sw.png}
  \vspace{-0.75\baselineskip}
  {\footnotesize
  \begin{center}
  $\Qcircuit @C=.7em @R=0.7em @!R {
      & \lstick{\ket{q_0}} & \qw             & \ctrl{1} & \qw
      & \ctrl{1}
      & \ctrl{2} & \qw \\
      & \lstick{\ket{q_1}} & \ctrl{1}        & \targ    & \ctrl{1}
      & \targ
      & \qw & \qw \\
      & \lstick{\ket{q_2}} & \gate{\sqrt{X}} & \qw      & \gate{\sqrt{X}^\dagger}
      & \qw
      & \gate{\sqrt{X}} & \qw
    } 
  $
  \end{center}}
\caption{\label{fig:sw}Quantum circuit for \CCX.}
\end{wrapfigure}
The \textit{lingua franca} of quantum computing is that of quantum
circuits.  Like boolean circuits consisting of bit-carrying wires
connecting boolean gates, quantum circuits consist of wires carrying
qubits connecting quantum gates.  For example, the circuit in
Fig.~\ref{fig:sw} has~5 controlled unitary gates acting on 3
qubits. In order, the first three gates are: controlled-\SX\ (aka
\CSX), controlled-not (aka \CX), and controlled-inverse-\SX\ (aka
\CSXI).

%%%%%%%%%%%%
\subsection{Circuits as Matrices} 

Quantum circuits have a canonical reading as complex matrices. The
quantum gates stand for specific unitary matrices which are combined
by matrix multiplication when gates are composed sequentially, and by
tensor product when gates are composed in parallel. For example, the
controlled gates used in the circuit above denote the following
matrices:
\[
  \CSX = \tfrac{1}{2}\left(\begin{smallmatrix}
    2 & 0 & 0 & 0 \\
    0 & 2 & 0 & 0 \\
    0 & 0 & -1+i & -1-i \\    
    0 & 0 & -1-i & -1+i 
  \end{smallmatrix}\right)
  \qquad
  \CX = \left(\begin{smallmatrix}
    1 & 0 & 0 & 0 \\
    0 & 1 & 0 & 0 \\
    0 & 0 & 0 & 1 \\    
    0 & 0 & 1 & 0 
  \end{smallmatrix}\right)
  \qquad
    \CSXI = \tfrac{1}{2}\left(\begin{smallmatrix}
    2 & 0 & 0 & 0 \\
    0 & 2 & 0 & 0 \\
    0 & 0 & -1-i & -1+i \\    
    0 & 0 & -1+i & -1-i 
  \end{smallmatrix}\right)
\]
which when all multiplied following the layout of the circuit produce:
$$\left(\begin{smallmatrix}
     1 & 0 & 0 & 0 & 0 & 0 & 0 & 0 \\    
     0 & 1 & 0 & 0 & 0 & 0 & 0 & 0 \\    
     0 & 0 & 1 & 0 & 0 & 0 & 0 & 0 \\    
     0 & 0 & 0 & 1 & 0 & 0 & 0 & 0 \\    
     0 & 0 & 0 & 0 & 1 & 0 & 0 & 0 \\    
     0 & 0 & 0 & 0 & 0 & 1 & 0 & 0 \\    
     0 & 0 & 0 & 0 & 0 & 0 & 0 & 1 \\    
     0 & 0 & 0 & 0 & 0 & 0 & 1 & 0 
\end{smallmatrix}\right)$$
The reader may recognise the resulting matrix as the denotation of the
Toffoli (aka \CCX) gate~\cite{10.1007/3-540-10003-2104}. Indeed the
equivalence of \CCX\ to the circuit in Fig.~\ref{fig:sw} is an
instance of the Sleator-Weinfurter~[\citeyear{PhysRevLett.74.4087}]
construction. Evidently, one way to establish the equivalence is to
reduce both circuits to a common matrix. If such a low-level algebraic
manipulation is undesirable, a high-level, but informal proof, would
proceed by case analysis on the possible values of $q_0q_1$:
\begin{itemize}
\item if both $q_0q_1$ are 0, then no control gate is activated and
  the circuit behaves like the identity; 
\item if one of $q_0q_1$ is 1 and the other is 0, then both \SX\ and its
  inverse are activated and the circuit is again equivalent to the identity; 
\item if both $q_0q_1$ are 1, then two instances of \SX\ are
  activated which negates $q_2$.
\end{itemize}
To summarise, the circuit in Fig.~\ref{fig:sw} negates $q_2$ exactly
when both $q_0q_1$ are 1, which is exactly the behaviour of the
Toffoli gate. We will formalise this example using our calculus in
Sec.~\ref{sec:eq}.

\begin{comment}
\begin{figure}[t]
\includegraphics[trim=0 0 0 0,clip,scale=0.5]{sw2.png}
\caption{\label{fig:sw2}Another circuit for \CCX}
\end{figure}

Naturally, the \CCX\ gate can be realised using circuits composed of
other primitive gates. The circuit in Fig.~\ref{fig:sw2} is a standard
one~\cite{nielsenchuang:quantum} relying on \T\ and Hadamard (aka
\Had) gates instead of controlled-\SX. To prove equivalence, the
informal case analysis used for reasoning about Fig.~\ref{fig:sw} is
not as helpful. To prove the equivalence, we can still multiply all
the matrices or alternatively rely on more sophisticated
group-theoretic arguments~\cite{10.5555/2011791.2011799}.
\end{comment}

%%%%%%%%%%%%
\subsection{Circuits as Rig Morphisms} 
\label{sub:rig}

It is relatively easy to find \emph{some} collection of local rewrite
rules that are sound for quantum circuits composed of particular gate
sets. It is much harder to find a \emph{complete}
collection that guarantee that any equivalent
quantum circuits can be transformed to one another. We solve this
problem as follows. First, we build on the completeness result for
classical reversible circuits~\cite{10.1145/3498667} by including all
the coherence conditions for rig categories as a foundation for
reasoning about the classical subset of gates (\textit{e.g.}, \X, \CX, \CCX,
etc.) To reason about the purely quantum gates (\textit{e.g.}, \SX, \Had, \T,
etc.) we build on a collection of insights explained below.

The first insight is to not worry about gates at all but instead
exploit the rig groupoid structure that provides two constructors
$\oplus$ and $\otimes$ that behave in a distributive way, like $+$ and
$\times$ in the rig of natural numbers. The $\oplus$ construct, which
is not present in formalisms such as the
ZX-calculus~\cite{coeckeduncan:zx} provides a way to build quantum
gates from first principles by exploiting the fact that a qubit is a
two-dimensional additive structure $\ot \oplus \ot$. For example, the
rig structure provides, among others, the natural isomorphisms
$\lambda_\otimes \of I \otimes A \to A$, $\sigma_\oplus \of A \oplus B
\to B \oplus A$, and $\delta_R \of (A \oplus B) \otimes C \to (A
\oplus C) \otimes (B \oplus C)$ which can be used to define gates as
follows.  First, we isolate two patterns \Mat\ and \Ctrl\ to construct
simple gates and their controlled versions:
\begin{align*}
\Mat & \defeq \lambda_\otimes \oplus \lambda_\otimes \circ \delta_R \of
(I \oplus I) \otimes A \to A \oplus A \\
\Ctrl~m & \defeq \Mat^{-1} \circ (\id \oplus m) \circ \Mat \of
(I \oplus I) \otimes A \to (I \oplus I) \otimes A
\end{align*}
The definition of \Ctrl\ above is parametric in $m : I \oplus I \to I
\oplus I$, enabling the definitions of the classical gates:
\begin{align*}
\X & \defeq \sigma_\oplus \of I \oplus I \to I \oplus I
\\
\CX & \defeq \Ctrl~\X \of (I \oplus I) \otimes (I \oplus I) \to (I
\oplus I) \otimes (I \oplus I)
\\
\CCX & \defeq \Ctrl~\CX \of (I \oplus I) \otimes ((I \oplus I) \otimes
(I \oplus I)) \to (I \oplus I) \otimes ((I \oplus I) \otimes (I \oplus
I))
\end{align*}
These patterns would also provide controlled versions of single qubit
quantum gates if we managed to express them. To that end, we use the
insight that, by the Euler decomposition, single qubit quantum gates
can be expressed as a product $\phi \cdot P Q P’$, where $\phi$ is a
phase, $P$ and $P’$ are rotations in one basis, and $Q$ is a rotation
in a complementary basis. Thus, the categorical framework ``only''
needs to express phase gates in two complementary bases such as the
canonical $Z$ and $X$ bases; it turns out that this is relatively
straightforward once the framework includes roots of unity and a
square root of $\sigma_\oplus$. Each root of unity $\omega$ directly
provides phase gate $\id \oplus \omega$ in the \Z-basis; phase gates
in the \X-basis are obtained by the change of basis induced by \Had\
which itself can be defined using roots of unity and the square root
of $\sigma_\oplus$ (cf. Fig.~\ref{fig:abbrev}). The technical
challenge is that square roots are not unique, so for example
postulating some \V\ such that $\V \circ \V = \sigma_\oplus$ is not
sufficient to determine \V.  Axiom $(E_3)$, however, is sufficient to
completely determine all the required square roots. The final product
is an equational theory that provides (formalisable) proofs for
circuit equivalences that only require a modest extension of
conventional categorical reasoning.

\section{A Universal Quantum Language: \SPiLang}
\label{sec:sqrtpi}

We present the syntax of \SPiLang, whose underlying language is the
classical reversible language \PiLang\ that is universal for
reversible computing over finite types and whose semantics is
expressed in the rig groupoid of finite sets and
bijections~\cite{jamessabry:infeff} . After reviewing the design of
\PiLang\, we introduce the extension \SPiLang.

%%%%%%%%%%%%
\subsection{The Core Language: \PiLang}
\label{sub:ctrl}

In reversible boolean circuits, the number of input bits matches the
number of output bits. Thus, a key insight for a programming language
of reversible circuits is to ensure that each primitive operation
preserves the number of bits, which is just a natural number. The
algebraic structure of natural numbers as the free commutative
semiring (or, commutative rig), with $(0,+)$ for addition, and
$(1,\times)$ for multiplication then provides sequential, vertical,
and horizontal circuit composition. Generalising these ideas, a typed
programming language for reversible computing should ensure that every
primitive expresses an isomorphism of finite types, \textit{i.e.}, a
permutation. 

\begin{figure}  
\begin{align*}
  b & \defeq \zt \mid \ot \mid b+b \mid b \times b & \text{(value types)} \\
  t & \defeq b \fromto b & \text{(combinator types)} \\
  \mathit{iso} & \defeq \pid \mid \swapp \mid \assocp \mid
  \associp \mid \unitepl \mid \unitipl \mid \absorbl \mid \factorzr 
& \text{(isomorphisms)} \\
    & \mid \swapt \mid \assoct \mid \associt 
      \mid \unitetl \mid \unititl 
     \mid \dist \mid \factor \\
  c & \defeq \mathit{iso} \mid c \seqq c \mid c + c \mid c \times c & \text{(combinators)}
\end{align*}
\caption{The syntax of~\PiLang.}
\label{fig:pisyntax}
\end{figure}

The syntax of the language~\PiLang, shown in Fig.~\ref{fig:pisyntax},
captures this concept. Type expressions $b$ are built from the empty
type (\zt), the unit type (\ot), the sum type ($+$), and the product
type ($\times$). A type isomorphism $c : b_1 \fromto b_2$ models a
reversible circuit that permutes the values in $b_1$ and $b_2$. These
type isomorphisms are built from the primitive identities
$\mathit{iso}$ and their compositions. The \PiLang-isomorphisms are
not ad hoc: they correspond exactly to the laws of a \emph{rig}
operationalised into invertible
transformations~\cite{10.1007/978-3-662-49498-1,CARETTE202215} which
have the types in Fig.~\ref{fig:pitypes}. Each line in the top part of
the figure has the pattern $c_1 : b_1 \fromto b_2 : c_2$ where~$c_1$
and~$c_2$ are self-duals; $c_1$ has type $b_1 \fromto b_2$ and $c_2$
has type $b_2 \fromto b_1$.

\begin{figure}[t]
\begin{equation*}
  \begin{array}{rcrclcl}
    \pid & \of & b &\fromto& b & \of & \pid \\
    \swapp & \of & b_1 + b_2 &\fromto& b_2 + b_1 & \of & \swapp \\
    \assocp & \of & (b_1 + b_2) + b_3 &\fromto& b_1 + (b_2 + b_3) & 
    \of & \associp \\
    \unitepl & \of & \zt + b &\fromto& b & \of & \unitipl \\
    \swapt & \of & b_1 \times b_2 &\fromto& b_2 \times b_1 & \of &
    \swapt \\
    \assoct & \of & (b_1 \times b_2) \times b_3 &\fromto& b_1 \times
    (b_2 \times b_3) & \of & \associt \\
    \unitetl & \of & \ot \times b &\fromto& b & \of & \unititl \\
    \dist & \of & (b_1 + b_2) \times b_3 &\fromto& (b_1 \times b_3)
    + (b_2 \times b_3) & \of & \factor \\
    \absorbl & \of & b \times \zt &\fromto& \zt & \of & \factorzr 
  \end{array}
\end{equation*}
\begin{equation*}
  \frac{c_1 \of b_1 \fromto b_2 \quad c_2 \of b_2 \fromto b_3}{c_1 \seqq c_2 
    \of b_1 \fromto b_3}
\qquad
  \frac{c_1 \of b_1 \fromto b_3 \quad c_2 \of b_2 \fromto b_4}{c_1 + c_2 
  \of b_1 + b_2 \fromto b_3 + b_4} 
\qquad
  \frac{c_1 \of b_1 \fromto b_3 \quad c_2 \of b_2 \fromto b_4}{c_1 \times c_2 
  \of b_1 \times b_2 \fromto b_3 \times b_4}
\end{equation*}
\caption{\label{fig:pitypes}Types for \PiLang\ combinators}
\end{figure}

\begin{figure}[t]
\begin{align*}
  \ctrlgate~c &= \dist \seqq \pid + (\pid \times c ) \seqq \factor \\
  \one : \ot \fromto \ot &= \pid \\
  \xgate : \bool \fromto \bool &= \swapp \\
  \cxgate : \bool \times \bool \fromto \bool \times \bool &= \ctrlgate~\swapp \\
  \ccxgate : \bool \times \bool \times \bool \fromto \bool \times \bool \times \bool
       &= \ctrlgate~\cxgate
\end{align*}
\caption{\label{fig:pid}Derived \PiLang\ constructs.}
\end{figure}

The instance of $\pid$ at type $\ot \fromto \ot$ plays an important
role as it will induce \emph{scalars}; it is given the distinguished
name \one\ when used as a scalar value. To see how this language
expresses reversible circuits, we first define types that describe
sequences of booleans ($\bool^{n}$). We use the type $\bool = \ot +
\ot$ to represent booleans with the left injection representing
\texttt{false} and the right injection representing
\texttt{true}. Boolean negation (the \xgate-gate) is straightforward
to define using the primitive combinator $\swapp$. We can represent
$n$-bit words using an $n$-ary product of boolean values. To express
the \cxgate- and \ccxgate-gates we need to encode a notion of
conditional expression. Such conditionals turn out to be expressible
using the distributivity and factoring identities of rigs as shown in
Fig.~\ref{fig:pid}. An input value of type $\bool \times b$ is
processed by the $\dist$ operator, which converts it into a value of
type $(\ot \times b) + (\ot \times b)$. Only in the right branch,
which corresponds to the case when the boolean is \texttt{true}, is
the combinator~\ensuremath{c} applied to the value of
type~\ensuremath{b}.  The inverse of $\dist$, namely $\factor$ is
applied to get the final result. Using this conditional, \cxgate\ is
defined as $\ctrlgate~\xgate$ and the Toffoli \ccxgate\ gate is
defined as $\ctrlgate~\cxgate$. Because \PiLang\ can express the
Toffoli gate and can generate ancilla values of type~$\ot$ as needed,
it is universal for classical reversible circuits.

\begin{theorem}[\PiLang\ Expressivity]
  \PiLang\ is universal for classical reversible circuits, \textit{i.e.}, boolean
  bijections $\bool^n \to \bool^n$ (for any natural number $n$).
\end{theorem}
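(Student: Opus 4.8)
The plan is to reduce the theorem to two elementary facts about $\PiLang$ combinators: (i) for every $n$ there is a $\PiLang$ isomorphism $\Phi_n \of \bool^n \fromto \ot^{+2^n}$, where $\ot^{+m}$ denotes the ``unary'' type $\ot + (\ot + \cdots + \ot)$ built from $m$ copies of $\ot$ in a fixed right-nested form; and (ii) every permutation of the $m$-element type $\ot^{+m}$ is realised by some $\PiLang$ combinator $\ot^{+m} \fromto \ot^{+m}$. Granting these, let $f \of \bool^n \to \bool^n$ be a boolean bijection. Transporting along $\Phi_n$, $f$ corresponds to a permutation $\pi$ of $\ot^{+2^n}$; picking a combinator $g$ for $\pi$ from~(ii), the composite $\Phi_n \seqq g \seqq \Phi_n^{-1}$ is a $\PiLang$ combinator of type $\bool^n \fromto \bool^n$ whose denotation is $\Phi_n^{-1} \circ \pi \circ \Phi_n = f$ (abusing notation, $\Phi_n$ here names both the combinator and the bijection it denotes). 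I use that $\PiLang$ is closed under sequential composition $\seqq$ and under inversion: each primitive $\mathit{iso}$ has its inverse among the primitives (e.g.\ $\associp$ inverts $\assocp$), $(c_1 \seqq c_2)^{-1} = c_2^{-1} \seqq c_1^{-1}$, and inversion lifts through $+$ and $\times$. Correctness of the construction is then a routine unwinding of the semantics in finite sets and bijections.

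I would prove~(i) by induction on $n$. The base case is $\bool = \ot + \ot$ by definition. For the step, $\bool^{n+1} = \bool \times \bool^n = (\ot + \ot) \times \bool^n$, so $\dist$ produces $(\ot \times \bool^n) + (\ot \times \bool^n)$; the unitor $\unitetl$ collapses each factor to $\bool^n$; the inductive isomorphism rewrites each summand to $\ot^{+2^n}$; and repeated use of $\assocp$ reassociates $\ot^{+2^n} + \ot^{+2^n}$ into the chosen normal form $\ot^{+2^{n+1}}$. For~(ii), recall that the adjacent transpositions $\tau_i = (i,\, i{+}1)$ generate the symmetric group $S_m$, so it suffices to express each $\tau_i$. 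On $\ot^{+m}$ in right-nested form, $\tau_i$ acts as the identity on the first $i-1$ summands and, on the remaining tail $\ot + (\ot + \mathrm{rest})$, as $\associp \seqq (\swapp + \pid) \seqq \assocp$; composing the gadgets for the transpositions in a chosen factorisation of $\pi$ yields a combinator for $\pi$.

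Conceptually this is the combinator-level incarnation of the classical fact that the Toffoli gate together with a supply of constant ancillas is universal for reversible boolean circuits --- the excerpt already exhibits Toffoli as the $\PiLang$ combinator $\ccxgate$ (\cref{fig:pid}) --- with the role of the ancilla lines played here by the copies of the unit type that the rig structure lets us introduce and discard freely via $\unitetl$/$\unititl$; passing to the unary representation has the pleasant effect of absorbing all ancilla management and garbage uncomputation into the single normalisation isomorphism $\Phi_n$. I expect no genuine obstacle: the only real work is the two bookkeeping inductions above, and the one slightly delicate point is keeping the right-nested normal form of $\ot^{+2^n}$ consistent between the recursive calls in~(i) and the transposition gadgets in~(ii) --- exactly the kind of associativity bookkeeping that the rig coherence laws underlying the $\PiLang$ primitives are built to discharge, so it goes through mechanically (and is just what a machine-checked formalisation would verify).
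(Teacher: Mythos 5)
Your proof is correct, but it takes a genuinely different route from the paper. The paper disposes of this theorem in one sentence preceding the statement: since \PiLang{} expresses the Toffoli gate \ccxgate{} (\cref{fig:pid}) and can freely introduce and discard $\ot$-typed ancilla wires via $\unititl/\unitetl$, it inherits universality from the classical fact that Toffoli plus ancillas generates all reversible boolean circuits; no proof block is given and the result is effectively cited from the prior \PiLang{} literature. You instead give a self-contained construction: normalise $\bool^n$ to a unary sum $\ot^{+2^n}$ via a combinator $\Phi_n$ built from $\dist$, $\unitetl$, and reassociation, realise an arbitrary permutation of $\ot^{+m}$ by decomposing it into adjacent transpositions and building each transposition gadget from $\swapp$, $\assocp$, $\associp$, and $\pid + (-)$, and conjugate. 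Both arguments are sound. What the paper's buys is brevity and a direct bridge to the reversible-circuits literature; what yours buys is that it is elementary and self-contained (you never need the nontrivial Toffoli-universality theorem or any ancilla/garbage-management argument), and it proves strictly more --- your argument shows \PiLang{} realises \emph{every} bijection between \emph{any} two finite types of equal cardinality, not just those of shape $\bool^n \to \bool^n$. Indeed, your construction is essentially the fullness half of the Full Abstraction and Adequacy theorem the paper cites immediately afterward (the semantics in finite sets and permutations is full), so your proof sits naturally as the common engine behind both results. The only loose ends, neither a real gap, are the degenerate cases you gloss: $n=0$ (where $\bool^0 = \ot$, handled by $\pid$) and the last transposition $\tau_{m-1}$ (where the tail has no ``rest'' and the gadget degenerates to $\swapp$).
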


%%%%%%%%%%%%
\subsection{Classical Completeness} 

A crucial fact for the rest of the paper is the existence of an
equational theory for \PiLang\ that is sound and complete for the
permutation semantics. The equations for the theory were collected in
a second level of \PiLang\ syntax as level-2
combinators~\cite{10.1007/978-3-662-49498-1}. Each level-2 combinator
is of the form $c_1 \fromto_2 c_2$ for appropriate $c_1$ and $c_2$ of
the same type $b_1 \fromto b_2$ and asserts that $c_1$ and~$c_2$
denote the same bijection. For example, among the large number of
equations, we have the following level-2 combinators dealing with
associativity:

\begin{minted}{agda}
  assoc◎l  : c₁ ◎ (c₂ ◎ c₃) ⟷₂ (c₁ ◎ c₂) ◎ c₃
  assoc◎r  : ((c₁ ◎ c₂) ◎ c₃) ⟷₂ (c₁ ◎ (c₂ ◎ c₃))
  assocl+l : ((c₁ + (c₂ + c₃)) ◎ assocl₊) ⟷₂ (assocl₊ ◎ ((c₁ + c₂) + c₃))
  assocl+r : (assocl₊ ◎ ((c₁ + c₂) + c₃)) ⟷₂ ((c₁ + (c₂ + c₃)) ◎ assocl₊)
\end{minted}

\begin{theorem}[\PiLang\ Full Abstraction and Adequacy~\cite{10.1145/3498667}]
The equational theory of \PiLang\ expressed using the level-2
combinators $\fromto_2$ is sound and complete with respect to its
semantics in the weak symmetric rig groupoid of finite sets and
permutations. 
\end{theorem}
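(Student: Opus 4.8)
This is an established result \cite{10.1145/3498667}, so the plan is to recall the shape of its proof, which separates into soundness and completeness. \emph{Soundness} --- that $c_1 \fromto_2 c_2$ entails $\sem{c_1} = \sem{c_2}$ in the weak symmetric rig groupoid $\mathbf{FinBij}$ of finite sets and bijections --- is a routine induction on the derivation of $c_1 \fromto_2 c_2$. Each level-2 generator was introduced precisely to name one of: the functoriality of $\seqq$, $+$, and $\times$; the naturality of a structural isomorphism ($\swapp$, $\assocp$, $\unitepl$, $\dist$, $\absorbl$, and their multiplicative and inverse counterparts); or one of the coherence equations of symmetric rig categories in the sense of Laplaza \cite{laplaza:distributivity}. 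Since $\sem{-}$ sends every $\PiLang$ primitive to the corresponding structural bijection, and $\mathbf{FinBij}$ is a weak symmetric rig groupoid, every such equation holds on the nose; the inductive step closes under composition and the two tensors.

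The substantive direction is \emph{completeness}: if $\sem{c_1} = \sem{c_2}$ then $c_1 \fromto_2 c_2$. The plan is to organise $\PiLang$-types and $\PiLang$-combinators-modulo-$\fromto_2$ into a category $\PiLang/{\fromto_2}$ --- this is where the catalogue of level-2 combinators is engineered to match Laplaza's axiom list --- show that it is a (weak) symmetric rig groupoid and in fact the free one on no generators, so that $\sem{-} \colon \PiLang/{\fromto_2} \to \mathbf{FinBij}$ is the essentially unique structure-preserving functor, and observe that completeness is exactly faithfulness of this functor. I would then reduce faithfulness to the additive fragment: a type-normalisation lemma gives, for each type $b$, a combinator $b \fromto n \cdot \ot$ where $n \cdot \ot = \ot + (\ot + \cdots)$ has $\mathsf{size}(b)$ summands, and coherence makes any two such combinators $\fromto_2$-equal, so it suffices to show $\sem{-}$ is injective on each hom-set $n \cdot \ot \fromto n \cdot \ot$. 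That hom-set maps onto the symmetric group $S_n$ --- surjectivity is the expressivity theorem, since every permutation is denoted by a combinator built from $\swapp$, $\assocp$, and $+$ --- and injectivity follows once one checks that the level-2 combinators derive, for the combinators implementing adjacent transpositions $s_i$, the Coxeter relations $s_i^2 \fromto_2 \id$, $s_i s_j \fromto_2 s_j s_i$ for $|i - j| \ge 2$, and the braid relation $s_i s_{i+1} s_i \fromto_2 s_{i+1} s_i s_{i+1}$, since these present $S_n$.

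The hard part is the coordination between those steps: showing syntactically that the level-2 rules suffice to (a) transport an arbitrary combinator past the type-normalisation isomorphisms, flattening every occurrence of $\times$ and of $\dist$/$\factor$ down into the purely additive world --- which is where Laplaza's coherence theorem has to be re-derived inside the calculus rather than merely invoked --- and (b) normalise the resulting additive combinator to a canonical word in adjacent transpositions using only the Coxeter moves above. Neither (a) nor (b) is individually deep, but together they amount to a long bookkeeping development; since it is carried out in full in \cite{10.1145/3498667}, I would cite that work for the details and confine the present argument to the remark that soundness is immediate and completeness rests on coherence of symmetric rig categories together with a presentation of the symmetric groups.
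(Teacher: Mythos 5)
The paper does not prove this theorem itself: it is stated purely as a citation of prior work \cite{10.1145/3498667}, and the surrounding text makes clear it is imported as a black box so that classical reversible circuit identities can be invoked freely in the later completeness arguments. You correctly recognise this and defer to the citation for details. Your sketch of the cited argument is accurate in its essentials: soundness by induction over level-2 derivations, completeness by exhibiting $\PiLang/{\fromto_2}$ as a free (weak) symmetric rig groupoid, reducing via type normalisation and coherence to the purely additive hom-sets $n\cdot\ot \fromto n\cdot\ot$, and closing by deriving the Coxeter presentation of $S_n$ inside the level-2 calculus---this is indeed the shape of the development in the cited work, and you correctly locate the labour in re-deriving coherence internally and in the additive normalisation step.
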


\noindent
As a consequence, we may use any classical reversible circuit identity
(\textit{i.e.}, any identity involving only rig terms in the category of finite
sets and permutations) without explicit proof, as such a proof can be
reconstructed using the theorem above. In particular, we will freely
use the classical identities below involving various combinations of
\CX\ and \Swap\ gates (which can all be straightforwardly verified by
explicit computation):
{\small
\begin{align}\tag{P1}
    \Qcircuit @C=.5em @R=.7em {
     & \targ & \qw & \qw \\
     & \ctrl{-1} & \ctrl{1} & \qw \\
     & \qw & \targ & \qw
    } 
    & \qquad \raisebox{-5mm}{=} \qquad
    \Qcircuit @C=.5em @R=.7em {
     & \qw      & \targ     & \qw \\
     & \ctrl{1} & \ctrl{-1} & \qw \\
     & \targ    & \qw       & \qw
     } \label{eq:p1}
\\[0.5\baselineskip] \tag{P2}
  \Qcircuit @C=.7em @R=0.5em @!R {
    & \targ     & \ctrl{1} & \qw      & \ctrl{1} & \targ     & \qw
    \\
    & \ctrl{-1} & \targ    & \ctrl{1} & \targ    & \ctrl{-1} & \qw
    \\
    & \qw       & \qw      & \targ    & \qw      & \qw       & \qw
  } 
  & \qquad \raisebox{-4mm}{=} \qquad
  \Qcircuit @C=.7em @R=0.5em @!R {
    & \qswap     & \qw      & \qswap     & \qw \\
    & \qswap\qwx & \ctrl{1} & \qswap\qwx & \qw \\
    & \qw        & \targ    & \qw        & \qw
  } \label{eq:p2}
\\[0.5\baselineskip] \tag{P3}
  \Qcircuit @C=.7em @R=0.5em @!R {
    & \qswap      & \qw       & \qswap     & \qw \\
    & \qswap\qwx  & \targ     & \qswap\qwx & \qw \\
    & \qw         & \ctrl{-1} & \qw   & \qw
  } 
  & \qquad \raisebox{-4mm}{=} \qquad
  \Qcircuit @C=.7em @R=0.5em @!R {
    & \qw      & \qw       & \targ     & \qw       & \qw
    & \qw \\
    & \ctrl{1} & \targ     & \ctrl{-1} & \targ     & \ctrl{1}
    & \qw \\
    & \targ    & \ctrl{-1} & \qw       & \ctrl{-1} & \targ
    & \qw
  } \label{eq:p3}
\\[0.5\baselineskip] \tag{P4}
  \Qcircuit @C=.7em @R=0.7em @!R {
    & \ctrl{1} & \targ     & \qw
    & \ctrl{1} & \targ     & \qw
    & \ctrl{1} & \targ     & \qw
    & \qw \\
    & \targ    & \ctrl{-1} & \targ
    & \targ    & \ctrl{-1} & \targ
    & \targ    & \ctrl{-1} & \targ
    & \qw \\
    & \qw      & \qw       & \ctrl{-1}
    & \qw      & \qw       & \ctrl{-1}
    & \qw      & \qw       & \ctrl{-1}
    & \qw 
  } 
  & \qquad \raisebox{-5mm}{=} \qquad
  \Qcircuit @C=.7em @R=1.5em {
    & \qw & \qw & \qw \\
    & \qw & \qw & \qw \\
    & \qw & \qw & \qw    
  } \label{eq:p4}
  \\[0.5\baselineskip] \tag{P5}
  \Qcircuit @C=.7em @R=0.7em @!R {
    & \qw       & \qw      & \ctrl{1} 
    & \qw       & \qw      & \ctrl{1} 
    & \qw       & \qw      & \ctrl{1}
    & \qw \\
    & \targ     & \ctrl{1} & \targ 
    & \targ     & \ctrl{1} & \targ 
    & \targ     & \ctrl{1} & \targ
    & \qw \\
    & \ctrl{-1} & \targ    & \qw
    & \ctrl{-1} & \targ    & \qw
    & \ctrl{-1} & \targ    & \qw
    & \qw 
  }
  & \qquad \raisebox{-5mm}{=} \qquad
  \Qcircuit @C=.7em @R=1.5em {
    & \qw & \qw & \qw \\
    & \qw & \qw & \qw \\
    & \qw & \qw & \qw    
  } \label{eq:p5}
  \\[0.5\baselineskip] \tag{P6}
  \Qcircuit @C=.7em @R=0.7em @!R {
    & \ctrl{1} & \targ     & \ctrl{1} & \qw \\
    & \targ    & \ctrl{-1} & \targ    & \qw
  }
  & \qquad \raisebox{-2.5mm}{=} \qquad
  \Qcircuit @C=.7em @R=1.5em {
    & \qswap     & \qw \\
    & \qswap\qwx & \qw
  } \label{eq:p6}
\end{align}}

%%%%%%%%%%%%
\subsection{Adding Square Roots}

The remarkable fact is that all it takes for a programming language to
be universal for quantum computing with a sound and complete
equational theory is the modest extension to \PiLang\ in
Fig.~\ref{fig:spisyntaxtypes}.

The extension consists of a square root \vgate\ of \xgate\ and an
$8^{\text{th}}$ root \ogate\ of the identity combinator \one. To
maintain reversibility, we add not just these square roots but their
inverses \vigate\ and \oigate\ as well. The semantics of the new
combinators is partially specified by Eqs.~\ref{eq:sqrt1}
and~\ref{eq:sqrt2}. From these equations and the original level-2
combinators, we can derive properties of the inverses, \textit{e.g.}:
\[\begin{array}{rcl@{\qquad}l}
  \xgate &\fromto_2& \vgate ~\seqq~ \vgate & \textrm{(by 2-reversibility)}\\
  \vigate ~\seqq~ \xgate ~\seqq~ \xgate &\fromto_2&
    \vigate ~\seqq~ \vgate ~\seqq~ \vgate ~\seqq~ \xgate & \textrm{(by compatibility)}\\
  \vigate &\fromto_2& \vgate ~\seqq~ \xgate & \textrm{(by inverses and unit)} \\
  \\
  1 &\fromto_2& \ogate^8 & \textrm{(by 2-reversibility)}\\
  \oigate ~\seqq~  1 &\fromto_2& \oigate ~\seqq~ \ogate^8 & \textrm{(by compatibility)}\\
  \oigate &\fromto_2& \ogate^7 & \textrm{(by inverses and unit)} 
  \end{array}\]
As discussed earlier, Eqs. \ref{eq:sqrt1} and \ref{eq:sqrt2} do not completely
determine the meaning of the new combinators, however. In particular,
they do not exclude the trivial square root $\ogate = \one$. To get a
non-trivial semantics, we also impose Eq.~\ref{eq:e3}.

\begin{figure}
\noindent
\textbf{Syntax}
\begin{align*}
  \mathit{iso} & ::= \cdots \mid \vgate \mid \vigate \mid \ogate \mid \oigate
  & \text{(isomorphisms)}
\end{align*}
\noindent 
\textbf{Types}
\[\begin{array}{rcrclcl}
    \vgate & \of & \bool & \fromto & \bool & \of & \vigate \\
    \ogate & \of & 1 & \fromto & 1 & \of & \oigate 
  \end{array}\]
\noindent 
\textbf{Equations}
  \begin{enumerate}[label={(E\arabic*)}]
  \item \label{eq:sqrt1} $\vgate^2  \fromto_2 \xgate$
  \item \label{eq:sqrt2} $\ogate^8  \fromto_2 \one$
  \item \label{eq:e3}
  $\vgate ~\seqq~ (\pid + \ogate^2) ~\seqq~ \vgate
  \fromto_2
  \unititl ~\seqq~ \ogate^2 \times ((\pid + \ogate^2) ~\seqq~ \vgate ~\seqq~ (\pid + \ogate^2))
   ~\seqq~ \unitetl$
\end{enumerate}
\caption{The \SPiLang\ extension of \PiLang.}
\label{fig:spisyntaxtypes}
\end{figure}

%%%%%%%%%%%%%%%%%%%%%%%%%%%%%%%%%%%%%%%%%%%%%%%%%%%%%%%%%%%%%%%%%%%%%%%%
\section{Denotational Semantics}
\label{sec:den}

By design, \PiLang\ has a natural model in \emph{rig
  groupoids}~\cite{10.1007/978-3-662-49498-1,10.1145/3498667}. Indeed,
every atomic isomorphism of \PiLang\ corresponds to a coherence
isomorphism in a rig category, while sequencing corresponds to
composition, and the two parallel compositions are handled by the two
monoidal structures. Inversion corresponds to the canonical dagger
structure of groupoids. This interpretation is summarised in
the top part of Fig.~\ref{fig:semantics}.

%%%%%%%%%%%%
\subsection{Postulating Square Roots} 

We will postulate the existence of certain square roots to a rig
groupoid to obtain models of \SPiLang. Ideally, there would be a
universal categorical construction that formally adjoins $n$th roots
of specified (endo)morphisms to a given (rig) category. The
traditional way in commutative algebra to adjoin a square root of $r$
to a ring $R$ is to first move to the polynomial ring $R[x]$ in one
variable $x$, and then to quotient out the ideal generated by $x^2-r$
to force $x^2=r$. This method is fraught with problems in the
categorical case, because there is no analogue of the polynomial ring,
no good analogue of quotients by ideals, and because it only works for
endomorphisms.

\begin{figure}
  \begin{equation*}
    \begin{array}{rclrcl}
      \multicolumn{6}{l}{\text{\textbf{Types}}} \\
      \sem{\zt} &=& O & \sem{\ot} &=& I \\
      \sem{b_1 + b_2} &=& \sem{b_1} \oplus \sem{b_2} &
      \sem{b_1 \times b_2} &=& \sem{b_1} \otimes \sem{b_2} \\ \\
      \multicolumn{6}{l}{\text{\textbf{\PiLang\ Terms}}} \\

      \sem{\pid} &=& \id &
      \sem{c_1 \seqq c_2} &=& \sem{c_2} \circ \sem{c_1} \\
      \sem{c_1 + c_2} &=& \sem{c_1} \oplus \sem{c_2} &
      \sem{c_1 \times c_2} &=& \sem{c_1} \otimes \sem{c_2} \\ \\

      \sem{\assocp} &=& \alpha_\oplus &
      \sem{\associp} &=& \alpha_\oplus^{-1} \\

      \sem{\unitipl} &=& \lambda_\oplus^{-1} &
      \sem{\unitepl} & = & \lambda_{\oplus} \\

      \sem{\assoct} &=& \alpha_\otimes & 
      \sem{\associt} &=& \alpha_\otimes^{-1} \\

      \sem{\unititl} &=& \lambda_{\otimes}^{-1} &
      \sem{\unitetl} &=& \lambda_{\otimes} \\

      \sem{\swapp} &=& \sigma_\oplus &
      \sem{\swapt} &=& \sigma_\otimes \\

      \sem{\dist} &=& \delta_R &
      \sem{\factor} &=& \delta_R^{-1} \\
 
      \sem{\absorbl} &=& \delta_0 & 
      \sem{\factorzr} &=& \delta_0^{-1} \\ \\

      \multicolumn{6}{l}{\text{\textbf{\SPiLang\ Terms}}} \\
      \sem{\ogate} &=& \omega &
      \sem{\oigate} &=& \omega^7 \\
      \sem{\vgate} &=& \V &
      \sem{\vigate} &=& \V^3
    \end{array}
  \end{equation*}
  \caption{Semantics of \PiLang\ in rig groupoids
    $(\cat{C},\otimes,\oplus,O,I)$ and of \SPiLang\ in models of
    \SPiLang.}
  \label{fig:semantics}
\end{figure}
 
Another way to formally adjoin a square root of
$A \stackrel{f}{\to} B$ is to add a new object and two new morphisms
$A \stackrel{{}^{1/2}f}{\to} \bullet \stackrel{f^{1/2}}{\to} B$, to
take the free category on the resulting directed graph, and then
quotient out composition that already existed in the base category, as
well as quotienting out $f \sim f^{1/2} \circ {}^{1/2}f$. This does
work in arbitrary categories, satisfies a universal property, and can
be applied to arbitrary sets of morphisms $f$ simultaneously. The new
square roots automatically interact well with inverses in
groupoids. However, to respect rig structure we would have to take
free combinations of $\oplus$ and $\otimes$, and the benefit of the
universal property would be lost to bureaucracy.

Instead of pursuing general constructions, we will therefore simply
postulate what we need of a categorical model. It will be clear that
at least one model exists.

\begin{definition}
\label{def:sprod}
Given a scalar $s : I \to I$ and a morphism $f : X \to Y$, define the
\emph{scalar multiplication} of $f$ by $s$ on the left, written $s
\sprod f$, as $\lambda_\otimes \circ s \otimes f \circ
\lambda_\otimes^{-1} : X \to Y$. One similarly defines scalar
multiplication on the right, $f \sprod s$, by replacing left unitors
in the above by right unitors.
\end{definition}

\begin{definition}\label{def:model}
  A \emph{model of \SPiLang} consists of a rig category
  $(\cat{C},\otimes,\oplus,O,I)$ equipped with maps
  $\omega : I \to I$ and $\V : I \oplus I \to I \oplus I$ satisfying the
  equations:
  \begin{enumerate}[label={(E\arabic*)}]
    \item $\omega^8 = \id$, \label{ax:omega}
    \item $\V^2 = \sigma_\oplus$, \label{ax:v}
    \item $\V \circ \Sg \circ \V = \omega^2 \sprod \Sg \circ \V \circ \Sg$
      \label{ax:sqy}
  \end{enumerate}
  where $\Sg : I \oplus I \to I \oplus I$ is given by $\Sg = \id \oplus \omega^2$.
\end{definition}

This model is strong enough to express the standard gate set of
Clifford+T. It is not a minimal universal model, however: for
example, the (computationally universal) gate set of Gaussian
Clifford+T only requires a fourth root of unity, \textit{i.e.}, the use 
of $\omega : I \to I$ with $\omega^8 = \id$ can be replaced by $i : I
\to I$ with $i^4 = \id$ while still retaining computational
universality.
%We will look a bit more on the correspondence between the 
%degree of roots of unity and the expressivity of the language in 
%Sec.~\ref{sec:speed}.

\begin{proposition}\label{prop:unitarymodel}
  The rig groupoid \Unitary\ of finite-dimensional Hilbert
  spaces and unitaries is a model of \SPiLang.
\end{proposition}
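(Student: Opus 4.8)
The plan is to make the canonical choices of rig structure on \Unitary, exhibit $\omega$ and \V\ concretely as small unitaries, and then reduce \ref{ax:omega}--\ref{ax:sqy} to elementary calculations; this is essentially a consistency check, and the only work is a short $2\times 2$ matrix computation.

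First I would fix the rig structure: $\otimes$ is the tensor product and $\oplus$ the direct sum of finite-dimensional Hilbert spaces, $I=\mathbb{C}$, $O=0$, with the evident unitors, associators, distributors and annihilators, and with the symmetry $\sigma_\oplus$ on $I\oplus I\cong\mathbb{C}^2$ the swap $\left(\begin{smallmatrix}0&1\\1&0\end{smallmatrix}\right)$. That $(\Unitary,\otimes,\oplus,O,I)$ is a rig groupoid, hence in particular a rig category, is recorded in Section~\ref{subsec:rigcats}, so I would cite it rather than re-verify Laplaza's coherence axioms. I would also record two facts used below: $\Unitary(I,I)$ is exactly the group $U(1)$ of complex units, and, by Definition~\ref{def:sprod} with the canonical left unitor on $\mathbb{C}\otimes Y$, the scalar multiplication $s\sprod f$ is simply the matrix $s\cdot f$.

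Next I would define the two extra maps and dispatch \ref{ax:omega} and \ref{ax:v}. Take $\omega:=e^{i\pi/4}\in U(1)$; then $\omega^8=1=\id_I$, giving \ref{ax:omega}. For \V\ take the unitary $\tfrac12\left(\begin{smallmatrix}-1+i&-1-i\\-1-i&-1+i\end{smallmatrix}\right)$ on $\mathbb{C}^2$: one $2\times 2$ multiplication shows $\V^2=\left(\begin{smallmatrix}0&1\\1&0\end{smallmatrix}\right)=\sigma_\oplus$, giving \ref{ax:v}. (The swap has four square roots in $U(2)$; the point of the last step is to check that this one is compatible with the chosen $\omega$ --- notably it is not the bare $\sqrt{\X}$ used for exposition in the introduction, but one carrying an extra phase.)

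The only substantive step is \ref{ax:sqy}. Here $\Sg=\id\oplus\omega^2=\left(\begin{smallmatrix}1&0\\0&i\end{smallmatrix}\right)$ is the phase gate, and $\omega^2\sprod(-)$ is multiplication by $i$, so the equation to verify is the matrix identity $\V\circ\Sg\circ\V=i\cdot(\Sg\circ\V\circ\Sg)$: one computes that both sides are scalar multiples of $\left(\begin{smallmatrix}-1&1\\1&1\end{smallmatrix}\right)$ and compares the two scalars. Conceptually this is the Euler-decomposition fact that $ZXZ$- and $XZX$-rotations with all angles $\pi/2$ agree up to a global phase (cf.\ Figure~\ref{fig:rotations}), that phase being exactly $\omega^2$ once the global phases carried by \V\ and \Sg\ are accounted for; the ``hard part'' is therefore only the phase bookkeeping that pins \V\ down. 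With \ref{ax:omega}--\ref{ax:sqy} in hand, \Unitary\ satisfies Definition~\ref{def:model}, so it is a model of \SPiLang, confirming in particular the remark just before Definition~\ref{def:sprod} that at least one model exists.
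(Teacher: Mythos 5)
Your proof is correct and takes essentially the same approach as the paper: you pick $\omega=e^{i\pi/4}$ and the same unitary \V\ (the paper writes it as $\Had\,\mathrm{diag}(-1,i)\,\Had$, which expands to exactly your $\tfrac12\left(\begin{smallmatrix}-1+i&-1-i\\-1-i&-1+i\end{smallmatrix}\right)$), and then verify \ref{ax:omega}--\ref{ax:sqy} by direct matrix computation. You simply spell out the $2\times2$ arithmetic and the Euler-decomposition intuition that the paper compresses into the phrase ``straightforward calculation.''
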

\begin{proof}
  Choosing $\omega = \exp(i\pi/4)$ and $\V = \Had (\begin{smallmatrix}
      -1 & 0 \\ 0 & i
  \end{smallmatrix}) \Had$ (with $\Had$ the usual Hadamard gate, \textit{i.e.},
  $\Had = \frac{1}{\sqrt{2}} (\begin{smallmatrix}
      1 & 1 \\ 1 & -1
  \end{smallmatrix})$), it is verified by straightforward calculation 
    that the three equations are satisfied.
\end{proof}

We will consider \Unitary\ to be the standard model of \SPiLang. A
semantics of \SPiLang\ can, more generally, be given in any model
satisfying Def.~\ref{def:model} by interpreting all the ``classical''
morphisms as in $\PiLang$, and additionally interpreting the
additional combinators as shown at the bottom of Fig.~\ref{fig:semantics}.

\begin{definition}[Models]
  We use $\sem{-}$ to denote the interpretation of a \SPiLang\ term
  in an arbitrary model of \SPiLang, and $\stsem{-}$ to denote its
  interpretation in the standard model \Unitary.
\end{definition}

In this way, given \SPiLang\ terms $c_1$ and $c_2$, we can only ever
establish $\sem{c_1} = \sem{c_2}$ if this holds from the axioms of
models of \SPiLang\ alone. On the other hand, we can establish
$\stsem{c_1} = \stsem{c_2}$ by any means sound for unitaries (\textit{e.g.},
matrix computation, circuit rewriting rules, ZX-calculus derivations,
etc.).

\begin{figure}
\begin{tabular}{c p{65mm} p{50mm}}
  \textbf{Name} & \textbf{Signature} & \textbf{Meaning} \\ \hline
  $i$ & \small $I \to I$ & $\omega^2$ \\
  $-1$ & \small $I \to I$ & $\omega^4$ \\
  $-i$ & \small $I \to I$ & $\omega^6$ \\
  $\X$ & \small $I \oplus I \to I \oplus I$ & $\sigma_\oplus$ \\
  $P(s)$ & \small $I \oplus I \to I \oplus I$ (for $s : I \to I$) & $\id \oplus s$
  \\
  $\Z$ & \small $I \oplus I \to I \oplus I$ & $P(-1)$ \\
  $\Sg$ & \small $I \oplus I \to I \oplus I$ & $P(i)$ \\
  $\T$ & \small $I \oplus I \to I \oplus I$ & $P(\omega)$ \\
  $\Had$ & \small $I \oplus I \to I \oplus I$ & $\omega \sprod \X \circ \Sg
  \circ \V \circ \Sg \circ \X$ \\
  $\K$ & \small $I \oplus I \to I \oplus I$ & $\omega^{-1} \sprod \Had$ \\
  $\Midswap$ & \small $(A \oplus B) \oplus (C \oplus D) \to
  (A \oplus C) \oplus (B \oplus D)$ & $ \alpha^{-1}_\oplus \circ (\id \oplus \alpha_\oplus) \circ (\id \oplus (\sigma_\oplus \oplus \id)) \circ (\id \oplus \alpha^{-1}_\oplus) \circ \alpha_\oplus$ \\
  $\Mat$ & \small $(I \oplus I) \otimes A \to A \oplus A$ & $\lambda_\otimes
  \oplus \lambda_\otimes \circ \delta_R$ \\
  $\Ctrl~m$ & $(I \oplus I) \otimes A \to (I \oplus I) \otimes A$ given $m : A \to A$ & $\Mat^{-1} \circ (\id \oplus m) \circ \Mat$ \\
  $\nCtrl~m$ & $(I \oplus I) \otimes A \to (I \oplus I) \otimes A$ given $m : A \to A$ & $\Mat^{-1} \circ (m \oplus \id) \circ \Mat$ \\
  $\Swap$ & \small $(I \oplus I) \otimes (I \oplus I) \to (I \oplus I) \otimes (I \oplus I)$ & $\sigma_\otimes$ \\
  $\CX$ & \small $(I \oplus I) \otimes (I \oplus I) \to (I \oplus I) \otimes (I \oplus I)$ & $\Ctrl~\X$ \\
  $\CZ$ & \small $(I \oplus I) \otimes (I \oplus I) \to (I \oplus I) \otimes (I \oplus I)$ & $\Ctrl~\Z$ \\
  $\CCX$ & \small $(I \oplus I) \otimes ((I \oplus I) \otimes (I \oplus I)) \to (I \oplus I) \otimes ((I \oplus I) \otimes (I \oplus I))$ & $\Ctrl~\CX$
\end{tabular}
\caption{\label{fig:abbrev}Shorthands for some maps in models of \SPiLang.}
\end{figure}

%%%%%%%%%%%%
\subsection{Representing Quantum Gates}
\label{subsec:prelims}

Let $(\cat{C},\otimes,\oplus,O,I)$ be a model of \SPiLang. We
demonstrate that, in any such model, all the familiar quantum gates
can be represented \emph{internally} as shown in
Fig.~\ref{fig:abbrev}. We can combine these gates into circuits using
the tensor product and composition as usual. For example, the circuit
{\small $$
\Qcircuit @C=.5em @R=.7em {
  & \qw      & \ctrl{1} & \qw      & \qw \\
  & \gate{H} & \targ    & \gate{H} & \qw \\
} 
$$}

\noindent is represented by the morphism
$ \id \otimes \Had \circ \Ctrl~\X \circ \id \otimes \Had$ in a model
of \SPiLang. Besides familiar gates, Fig.~\ref{fig:abbrev} also
defines the convenient map \Mat\ which is so named because it can be
seen as a way to construct maps from \emph{matrix
  representations}. This powerful technique was implicitly used in the
definition of \Ctrl-gates in Sec.~\ref{sub:rig}. More generally, we
think of $g$ as an \emph{abstract block matrix representation} of $f$
when $g \circ \Mat = \Mat \circ f$, as this means in turn that
$\Mat^{-1} \circ g \circ \Mat = f$.

It is straightforward to confirm that the internal gates correspond to
their usual definitions in \Unitary, the standard model of
\SPiLang. Here, we focus on properties that are valid in every model.

We begin by establishing some basic facts about \emph{scalars}
(morphisms $I \to I$) in a rig (or, more generally, monoidal)
category.

\begin{proposition}\label{prop:scalars}
  Let $s$ and $t$ be scalars and $f$ and $g$ be morphisms.
  \begin{enumerate}[label={(\roman*)}]
    \item $s \circ t = t \circ s$,
    \item if $s^2 = t$ then $s^{-1} = t^{-1} \circ s$
    \item $s \sprod f = f \sprod s$
    \item $1 \sprod f = f$,
    \item $s \sprod (t \sprod f) = (s \circ t) \sprod f$,
    \item $s \sprod (f \oplus g) = (s \sprod f) \oplus (s \sprod g)$,
    \item $s \sprod (g \circ f) = (s \sprod g) \circ f$,
    \item $s \sprod (g \circ f) = g \circ (s \sprod f)$.
  \end{enumerate}
\end{proposition}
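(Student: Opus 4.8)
The plan is to prove each of the eight items in turn, relying only on the axioms of a symmetric monoidal category (the bifunctoriality of $\otimes$, naturality of the unitors $\lambda_\otimes$, the symmetry $\sigma_\otimes$, and the coherence theorem for monoidal categories) together with the definition of scalar multiplication from Def.~\ref{def:sprod}. None of these facts uses the rig structure or the square-root axioms, so the argument lives entirely in the monoidal fragment. I would state once at the outset that by coherence I may silently insert and remove unitors and associators, and that I will write $s \otimes t$ for the morphism $I \otimes I \to I \otimes I$.

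For the individual items I would proceed as follows. \emph{(i)} Commutativity of scalars is the classical Eckmann--Hilton-style argument: $s \circ t = \lambda_\otimes \circ (s \otimes \id) \circ (\id \otimes t) \circ \lambda_\otimes^{-1}$ using naturality of $\lambda_\otimes$ and the interchange law, and since $(s \otimes \id) \circ (\id \otimes t) = s \otimes t = (\id \otimes t) \circ (s \otimes \id)$ by bifunctoriality, the two orders agree; alternatively one invokes the symmetry $\sigma_\otimes$ on $I \otimes I$. \emph{(ii)} If $s^2 = t$ then $t^{-1} \circ s \circ s = t^{-1} \circ t = \id$ and $s \circ s \circ t^{-1} = t \circ t^{-1} = \id$, using (i) to commute $s$ past $t^{-1}$; hence $t^{-1} \circ s$ is a two-sided inverse of $s$, so it equals $s^{-1}$. \emph{(iii)} $s \sprod f = f \sprod s$ follows from naturality of the unitors connecting $\lambda_\otimes$ and $\rho_\otimes$, or equivalently from $\sigma_\otimes \circ (s \otimes f) = (f \otimes s) \circ \sigma_\otimes$ and the coherence identity relating $\lambda_\otimes$, $\rho_\otimes$, and $\sigma_\otimes$ on $I$. \emph{(iv)} $1 \sprod f = \lambda_\otimes \circ (\id_I \otimes f) \circ \lambda_\otimes^{-1} = f$ is just naturality of $\lambda_\otimes$ (recalling $1 = \id_I$). \emph{(v)} Composing the two defining conjugations, $s \sprod (t \sprod f) = \lambda_\otimes \circ (s \otimes (\lambda_\otimes \circ (t \otimes f) \circ \lambda_\otimes^{-1})) \circ \lambda_\otimes^{-1}$; using bifunctoriality to split $s \otimes (\lambda_\otimes \circ \cdots)$ and the coherence identity $\lambda_\otimes \circ (\id_I \otimes \lambda_\otimes) = \lambda_\otimes \circ \lambda_{\otimes}$ (on $I \otimes (I \otimes X)$, via the triangle/pentagon coherence), this collapses to $\lambda_\otimes \circ ((s \circ t) \otimes f) \circ \lambda_\otimes^{-1} = (s \circ t) \sprod f$.

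\emph{(vi)} Since $\oplus$ and $\otimes$ are both functorial and $\lambda_\otimes$ is a natural transformation, $s \sprod (f \oplus g) = \lambda_\otimes \circ (s \otimes (f \oplus g)) \circ \lambda_\otimes^{-1}$; one rewrites $s \otimes (f \oplus g)$ using the distributor, or more directly notes that $\lambda_\otimes$ restricted to each summand of $I \otimes (X \oplus Y)$ agrees with $\lambda_\otimes$ on $I \otimes X$ and $I \otimes Y$ (this is exactly a naturality square for $\lambda_\otimes$ against the coproduct injections), giving $(s \sprod f) \oplus (s \sprod g)$. \emph{(vii)} and \emph{(viii)}: for $s \sprod (g \circ f)$, insert $\lambda_\otimes^{-1} \circ \lambda_\otimes = \id$ in the middle: $\lambda_\otimes \circ (s \otimes (g \circ f)) \circ \lambda_\otimes^{-1} = \lambda_\otimes \circ (s \otimes g) \circ (\id_I \otimes f) \circ \lambda_\otimes^{-1}$; then naturality of $\lambda_\otimes$ moves $\id_I \otimes f$ out to the right as $f$, yielding $(s \sprod g) \circ f$; pushing the scalar to the left factor instead (using $s \otimes (g\circ f) = (\id_I \otimes g)\circ(s \otimes f)$) and naturality on the left gives $g \circ (s \sprod f)$.

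The only genuine subtlety — and the step I would be most careful about — is \emph{(i)}, the commutativity of scalars, since this is where the interchange law of the monoidal product does real work; everything else is bookkeeping with naturality of the unitors and the coherence theorem, which I would invoke by citation (e.g.\ \cite{heunenvicary:cqt}) rather than spell out. I would also remark that all of these facts are standard for scalars in a monoidal category and are recorded, for instance, in \cite{heunenvicary:cqt}, so the proof amounts to checking that the particular normalisation chosen in Def.~\ref{def:sprod} (left unitors, and $1 = \id_I$) is compatible with those references.
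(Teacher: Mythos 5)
Your treatment of (ii) is essentially identical to the paper's: both observe that $t^{-1} \circ s$ is a two-sided inverse of $s$ (using commutativity of scalars on the one side) and then conclude by uniqueness of inverses. For the remaining items the paper simply cites the literature, and you do the same in substance while also supplying sketches.

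One small but real inaccuracy in those sketches: your opening claim that ``None of these facts uses the rig structure'' is false for item \emph{(vi)}. Equating $\lambda_\otimes^{I \otimes (X \oplus Y)}$ with $\lambda_\otimes^{I \otimes X} \oplus \lambda_\otimes^{I \otimes Y}$ genuinely requires the Laplaza coherence relating $\lambda_\otimes$ to the distributor $\delta_L$ (the first rewriting you mention is the correct one). The alternative argument you offer — a ``naturality square for $\lambda_\otimes$ against the coproduct injections'' — does not apply here, because in a rig groupoid $\oplus$ is merely a symmetric monoidal structure and is not a categorical coproduct; indeed in the intended models (\Unitary, finite sets and bijections) there are no coproduct injections to speak of. So you should drop that aside, keep the distributor-based argument, and amend the framing sentence to note that (vi) is the one place where rig (rather than purely monoidal) coherence enters.
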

\begin{proof}
  All but the second property are shown in the literature, \textit{e.g.},
  \cite{heunenvicary:cqt}. For (ii), we see that $t^{-1}
  \circ s \circ s = t^{-1} \circ t = \id_I$ and $s \circ t^{-1} \circ
  s = t^{-1} \circ s \circ s = t^{-1} \circ t = \id_I$ using commutativity
  of scalars, so $s^{-1} = t^{-1} \circ s$ follows by unicity of inverses.
\end{proof}

The next three lemmas establish basic properties of the internal
gates and scalars; the straightforward but tedious proofs are
collected in Appendix~\ref{ap:lem:gates}.

\begin{lemmarep}\label{lem:gates}
  Let $s$ and $t$ be scalars.
  \begin{enumerate}[label={(\roman*)}]
    \item $-1^2 = \id$ and $i^2 = -1$,
    \item $\X^2 = \id$,
    \item $\Pg(s)^2 = \Pg(s^2)$,
    \item $\Pg(s)^{-1} = \Pg(s^{-1})$,
    \item $\Pg(s) \circ \Pg(t) = \Pg(s \circ t) = \Pg(t) \circ \Pg(s)$,
    \item $\Pg(s) \circ \X \circ \Pg(s) = s \sprod \X$,
    \item $\X \circ \V = \V \circ \X$,
    \item $\CX^2 = \id$,
    \item $\CZ^2 = \id$,
    \item $\CCX^2 = \id$,
    \item $\X \circ \Pg(s) = s \sprod \Pg(s^{-1}) \circ \X$.
  \end{enumerate}
\end{lemmarep}
\begin{proof}
  \label{ap:lem:gates}
We consider each property in turn:
\begin{enumerate}[label={(\roman*)}]
\item $i^2 = (\omega^2)^2 = \omega^4 = -1$ and $(-1)^2 =
  (\omega^4)^2 = \omega^8 = \id$ by \ref{ax:omega}.
\item $\X^2
= \sigma_\oplus \circ \sigma_\oplus = \id$ by laws of rig
categories.
\item $\Pg(s)^2= (\id \oplus s) \circ (\id \oplus s)
= (\id \circ \id) \oplus (s \circ s) = \id \oplus s^2 = \Pg(s^2)$ by
bifunctoriality.
\item $\Pg(s) \circ \Pg(s^{-1}) = (\id \oplus s)
\circ (\id \oplus s^{-1}) = (\id \circ \id) \oplus (s \circ s^{-1}) =
\id \oplus \id = \id$ by bifunctoriality, and similarly $\Pg(s^{-1})
\circ \Pg(s) = (\id \circ \id) \oplus (s^{-1} \circ s) = \id \oplus \id
= \id$, so $\Pg(s^{-1}) = \Pg(s)^{-1}$ by unicity of inverses.
\item $\Pg(s) \circ \Pg(t) = (\id \oplus s) \circ (\id \oplus t) = \id \oplus (s \circ t) = \Pg(s \circ t) = \id \oplus (s \circ t) = \id \oplus (t \circ s) = (\id
\oplus t) \circ (\id \oplus s) = \Pg(t) \circ \Pg(s)$ by bifunctoriality
and commutativity of scalars.
\item $\Pg(s) \circ \X \circ \Pg(s) = (\id \oplus s) \circ \sigma_\oplus \circ (\id \oplus s) = (\id \oplus s) \circ (s \oplus \id) \circ \sigma_\oplus = (s \oplus s) \circ \sigma_\oplus = (s \sprod (\id \oplus \id)) \circ \sigma_\oplus = s \sprod ((\id \oplus \id) \circ \sigma_\oplus) = s \sprod \sigma_\oplus = s \sprod \X$ by naturality of $\sigma_\oplus$, bifunctoriality, and Prop.~\ref{prop:scalars}.
\item $\X \circ \V = (\V \circ
\V) \circ \V = \V \circ (\V \circ \V) = \V \circ X$ by $(E2)$.
\item We compute:
\begin{align*}
  \CX^2
  & = \Mat^{-1} \circ (\id \oplus \X) \circ \Mat \circ \Mat^{-1}
  \circ (\id \oplus \X) \circ \Mat \\
  & = \Mat^{-1} \circ (\id \oplus \X) \circ (\id \oplus \X) \circ
  \Mat \\
  & = \Mat^{-1} \circ ((\id \circ \id) \oplus (\X \circ \X)) \circ
  \Mat \\
  & = \Mat^{-1} \circ (\id \oplus \id) \circ \Mat \\
  & = \Mat^{-1} \circ \Mat \\
    & = \id
\end{align*}
\item By analogous argument.
\item By analogous argument.
\item We compute:
\begin{align*}
  \X \circ \Pg(s)
  & = \sigma_\oplus \circ (\id \oplus s) \\
  & = (s \oplus \id) \circ \sigma_\oplus \\
  & = ((s \circ \id) \oplus (s \circ s^{-1})) \circ \sigma_\oplus \\
  & = s \sprod (\id \oplus s^{-1}) \circ \sigma_\oplus \\
  & = s \sprod \Pg(s^{-1}) \circ \X
\end{align*}
\end{enumerate}
\end{proof}

\begin{lemmarep}\label{lem:mat}
  Let $f : X \to Y$, $g : X \to X$, and $h : X \to X$ be maps, and $s$
  and $t$ be scalars. Then:
  \begin{enumerate}[label={(\roman*)}]    
    \item $\Mat \circ (\id_{I \oplus I} \otimes f) = (f \oplus f) \circ \Mat$,
    \item $\Mat \circ \Swap = \Midswap \circ \Mat$,
    \item $\Swap \circ \Mat^{-1} = \Mat^{-1} \circ \Midswap$,
    \item $\Mat \circ (f \otimes \id_{I \oplus I}) = \Midswap \circ (f
      \oplus f) \circ \Midswap \circ \Mat$,
    \item $\Swap \circ \Ctrl~\Pg(s) \circ \Swap = \Ctrl~\Pg(s)$,
    \item $\Ctrl~\Pg(s) \circ \Ctrl~\Pg(t) = \Ctrl~\Pg(t) \circ \Ctrl~\Pg(s)$,
    \item $\Ctrl~\Pg(s) \circ (\id_{I \oplus I} \otimes \Pg(t)) = (\id_{I
      \oplus I} \otimes \Pg(t)) \circ \Ctrl~\Pg(s)$,
    \item $\Mat \circ (\X \otimes \id_{I \oplus I}) = \sigma_\oplus \circ \Mat$,
    \item $\Mat \circ (\Pg(s) \otimes \id_{I \oplus I}) = (\id_{I
      \oplus I} \oplus (s \sprod \id)) \circ \Mat$.
    \item $\Ctrl~g \circ \Ctrl~h = \Ctrl(g \circ h)$
  \end{enumerate}
\end{lemmarep}
\begin{proof}
  \label{ap:lem:mat}
  Below, the word Laplaza followed by a numeral refers to the
  coherence conditions of rig categories, first described in
  \cite{laplaza:distributivity}.
  
We consider each property in turn:  

(i) follows by commutativity of the diagram

\bigskip
  % https://q.uiver.app/#q=WzAsNixbMCwwLCIoSSBcXG9wbHVzIEkpIFxcb3RpbWVzIFgiXSxbMSwxLCIoSSBcXG90aW1lcyBYKSBcXG9wbHVzIChJIFxcb3RpbWVzIFgpIl0sWzIsMCwiWCBcXG9wbHVzIFgiXSxbMiwzLCJZIFxcb3BsdXMgWSJdLFswLDMsIihJIFxcb3BsdXMgSSkgXFxvdGltZXMgWSJdLFsxLDIsIihJIFxcb3RpbWVzIFkpIFxcb3BsdXMgKEkgXFxvdGltZXMgWSkiXSxbMCwyLCJcXE1hdCJdLFswLDEsIlxcZGVsdGFfUiIsMl0sWzEsMiwiXFxsYW1iZGFfXFxvdGltZXMgXFxvcGx1cyBcXGxhbWJkYV9cXG90aW1lcyIsMl0sWzIsMywiZiBcXG9wbHVzIGYiXSxbMCw0LCJcXGlkIFxcb3RpbWVzIGYiLDJdLFs0LDMsIlxcTWF0IiwyXSxbNCw1LCJcXGRlbHRhX1IiXSxbNSwzLCJcXGxhbWJkYV9cXG90aW1lcyBcXG9wbHVzIFxcbGFtYmRhX1xcb3RpbWVzIl0sWzEsNSwiKFxcaWQgXFxvdGltZXMgZikgXFxvcGx1cyAoXFxpZCBcXG90aW1lcyBmKSJdXQ==
\[\begin{tikzcd}
	{(I \oplus I) \otimes X} && {X \oplus X} \\
	& {(I \otimes X) \oplus (I \otimes X)} \\
	& {(I \otimes Y) \oplus (I \otimes Y)} \\
	{(I \oplus I) \otimes Y} && {Y \oplus Y}
	\arrow["\Mat", from=1-1, to=1-3]
	\arrow["{\delta_R}"', from=1-1, to=2-2]
	\arrow["{\lambda_\otimes \oplus \lambda_\otimes}"', from=2-2, to=1-3]
	\arrow["{f \oplus f}", from=1-3, to=4-3]
	\arrow["{\id \otimes f}"', from=1-1, to=4-1]
	\arrow["\Mat"', from=4-1, to=4-3]
	\arrow["{\delta_R}", from=4-1, to=3-2]
	\arrow["{\lambda_\otimes \oplus \lambda_\otimes}", from=3-2, to=4-3]
	\arrow["{(\id \otimes f) \oplus (\id \otimes f)}", from=2-2, to=3-2]
\end{tikzcd}\]

\bigskip
where the left and right cells commute by naturality, and the top and bottom cells by definition. 

(ii) then follows by chasing

\bigskip
\adjustbox{scale=0.7,center}{\begin{tikzcd}
	{(I \oplus I) \otimes (I \oplus I)} & {(I \oplus I) \otimes (I \oplus I)} && {(I \oplus I) \oplus (I \oplus I)} \\
	&& { (I \otimes (I \oplus I)) \oplus (I \otimes (I \oplus I))} \\
	& { (I \otimes (I \oplus I)) \oplus (I \otimes (I \oplus I))} & {((I \otimes I) \oplus (I \otimes I)) \oplus ((I \otimes I) \oplus (I \otimes I))} \\
	&& {((I \otimes I) \oplus (I \otimes I)) \oplus ((I \otimes I) \oplus (I \otimes I))} \\
	{(I \oplus I) \oplus (I \oplus I)} &&& {(I \oplus I) \oplus (I \oplus I)}
	\arrow[""{name=0, anchor=center, inner sep=0}, "\Swap", from=1-1, to=1-2]
	\arrow["{\delta_R}", from=1-2, to=2-3]
	\arrow[""{name=1, anchor=center, inner sep=0}, "{\delta_L \oplus \delta_L}"', from=2-3, to=3-3]
	\arrow["{\lambda_\otimes \oplus \lambda_\otimes}", from=2-3, to=1-4]
	\arrow[""{name=2, anchor=center, inner sep=0}, "{(\lambda_\otimes \oplus \lambda_\otimes) \oplus ((\lambda_\otimes \oplus \lambda_\otimes)}"', from=3-3, to=1-4]
	\arrow[""{name=3, anchor=center, inner sep=0}, "\Midswap"', from=5-4, to=1-4]
	\arrow[""{name=4, anchor=center, inner sep=0}, "\Midswap"', from=4-3, to=3-3]
	\arrow[""{name=5, anchor=center, inner sep=0}, "{(\lambda_\otimes \oplus \lambda_\otimes) \oplus ((\lambda_\otimes \oplus \lambda_\otimes)}"', from=4-3, to=5-4]
	\arrow["{\delta_R}"', from=1-1, to=3-2]
	\arrow[""{name=6, anchor=center, inner sep=0}, "{\delta_L \oplus \delta_L}"', from=3-2, to=4-3]
	\arrow[""{name=7, anchor=center, inner sep=0}, "{\lambda_\otimes \oplus \lambda_\otimes}"', from=3-2, to=5-1]
	\arrow["{=}"', from=5-1, to=5-4]
	\arrow["\Mat"', from=1-1, to=5-1]
	\arrow["\Mat", from=1-2, to=1-4]
	\arrow["{(i)}"', draw=none, from=6, to=0]
	\arrow["{(ii)}"{description}, draw=none, from=7, to=5]
	\arrow["{(iv)}"{description}, draw=none, from=4, to=3]
	\arrow["{(iii)}"{description}, draw=none, from=1, to=2]
\end{tikzcd}}

\bigskip 
where $(i)$ commutes by coherence (Laplaza (II) + (IX)), $(ii)$ and
$(iii)$ by coherence (Laplaza (XXIII)), and $(iv)$ by naturality. But then

(iii) follows by
\begin{align*}
  \Swap \circ \Mat^{-1}
  & = \Swap^{-1} \circ \Mat^{-1} & (\Swap~\text{involutive})\\
  & = (\Mat \circ \Swap)^{-1} & ((-)^{-1}\text{ contravariant functorial})\\
  & = (\Midswap \circ \Mat)^{-1} & \text{Lem.~\ref{lem:mat} (2)}\\
  & = \Mat^{-1} \circ \Midswap^{-1} & ((-)^{-1}\text{ contravariant functorial})\\
  & = \Mat^{-1} \circ \Midswap & (\Midswap\text{ involutive})
\end{align*}

(iv) by
\begin{align*}
  \Mat \circ (f \otimes \id) & = \Mat \circ \Swap \circ (\id \otimes f) \circ \Swap & (\text{naturality }\Swap) \\
  & = \Midswap \circ \Mat \circ (\id \otimes f) \circ \Swap & \text{Lem.~\ref{lem:mat} (2)} \\
  & = \Midswap \circ (f \oplus f) \circ \Mat \circ \Swap & \text{Lem.~\ref{lem:mat} (1)} \\
  & = \Midswap \circ (f \oplus f) \circ \Midswap \circ \Mat & \text{Lem.~\ref{lem:mat} (2)}
\end{align*}

(v) by
\begin{align*}
  & \Swap \circ \Ctrl~\Pg(s) \circ \Swap \\
  & \quad = \Swap \circ \Mat^{-1} \circ (\id \oplus \Pg(s)) \circ \Mat \circ
  \Swap & (\text{def. }\Ctrl) \\
  & \quad = \Mat^{-1} \circ \Midswap \circ (\id \oplus \Pg(s)) \circ
  \Midswap \circ \Mat & (\text{Lem.~\ref{lem:mat} (2)+(3)}) \\
  & \quad = \Mat^{-1} \circ \Midswap \circ
  ((\id \oplus \id) \oplus (\id \oplus s)) \circ
  \Midswap \circ \Mat & (\text{def. }\Pg(s)) \\
  & \quad = \Mat^{-1} \circ \Midswap \circ \Midswap \circ
  ((\id \oplus \id) \oplus (\id \oplus s))
  \circ \Mat & (\text{naturality }\Midswap) \\
  & \quad = \Mat^{-1} \circ
  ((\id \oplus \id) \oplus (\id \oplus s))
  \circ \Mat & (\Midswap\text{ involutive}) \\
  & \quad = \Mat^{-1} \circ
  (\id \oplus \Pg(s))
  \circ \Mat & (\text{def. }\Pg(s)) \\
  & \quad = \Ctrl~\Pg(s) & (\text{def. }\Ctrl)
\end{align*}

(vi) by
\begin{align*}
  \Ctrl~\Pg(s) \circ \Ctrl~\Pg(t)
  & = \Mat^{-1} \circ (\id \oplus \Pg(s)) \circ \Mat \circ \Mat^{-1}
  \circ (\id \oplus \Pg(t)) \circ \Mat & (\text{def.}~\Ctrl)\\
  & = \Mat^{-1} \circ (\id \oplus \Pg(s))
  \circ (\id \oplus \Pg(t)) \circ \Mat & (\Mat~\text{invertible})\\
  & = \Mat^{-1} \circ (\id \oplus (\Pg(s) \circ \Pg(t))) \circ \Mat &
  ({\oplus}~\text{bifunctoriality})\\
  & = \Mat^{-1} \circ (\id \oplus (\Pg(t) \circ \Pg(s))) \circ \Mat &
  (\text{Lem.}~\ref{lem:gates}(v))\\
  & = \Mat^{-1} \circ (\id \oplus \Pg(t))
  \circ (\id \oplus \Pg(s)) \circ \Mat & ({\oplus}~\text{bifunctoriality})\\
  & = \Mat^{-1} \circ (\id \oplus \Pg(t)) \circ \Mat \circ \Mat^{-1}
  \circ (\id \oplus \Pg(s)) \circ \Mat & (\Mat~\text{invertible})\\
  & = \Ctrl~\Pg(t) \circ \Ctrl~\Pg(s) & (\text{def.}~\Ctrl)
\end{align*}

(vii) by
\begin{align*}
  \Ctrl~\Pg(s) \circ (\id \otimes \Pg(t))
  & = \Mat^{-1} \circ (\id \oplus \Pg(s)) \circ \Mat \circ (\id
  \otimes \Pg(t)) & (\text{def.}~\Ctrl)\\
  & = \Mat^{-1} \circ (\id \oplus \Pg(s)) \circ (\Pg(t) \oplus \Pg(t))
  \circ \Mat & (\text{Lem.}~\ref{lem:mat}(1))\\
  & = \Mat^{-1} \circ ((\id \circ \Pg(t)) \oplus (\Pg(s) \circ
  \Pg(t))) \circ \Mat & ({\oplus}~\text{bifunctoriality})\\
  & = \Mat^{-1} \circ ((\Pg(t) \circ \id) \oplus (\Pg(t) \circ
  \Pg(s))) \circ \Mat & (\text{Lem.}~\ref{lem:gates}(v))\\
  & = \Mat^{-1} \circ (\Pg(t) \oplus \Pg(t)) \circ (\id \oplus \Pg(s))
  \circ \Mat & ({\oplus}~\text{bifunctoriality})\\
  & = ((\Pg(t)^{-1} \oplus \Pg(t)^{-1}) \circ \Mat)^{-1} \circ (\id \oplus \Pg(s))
  \circ \Mat & ((-)^{-1}~\text{contrav. funct.})\\
  & = (\Mat \circ (\id \otimes \Pg(t)^{-1}))^{-1} \circ (\id \oplus \Pg(s))
  \circ \Mat & (\text{Lem.}~\ref{lem:mat}(1)) \\
  & = (\id \otimes \Pg(t)) \circ \Mat^{-1} \circ (\id \oplus \Pg(s))
  \circ \Mat & ((-)^{-1}~\text{contrav. funct.})\\
  & = (\id \otimes \Pg(t)) \circ \Ctrl~\Pg(s) & (\text{def.}~\Ctrl)
\end{align*}

(viii) by commutativity of the diagram

% https://q.uiver.app/#q=WzAsNixbMCwwLCIoSSBcXG9wbHVzIEkpIFxcb3RpbWVzIChJIFxcb3BsdXMgSSkiXSxbMSwyLCIoSSBcXG90aW1lcyAoSSBcXG9wbHVzIEkpKSBcXG9wbHVzIChJIFxcb3RpbWVzIChJIFxcb3BsdXMgSSkpIl0sWzAsNCwiKEkgXFxvcGx1cyBJKSBcXG9wbHVzIChJIFxcb3BsdXMgSSkiXSxbMyw0LCIoSSBcXG9wbHVzIEkpIFxcb3BsdXMgKEkgXFxvcGx1cyBJKSJdLFsyLDIsIihJIFxcb3RpbWVzIChJIFxcb3BsdXMgSSkpIFxcb3BsdXMgKEkgXFxvdGltZXMgKEkgXFxvcGx1cyBJKSkiXSxbMywwLCIoSSBcXG9wbHVzIEkpIFxcb3RpbWVzIChJIFxcb3BsdXMgSSkiXSxbMCwxLCJcXGRlbHRhX1IiXSxbMSwyLCJcXGxhbWJkYV9cXG90aW1lcyBcXG9wbHVzIFxcbGFtYmRhX1xcb3RpbWVzIl0sWzAsMiwiXFxNYXQiLDJdLFsyLDMsIlxcc2lnbWFfXFxvcGx1cyIsMl0sWzUsNCwiXFxkZWx0YV9SIiwyXSxbNCwzLCJcXGxhbWJkYV9cXG90aW1lcyBcXG9wbHVzIFxcbGFtYmRhX1xcb3RpbWVzIiwyXSxbNSwzLCJcXE1hdCJdLFsxLDQsIlxcc2lnbWFfXFxvcGx1cyJdLFswLDUsIlxcWCBcXG90aW1lcyBcXGlkIl0sWzE0LDEzLCIoaSkiLDAseyJzaG9ydGVuIjp7InNvdXJjZSI6MjAsInRhcmdldCI6MjB9LCJzdHlsZSI6eyJib2R5Ijp7Im5hbWUiOiJub25lIn0sImhlYWQiOnsibmFtZSI6Im5vbmUifX19XSxbMTMsOSwiKGlpKSIsMCx7InNob3J0ZW4iOnsic291cmNlIjoyMCwidGFyZ2V0IjoyMH0sInN0eWxlIjp7ImJvZHkiOnsibmFtZSI6Im5vbmUifSwiaGVhZCI6eyJuYW1lIjoibm9uZSJ9fX1dLFs2LDgsIihpaWkpIiwwLHsic2hvcnRlbiI6eyJzb3VyY2UiOjIwLCJ0YXJnZXQiOjIwfSwic3R5bGUiOnsiYm9keSI6eyJuYW1lIjoibm9uZSJ9LCJoZWFkIjp7Im5hbWUiOiJub25lIn19fV0sWzEwLDEyLCIoaWlpKSIsMix7InNob3J0ZW4iOnsic291cmNlIjoyMCwidGFyZ2V0IjoyMH0sInN0eWxlIjp7ImJvZHkiOnsibmFtZSI6Im5vbmUifSwiaGVhZCI6eyJuYW1lIjoibm9uZSJ9fX1dXQ==
\adjustbox{scale=0.8,center}{\begin{tikzcd}
	{(I \oplus I) \otimes (I \oplus I)} &&& {(I \oplus I) \otimes (I \oplus I)} \\
	\\
	& {(I \otimes (I \oplus I)) \oplus (I \otimes (I \oplus I))} & {(I \otimes (I \oplus I)) \oplus (I \otimes (I \oplus I))} \\
	\\
	{(I \oplus I) \oplus (I \oplus I)} &&& {(I \oplus I) \oplus (I \oplus I)}
	\arrow[""{name=0, anchor=center, inner sep=0}, "{\delta_R}", from=1-1, to=3-2]
	\arrow["{\lambda_\otimes \oplus \lambda_\otimes}", from=3-2, to=5-1]
	\arrow[""{name=1, anchor=center, inner sep=0}, "\Mat"', from=1-1, to=5-1]
	\arrow[""{name=2, anchor=center, inner sep=0}, "{\sigma_\oplus}"', from=5-1, to=5-4]
	\arrow[""{name=3, anchor=center, inner sep=0}, "{\delta_R}"', from=1-4, to=3-3]
	\arrow["{\lambda_\otimes \oplus \lambda_\otimes}"', from=3-3, to=5-4]
	\arrow[""{name=4, anchor=center, inner sep=0}, "\Mat", from=1-4, to=5-4]
	\arrow[""{name=5, anchor=center, inner sep=0}, "{\sigma_\oplus}", from=3-2, to=3-3]
	\arrow[""{name=6, anchor=center, inner sep=0}, "{\X \otimes \id}", from=1-1, to=1-4]
	\arrow["{(i)}", draw=none, from=6, to=5]
	\arrow["{(ii)}", draw=none, from=5, to=2]
	\arrow["{(iii)}", draw=none, from=0, to=1]
	\arrow["{(iii)}"', draw=none, from=3, to=4]
\end{tikzcd}}

\bigskip
where $(i)$ commutes by Laplaza (I)+(II) (recalling that $\X$ is just
defined to be $\sigma_\oplus$ on $I \oplus I$), $(ii)$ by naturality
of $\sigma_\oplus$, and $(iii)$ by definition of $\Mat$.

(ix) follows by
\begin{align*}
  \Mat \circ (\Pg(s) \otimes \id)
  & = \Midswap \circ (\Pg(s) \oplus \Pg(s)) \circ \Midswap \circ \Mat
  & (\text{Lem.~\ref{lem:mat} (4)}) \\
  & = \Midswap \circ ((\id \oplus s) \oplus (\id \oplus s)) \circ
  \Midswap \circ \Mat & (\text{def.}~\Pg(s)) \\
  & = \Midswap \circ \Midswap \circ ((\id \oplus \id) \oplus (s \oplus
  s)) \circ \Mat & (\text{naturality}~\Midswap) \\
  & = ((\id \oplus \id) \oplus (s \oplus s)) \circ \Mat &
  (\Midswap~\text{involutive}) \\
  & = ((\id \oplus \id) \oplus ((s \sprod \id) \oplus (s \sprod \id)) \circ \Mat &
  (\text{Prop.~\ref{prop:scalars}}) \\
  & = ((\id \oplus \id) \oplus (s \sprod (\id \oplus \id)) \circ \Mat &
  (\text{Prop.~\ref{prop:scalars}}) \\
  & = (\id \oplus (s \sprod \id)) \circ \Mat &
  (\text{bifunctoriality}~{\oplus}) \\
\end{align*}

(x) follows by
\begin{align*}
  \Ctrl~g \circ \Ctrl~h
  & = \Mat^{-1} \circ \id \oplus g \circ \Mat \circ \Mat^{-1} \circ
  \id \oplus h \circ \Mat & (\text{def.}~\Ctrl) \\
  & = \Mat^{-1} \circ \id \oplus g \circ \id \oplus h \circ \Mat &
  (\Mat~\text{invertible}) \\
  & = \Mat^{-1} \circ \id \oplus (g \circ h) \circ \Mat &
  (\text{bifunctoriality}~\oplus) \\
  & = \Mat^{-1} \circ \id \oplus (g \circ h) \circ \Mat &
  (\text{def.}~\Ctrl)
\end{align*}
\end{proof}

\begin{lemmarep}\label{lem:had}
  Any model of \SPiLang\ satisfies $\Had \circ \X \circ \Had = \Z$
  and $\Had \circ \Z \circ \Had = \X$. 
\end{lemmarep}
\begin{proof}
\label{ap:lem:had}
\begin{align*}
  \Had \circ \X \circ \Had
  & = (\omega \sprod \X \circ \Sg
  \circ \V \circ \Sg \circ \X) \circ \X \circ (\omega \sprod \X \circ \Sg
  \circ \V \circ \Sg \circ \X) & (\text{def.}~\Had) \\
  & = \omega^2 \sprod (\X \circ \Sg
  \circ \V \circ \Sg \circ \X \circ \X \circ \X \circ \Sg
  \circ \V \circ \Sg \circ \X) & (\text{Prop.}~\ref{prop:scalars}) \\
  & = i \sprod (\X \circ \Sg
  \circ \V \circ \Sg \circ \X \circ \Sg
  \circ \V \circ \Sg \circ \X) & (X^2 = \id, \omega^2 = i) \\
  & = i \sprod (\X \circ \Sg
  \circ \V \circ (i \sprod \X)
  \circ \V \circ \Sg \circ \X) & (\text{Prop.}~\ref{prop:scalars}) \\
  & = i^2 \sprod (\X \circ \Sg \circ \V \circ \X \circ \V \circ \Sg
  \circ \X) & (\text{Prop.}~\ref{prop:scalars}) \\
  & = -1 \sprod (\X \circ \Sg \circ \X \circ \V \circ \V \circ \Sg
  \circ \X) & (\text{Lem.}~\ref{lem:gates}, i^2 = -1) \\
  & = -1 \sprod (\X \circ \Sg \circ \X \circ \X \circ \Sg
  \circ \X) & (\V^2 = \X) \\
  & = -1 \sprod (\X \circ \Sg \circ \Sg \circ \X) & (\X^2 = \id) \\
  & = -1 \sprod (\X \circ \Z \circ \X) & (\Sg^2 = \Z) \\
  & = -1 \sprod ((-1 \sprod \Z \circ \X) \circ \X) &
  (\text{Lem.}~\ref{lem:gates}) \\
  & = (-1)^2 \sprod \Z \circ \X \circ \X &
  (\text{Prop.}~\ref{prop:scalars}) \\
  & = \Z & ((-1)^2 = \id, \X^2 = \id)
\end{align*}
\begin{align*}
  \Had \circ \Z \circ \Had
  & = (\omega \sprod \X \circ \Sg \circ \V \circ \Sg \circ \X) \circ
  \Z \circ (\omega \sprod \X \circ \Sg \circ \V \circ \Sg \circ \X) &
  (\text{def.}~\Had) \\
  & = \omega^2 \sprod (\X \circ \Sg \circ \V \circ \Sg \circ \X \circ
  \Z \circ \X \circ \Sg \circ \V \circ \Sg \circ \X) &
  (\text{Prop.}~\ref{prop:scalars}) \\
  & = i \sprod (\X \circ \Sg \circ \V \circ \Sg \circ (-1 \sprod \Z
  \circ \X) \circ \X \circ \Sg \circ \V \circ \Sg \circ \X) &
  (\text{Lem.}~\ref{lem:gates}, \omega^2 = i) \\
  & = -i \sprod (\X \circ \Sg \circ \V \circ \Sg \circ \Z \circ \X
  \circ \X \circ \Sg \circ \V \circ \Sg \circ \X) &
  (\text{Prop.}~\ref{prop:scalars}) \\
  & = -i \sprod (\X \circ \Sg \circ \V \circ \Sg \circ \Z \circ \Sg
  \circ \V \circ \Sg \circ \X) & (\X^2 = \id) \\
  & = -i \sprod (\X \circ \Sg \circ \V \circ \Z \circ \Sg \circ \Sg
  \circ \V \circ \Sg \circ \X) & (\text{Lem.}~\ref{lem:gates}) \\
  & = -i \sprod (\X \circ \Sg \circ \V \circ \Z \circ \Z
  \circ \V \circ \Sg \circ \X) & (\Sg^2 = \Z) \\
  & = -i \sprod (\X \circ \Sg \circ \V \circ \V \circ \Sg \circ \X) &
  (\Z^2 = \id) \\
  & = -i \sprod (\X \circ \Sg \circ \X \circ \Sg \circ \X) &
  (\V^2 = \X) \\
  & = -i \sprod (\X \circ (i \sprod \X) \circ \X) &
  (\text{Lem.}~\ref{lem:gates}) \\
  & = -i \circ i \sprod (\X \circ \X \circ \X) &
  (\text{Prop.}~\ref{prop:scalars}) \\
  & = \X & (-i \circ i = \id, \X^2 = \id)
\end{align*}
\end{proof}

%%%%%%%%%%%%%%%%%%%%%%%%%%%%%%%%%%%%%%%%%%%%%%%%%%%%%%%%%%%%%%%%%%%%%%%%
\section{Soundness and Completeness}
\label{sec:sc}

We present our main technical development:
\SPiLang\ is \emph{equationally sound and
  complete} for a variety of gate sets, including the computationally
universal \emph{Gaussian
  Clifford+T}~\cite{Amy2020numbertheoretic}. This is expressed
in terms of a series of \emph{full abstraction} results, showing that
fragments of \SPiLang\ are fully abstract for certain classes of
unitaries.

To our knowledge, this is the first presentation of a computationally
universal quantum programming language with a sound and complete
equational theory.

% Show universality in the standard model.

%%%%%%%%%%%%
\subsection{$\leq 2$-qubit Clifford Circuits}
\label{subsec:2clifford}
\begin{figure}
  \tiny\[
  \begin{array}{rcll rcll}
    \text{\normalsize $\omega \cdot A$} &=& \text{\normalsize $A \cdot \omega$}
    & \text{\normalsize (A1)} &
    \text{\normalsize $A_0 B_1$} &=& \text{\normalsize $A_1 B_0$}
    & \text{\normalsize (A2)} \\
    \text{\normalsize $\omega^8$} &=& \text{\normalsize $\id$}
    & \text{\normalsize (A3)} &
    \text{\normalsize $\Had^2$} &=& \text{\normalsize $\id$}
    & \text{\normalsize (A4)} \\
    \text{\normalsize $\Sg^4$} &=& \text{\normalsize $\id$}
    & \text{\normalsize (A5)} &
    \text{\normalsize $\Sg \Had \Sg \Had \Sg \Had$} &=&
    \text{\normalsize $\omega \cdot \id$}
    & \text{\normalsize (A6)}\\ \\
      \Qcircuit @C=.7em @R=1.2em {
        & \ctrl{1}    & \ctrl{1}    & \qw \\
        & \control\qw & \control\qw & \qw \\
      } &\raisebox{-1.5mm}{=}&
      \Qcircuit @C=.7em @R=1.2em {
        & \qw & \qw & \qw \\
        & \qw & \qw & \qw \\
      } & \text{\raisebox{-2mm}{\normalsize (A7)}} &
      \Qcircuit @C=.7em @R=1.2em {
        & \gate{S} & \ctrl{1}    & \qw \\
        & \qw      & \control\qw & \qw \\
      } &\raisebox{-2mm}{=}&
      \Qcircuit @C=.7em @R=1.2em {
        & \ctrl{1}    & \gate{S} & \qw \\
        & \control\qw & \qw      & \qw \\
      } & \text{\raisebox{-2mm}{\normalsize (A8)}} \\ \\
      \Qcircuit @C=.7em @R=1.2em {
        & \qw      & \ctrl{1}    & \qw \\
        & \gate{S} & \control\qw & \qw \\
      } &\raisebox{-2mm}{=}&
      \Qcircuit @C=.7em @R=1.2em {
        & \ctrl{1}    & \qw     & \qw \\
        & \control\qw & \gate{S}& \qw \\
      } & \text{\raisebox{-2mm}{\normalsize (A9)}} &
      \Qcircuit @C=.7em @R=1.2em {
        & \gate{H} & \gate{S} & \gate{S} & \gate{H} & \control\qw & \qw \\
        & \qw      & \qw      & \qw      & \qw      & \ctrl{-1}   & \qw
      } &\raisebox{-2mm}{=}&
      \Qcircuit @C=.7em @R=0.7em {
        & \control\qw  & \gate{H} & \gate{S} & \gate{S} & \gate{H} & \qw \\
        & \ctrl{-1}    & \gate{S} & \gate{S} & \qw      & \qw      & \qw
      } & \text{\raisebox{-2mm}{\normalsize (A10)}} \\ \\
      \Qcircuit @C=.7em @R=1.2em {
        & \qw      & \qw      & \qw      & \qw      & \ctrl{1}    & \qw \\
        & \gate{H} & \gate{S} & \gate{S} & \gate{H} & \control\qw & \qw
      } &\raisebox{-2mm}{=}&
      \Qcircuit @C=.7em @R=0.7em {
        & \ctrl{1}    & \gate{S} & \gate{S} & \qw      & \qw      & \qw \\
        & \control\qw & \gate{H} & \gate{S} & \gate{S} & \gate{H} & \qw
      } & \text{\raisebox{-2mm}{\normalsize (A11)}} &
      \Qcircuit @C=.7em @R=1.2em {
        & \ctrl{1}    & \qw      & \ctrl{1}    & \qw \\
        & \control\qw & \gate{H} & \control\qw & \qw
      } &\raisebox{-2mm}{=}&
      \Qcircuit @C=.7em @R=.7em {
        & \qw      & \qw      & \control\qw & \gate{S} & \qw      & \qw
        & \qw \\
        & \gate{S} & \gate{H} & \ctrl{-1}   & \gate{S} & \gate{H} & \gate{S}
        & \qw
      } \text{\normalsize \raisebox{-2mm}{$\phantom{l} \cdot \omega^{-1}$}}
      & \text{\raisebox{-2mm}{\normalsize (A12)}} \\ \\ & & & &
      \Qcircuit @C=.7em @R=1.2em {
        & \ctrl{1}    & \gate{H} & \ctrl{1}    & \qw \\
        & \control\qw & \qw      & \control\qw & \qw
      } &\raisebox{-2mm}{=}&
      \Qcircuit @C=.7em @R=.7em {
        & \gate{S} & \gate{H} & \ctrl{1}    & \gate{S} & \gate{H} & \gate{S}
        & \qw \\
        & \qw      & \qw      & \control\qw & \gate{S} & \qw      & \qw
        & \qw
      } \text{\normalsize \raisebox{-2mm}{$\phantom{l} \cdot \omega^{-1}$}}
      & \text{\raisebox{-2mm}{\normalsize (A13)}}
  \end{array}
  \]
  \caption{A sound and complete equational theory of $\le 2$-qubit
    Clifford circuits due to \citet{selinger:clifford}. What we call
    (A3)--(A13) refer to relations (C1)--(C11) in the original paper
    by \citet{selinger:clifford} (equations (A1) and (A2) become
    relevant once we consider $\leq 2$-qubit Clifford+T circuits~
    \cite{bianselinger:cliffordt}). Note that we swap the order of
    (A12) and (A13) compared to the original presentation by
    \citet{selinger:clifford}.}
  \label{fig:2qubitclifford}
\end{figure}
We begin by proving that models of \SPiLang\ satisfy the sound and
complete equational theory of $\leq 2$-qubit Clifford circuits shown
in Fig.~\ref{fig:2qubitclifford}.
Clifford circuits are those which
can be formed using the gates $\{\CZ,\Sg,\Had\}$ and the scalar
$\omega = e^{i \pi/4}$.
\begin{definition}
In a model of \SPiLang, a \emph{representation of a Clifford circuit}
is any morphism which can be written in terms of morphisms from the
sets $\{\omega, \Sg, \Had, \CZ\}$ and $\{\alpha_\otimes,
\alpha_\otimes^{-1}, \lambda_\otimes, \lambda_\otimes^{-1},
\rho_\otimes, \rho_\otimes^{-1}, \sigma_\otimes\}$, composed
arbitrarily in parallel (using $\otimes$) and in sequence (using
$\circ$). A representation of a $\le 2$-qubit Clifford circuit is one
with signature $I \oplus I \to I \oplus I$ or $(I \oplus I) \otimes (I
\oplus I) \to (I \oplus I) \otimes (I \oplus I)$.
\end{definition}
Note that this definition permits both scalar multiplication by powers
of $\omega$ (since this is formulated using the coherence
isomorphisms) and use of the $\Swap$ gate (since this is precisely
$\sigma_\otimes$).  This result relies on the generators and relations
for Clifford circuits due to \citet{selinger:clifford}, which we prove
are all satisfied in any model of \SPiLang:

\begin{enumerate}[label={(A\arabic*)}]
\item $\omega \sprod f = f \sprod \omega$ for all $f$ follows by
  Prop.~\ref{prop:scalars} (iii). \label{eq:a1}
\item That $(f \otimes \id) \circ (\id \otimes g) =
  (\id \otimes g) \circ (f \otimes \id)$ follows by bifunctoriality
  of $\otimes$. \label{eq:a2}
\item $\omega^8 = \id$ follows immediately by \ref{ax:omega}. \label{eq:a3}
\item We derive
\begin{align*}
  \Had \circ \Had
  & = (\omega \sprod \X \circ \Sg \circ \V \circ \Sg \circ \X) \circ
  (\omega \sprod \X \circ \Sg \circ \V \circ \Sg \circ \X) & (\text{def. }\Had) \\
  & = \omega^2 \sprod \X \circ \Sg \circ \V \circ \Sg \circ \X \circ
  \X \circ \Sg \circ \V \circ \Sg \circ \X & (\text{Prop. \ref{prop:scalars}})\\
  & = \omega^2 \sprod \X \circ \Sg \circ \V \circ \Sg \circ \Sg \circ
  \V \circ \Sg \circ \X & (\X^2 = \id)\\
  & = \omega^2 \sprod \X \circ (\omega^{-2} \sprod \V \circ \Sg \circ
  \V) \circ (\omega^{-2} \sprod \V \circ \Sg \circ \V) \circ \X & \ref{ax:sqy} \\
  & = \omega^{-2} \sprod \X \circ \V \circ \Sg \circ \V \circ \V \circ
  \Sg \circ \V \circ \X & (\text{Prop. \ref{prop:scalars}}) \\
  & = \omega^{-2} \sprod \X \circ \V \circ \Sg \circ \X \circ \Sg
  \circ \V \circ \X & \ref{ax:v}\\
  & = \omega^{-2} \sprod \X \circ \V \circ (\omega^2 \sprod \X)
  \circ \V \circ \X & (\text{Lem.~\ref{lem:gates} (vi)}) \\
  & = \X \circ \V \circ \X \circ \V \circ \X & (\text{Prop. \ref{prop:scalars}})\\
  & = \X \circ \X \circ \V \circ \V \circ \X & (\text{Lem.~\ref{lem:gates} (vii)}) \\
  & = \X \circ \X \circ \X \circ \X & \ref{ax:v} \\
  & = \id & (\X^2 = \id)
\end{align*} \label{eq:a4}
\item $\Sg^4 = (\id \oplus i)^4 = (\id \oplus \omega^2)^4 = \id^4 \oplus \omega^8 = \id \oplus \id = \id$ by bifunctoriality and \ref{ax:omega}. \label{eq:a5}
\item We compute
  \begin{align*}
    (\Sg \circ \Had)^3
    & = (\Sg \circ (\omega \sprod \X \circ \Sg \circ \V \circ \Sg
    \circ \X))^3 & (\text{def. }\Had) \\
    & = (\omega \sprod \Sg \circ \X \circ \Sg \circ \V \circ \Sg
    \circ \X)^3 & (\text{Prop. \ref{prop:scalars}}) \\
    & = (\omega \sprod (\omega^2 \sprod \X) \circ \V \circ \Sg
    \circ \X)^3 & (\text{Lem.~\ref{lem:gates} (vi)}) \\
    & = (\omega^3 \sprod \X \circ \V \circ \Sg
    \circ \X)^3 & (\text{Prop. \ref{prop:scalars}}) \\
    & = (\omega^3 \sprod \X \circ \V \circ \Sg
    \circ \X) \circ (\omega^3 \sprod \X \circ \V \circ \Sg
    \circ \X) \circ (\omega^3 \sprod \X \circ \V \circ \Sg
    \circ \X)  & (\text{expand}) \\
    & = \omega^9 \sprod \X \circ \V \circ \Sg \circ \X \circ \X \circ
    \V \circ \Sg \circ \X \circ \X \circ \V \circ \Sg \circ \X &
    (\text{Prop. \ref{prop:scalars}}) \\
    & = \omega \sprod \X \circ \V \circ \Sg \circ
    \V \circ \Sg \circ \V \circ \Sg \circ \X & (\ref{ax:omega}, \X^2 = \id) \\
    & = \omega \sprod \X \circ (\omega^2 \sprod \Sg \circ \V \circ
    \Sg) \circ \Sg \circ \V \circ \Sg \circ \X & \ref{ax:sqy} \\
    & = \omega^3 \sprod \X \circ \Sg \circ \V \circ
    \Sg \circ \Sg \circ \V \circ \Sg \circ \X & (\text{Prop. \ref{prop:scalars}})
    \\
    & = \omega^3 \sprod \X \circ \Sg \circ \V \circ
    \Sg \circ \X \circ \X \circ \Sg \circ \V \circ \Sg \circ \X & (\X^2 = \id)
    \\
    & = \omega \sprod (\omega \sprod \X \circ \Sg \circ \V \circ
    \Sg \circ \X) \circ (\omega \sprod \X \circ \Sg \circ \V \circ \Sg \circ \X) &
    (\text{Prop. \ref{prop:scalars}}) \\
    & = \omega \sprod (\Had \circ \Had) & (\text{def. }\Had) \\
    & = \omega \sprod \id & \ref{eq:a4}
  \end{align*} \label{eq:a6}
\item By Lem.~\ref{lem:gates} (ix). \label{eq:a7}
\item We have
  \begin{align*}
    \Ctrl~\Z \circ (\Sg \otimes \id)
    & = \Swap \circ \Ctrl~\Z \circ \Swap \circ (\Sg \otimes \id) &
    \text{(Lem.~\ref{lem:mat} (v))} \\
    & = \Swap \circ \Ctrl~\Z \circ (\id \otimes \Sg) \circ \Swap &
    (\text{naturality}~\Swap) \\
    & = \Swap \circ (\id \otimes \Sg) \circ \Ctrl~\Z \circ \Swap &
    (\text{Lem.}~\ref{lem:mat} (vii)) \\
    & = (\Sg \otimes \id) \circ \Swap \circ \Ctrl~\Z \circ \Swap &
    (\text{naturality}~\Swap) \\
    & = (\Sg \otimes \id) \circ \Ctrl~\Z & \text{(Lem.~\ref{lem:mat} (v))}
  \end{align*} \label{eq:a8}
\item By Lem.~\ref{lem:mat} (v). \label{eq:a9}

\item Since $\Sg \circ \Sg = \Z$ and $\Had \circ \Sg \circ \Sg \circ
  \Had = \Had \circ \Z \circ \Had = \X$ by Lems.~\ref{lem:gates} and
  \ref{lem:had}, it suffices to show
  $\Ctrl~\Z \circ (\X \otimes \id) = \X \otimes \Z \circ \Ctrl~\Z$.
  This follows by
  \begin{align*}
    \Ctrl~\Z \circ (\X \otimes \id)
    & = \Mat^{-1} \circ (\id \oplus \Z) \circ \Mat \circ (\X \otimes
    \id) & (\text{def.}~\Ctrl) \\
    & = \Mat^{-1} \circ (\id \oplus \Z) \circ \sigma_\oplus \circ \Mat
    & (\text{Lem.}~\ref{lem:mat} (viii)) \\
    & = \Mat^{-1} \circ \sigma_\oplus \circ (\Z \oplus \id) \circ \Mat
    & (\text{naturality}~\sigma_\oplus) \\
    & = (\X \otimes \id) \circ \Mat^{-1} \circ (\Z \oplus \id) \circ \Mat
    & (\text{Lem.}~\ref{lem:mat} (viii)) \\
    & = (\X \otimes \id) \circ \Mat^{-1} \circ (\Z \oplus (\Z \circ
    \Z)) \circ \Mat & (\Z^2 = \id) \\
    & = (\X \otimes \id) \circ \Mat^{-1} \circ (\Z \oplus \Z) \circ
    (\id \oplus \Z) \circ \Mat & (\text{bifunctoriality}~{\oplus}) \\
    & = (\X \otimes \id) \circ (\id \otimes \Z) \circ \Mat^{-1} \circ
    (\id \oplus \Z) \circ \Mat & (\text{Lem.}~\ref{lem:mat} (i)) \\
    & = (\X \otimes \Z) \circ \Mat^{-1} \circ
    (\id \oplus \Z) \circ \Mat & (\text{bifunctoriality}~{\otimes}) \\
    & = \X \otimes \Z \circ \Ctrl~\Z & (\text{def.}~\Ctrl)
  \end{align*} \label{eq:a10}

\item Similarly, since it has already been established that $\Had
  \circ \Sg \circ \Sg \circ \Had = \X$ and $\Sg \circ \Sg = \Z$, it
  suffices to show
  $\Ctrl~\Z \circ (\id \otimes \X) = \Z \otimes \X \circ \Ctrl~\Z$:
  \begin{align*}
    \Ctrl~\Z \circ (\id \otimes \X)
    & = \Swap \circ \Ctrl~\Z \circ \Swap \circ (\id \otimes \X) &
    (\text{Lem.}~\ref{lem:gates} (v)) \\
    & = \Swap \circ \Ctrl~\Z \circ (\X \otimes \id) \circ \Swap &
    (\text{naturality}~\Swap) \\
    & = \Swap \circ \X \otimes \Z \circ \Ctrl~\Z \circ \Swap &
    \ref{eq:a10} \\
    & = \Z \otimes \X \circ \Swap \circ \Ctrl~\Z \circ \Swap &
    (\text{naturality}~\Swap) \\
    & = \Z \otimes \X \circ \Ctrl~\Z &
    (\text{Lem.}~\ref{lem:gates} (v))
  \end{align*} \label{eq:a11}

\item\label{item:a12} We defer the derivation of this identity to Appendix~\ref{eq:a12}.
  \begin{appendixproof}[Proof of ~\ref{item:a12}]
\label{eq:a12}
  We start by showing some identities that will be helpful in showing this
  relation and the one that follows. We first observe that
\begin{align*}
  \Sg \circ \Had \circ \Sg \circ \Had \circ \Sg
  & = \Sg \circ \Had \circ \Sg \circ \Had \circ \Sg \circ \Had \circ
  \Had & \ref{eq:a4} \\
  & = (\omega \sprod \id) \circ \Had & \ref{eq:a6} \\
  & = \omega \sprod \Had & \text{(Prop.~\ref{prop:scalars})}
\end{align*}
and that
\begin{align*}
  i \sprod \Sg \circ \Had \circ \Sg \circ \Z \circ \Had \circ \Sg
  & = i \sprod \Sg \circ \Had \circ \Sg \circ \Had \circ \Had \circ \Z
  \circ \Had \circ \Sg & \ref{eq:a4} \\
  & = i \sprod \Sg \circ \Had \circ \Sg \circ \Had \circ \X \circ \Sg
  & (\text{Lem.~\ref{lem:had}}) \\
  & = i \sprod \Sg \circ \Had \circ \Sg \circ \Had \circ (i \sprod \Sg
  \circ \Z \circ X) & (\text{Lem.~\ref{lem:gates} (xi)}) \\
  & = i^2 \sprod \Sg \circ \Had \circ \Sg \circ \Had \circ \Sg
  \circ \Z \circ X & (\text{Prop.~\ref{prop:scalars}}) \\
  & = -1 \circ \omega \sprod \Had \circ \Z \circ \X & (\text{Lem.}, i^2 = -1) \\
  & = -1 \circ \omega \sprod \Had \circ (-1 \sprod \X \circ \Z) &
  (\text{Lem.~\ref{lem:gates} (xi)}) \\
  & = (-1)^2 \circ \omega \sprod \Had \circ \X \circ \Z &
  (\text{Prop.~\ref{prop:scalars}}) \\
  & = \omega \sprod \Had \circ \X \circ \Had \circ \Had \circ \Z &
  (\ref{eq:a4}, (-1)^2 = \id) \\
  & = \omega \sprod \Z \circ \Had \circ \Z & (\text{Lem.~\ref{lem:had}})
\end{align*}
But then we have
{\scriptsize\begin{align*}
  & \omega^{-1} \sprod (\Sg \otimes (\Sg \circ \Had \circ \Sg)) \circ
  \Ctrl~\Z \circ (\id \otimes (\Had \circ \Sg)) \\
  & \quad = \omega^{-1} \sprod (\Sg \otimes (\Sg \circ \Had \circ \Sg))
  \circ \Mat^{-1} \circ (\id \oplus \Z) \circ \Mat \circ (\id \otimes
  (\Had \circ \Sg)) & (\text{def.}~\Ctrl) \\
  & \quad = \omega^{-1} \sprod (\Sg \otimes \id) \circ (\id \otimes
  (\Sg \circ \Had \circ \Sg)) \circ \Mat^{-1} \circ (\id \oplus \Z)
  \circ \Mat \circ (\id \otimes (\Had \circ \Sg)) &
  (\text{bifunctoriality}~\otimes) \\
  & \quad = \omega^{-1} \sprod (\Sg \otimes \id) \circ \Mat^{-1} \circ
  ((\Sg \circ \Had \circ \Sg) \oplus (\Sg \circ \Had \circ \Sg)) \circ
  (\id \oplus \Z) \circ ((\Had \circ \Sg) \oplus (\Had \circ \Sg)) \circ
  \Mat & (\text{Lem.~\ref{lem:mat} (i) twice}) \\
  & \quad = \omega^{-1} \sprod (\Sg \otimes \id) \circ \Mat^{-1} \circ
  ((\Sg \circ \Had \circ \Sg \circ \Had \circ \Sg) \oplus (\Sg \circ
  \Had \circ \Sg \circ \Z \circ \Had \circ \Sg)) \circ \Mat &
  (\text{bifunctoriality}~\oplus) \\
  & \quad = \omega^{-1} \sprod \Mat^{-1} \circ (\id \oplus (i \sprod \id)) \circ
  ((\Sg \circ \Had \circ \Sg \circ \Had \circ \Sg) \oplus (\Sg \circ
  \Had \circ \Sg \circ \Z \circ \Had \circ \Sg)) \circ \Mat &
  (\text{Lem.~\ref{lem:mat} (ix)}) \\
  & \quad = \omega^{-1} \sprod \Mat^{-1} \circ ((\Sg \circ \Had \circ
  \Sg \circ \Had \circ \Sg) \oplus (i \sprod \Sg \circ \Had \circ \Sg
  \circ \Z \circ \Had \circ \Sg)) \circ \Mat
  &  (\text{bifunctoriality}~\oplus) \\
  & \quad = \omega^{-1} \sprod \Mat^{-1} \circ ((\omega \sprod \Had)
  \oplus (\omega \sprod \Z \circ \Had \circ \Z)) \circ \Mat
  & (\text{inner lemmas}) \\
  & \quad = \omega^{-1} \sprod \Mat^{-1} \circ \omega \sprod (\Had
  \oplus (\Z \circ \Had \circ \Z)) \circ \Mat &
  (\text{Prop.~\ref{prop:scalars}}) \\
  & \quad = \omega^{-1} \circ \omega \sprod \Mat^{-1} \circ (\Had \oplus
  (\Z \circ \Had \circ \Z)) \circ \Mat &
  (\text{Prop.~\ref{prop:scalars}}) \\
  & \quad = \Mat^{-1} \circ (\id \oplus \Z) \circ (\Had \oplus \Had)
  \circ (\id \oplus \Z) \circ \Mat
  & (\text{bifunctoriality}~\oplus) \\
  & \quad = \Mat^{-1} \circ (\id \oplus \Z) \circ \Mat \circ
  \Mat^{-1} \circ (\Had \oplus \Had) \circ \Mat \circ \Mat^{-1} \circ
  (\id \oplus \Z) \circ \Mat & (\Mat~\text{invertible}) \\
  & \quad = \Mat^{-1} \circ (\id \oplus \Z) \circ \Mat \circ
  \Mat^{-1} \circ \Mat \circ (\id \otimes \Had) \circ \Mat^{-1} \circ
  (\id \oplus \Z) \circ \Mat & (\text{Lem.~\ref{lem:mat} (i)}) \\
  & \quad = \Mat^{-1} \circ (\id \oplus \Z) \circ \Mat \circ (\id
  \otimes \Had) \circ \Mat^{-1} \circ (\id \oplus \Z) \circ \Mat &
  (\Mat~\text{invertible})\\
  & \quad = \Ctrl~\Z \circ (\id \otimes \Had) \circ \Ctrl~\Z
  & (\text{def.}~\Ctrl)
\end{align*}}
as desired.
\end{appendixproof}
\item This relation follows by the above since
\begin{align*}
  & \omega^{-1} \sprod ((\Sg \circ \Had \circ \Sg) \otimes \Sg) \circ
  \Ctrl~\Z \circ ((\Had \circ \Sg) \otimes \id) \\
  & \quad = \omega^{-1} \sprod ((\Sg \circ \Had \circ \Sg) \otimes \Sg) \circ
  \Swap \circ \Ctrl~\Z \circ \Swap \circ ((\Had \circ \Sg) \otimes \id)
  & (\text{Lem.~\ref{lem:mat} (v)}) \\
  & \quad = \omega^{-1} \sprod \Swap \circ (\Sg \otimes (\Sg \circ
  \Had \circ \Sg)) \circ \Ctrl~\Z \circ (\id \otimes (\Had \circ \Sg))
  \circ \Swap
  & (\text{naturality}~\Swap) \\
  & \quad = \Swap \circ (\omega^{-1} \sprod ((\Sg \otimes (\Sg \circ
  \Had \circ \Sg)) \circ \Ctrl~\Z \circ (\id \otimes (\Had \circ \Sg)))
  \circ \Swap
  & (\text{Prop.~\ref{prop:scalars}})\\
  & \quad = \Swap \circ \Ctrl~\Z \circ (\id
  \otimes \Had) \circ \Ctrl~\Z \circ \Swap
  & \ref{eq:a12} \\
  & \quad = \Swap \circ \Ctrl~\Z \circ \Swap \circ \Swap \circ (\id
  \otimes \Had) \circ \Ctrl~\Z \circ \Swap
  & (\Swap~\text{involutive}) \\
  & \quad = \Swap \circ \Ctrl~\Z \circ \Swap \circ (\Had
  \otimes \id) \circ \Swap \circ \Ctrl~\Z \circ \Swap
  & (\text{naturality}~\Swap) \\
  & \quad = \Ctrl~\Z \circ (\Had \otimes \id) \circ \Ctrl~\Z
  & (\text{Lem.~\ref{lem:mat} (v)})
\end{align*}\label{eq:a13}
\end{enumerate}
These derivations lead us, as a first step, to full abstraction for
$\le 2$-qubit Clifford circuits.
\begin{theorem}[Full abstraction for $\le 2$-qubit Clifford]\label{thm:2cliff}
  Let $c_1$ and $c_2$ be \SPiLang\ terms representing Clifford circuits of
  at most two qubits. Then $\sem{c_1} = \sem{c_2}$ iff $\stsem{c_1} =
  \stsem{c_2}$.
\end{theorem}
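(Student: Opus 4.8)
The plan is to bracket the biconditional by a trivial direction and a transfer argument. For the forward implication, $\sem{c_1} = \sem{c_2}$ asserts an equation that holds in \emph{every} model of \SPiLang; since \Unitary\ is such a model by Prop.~\ref{prop:unitarymodel} and $\stsem{-}$ is by definition the interpretation $\sem{-}$ taken in \Unitary, we immediately get $\stsem{c_1} = \stsem{c_2}$. This direction uses nothing about Clifford circuits, and the same argument gives soundness in the later full-abstraction theorems.

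For the converse, suppose $\stsem{c_1} = \stsem{c_2}$. By hypothesis each $c_i$ is a composite, under $\circ$ and $\otimes$, of the generators $\omega, \Sg, \Had, \CZ$ and the structural isomorphisms $\alpha_\otimes^{\pm 1}, \lambda_\otimes^{\pm 1}, \rho_\otimes^{\pm 1}, \sigma_\otimes$, with one of the two signatures allowed by the definition of a representation of a $\le 2$-qubit Clifford circuit. First I would make explicit the denotation-preserving correspondence between such terms and the morphisms of Selinger's presentation of the $\le 2$-qubit Clifford group: generators go to generators, $\otimes$ and $\circ$ to the monoidal and sequential composition of that presentation, and all re-bracketing and unit coercion is absorbed into the structural isomorphisms, which exist and are coherent in any rig category. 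Under $\stsem{-}$ such a term denotes exactly the corresponding Clifford unitary, so the hypothesis $\stsem{c_1} = \stsem{c_2}$, combined with the completeness theorem of \citet{selinger:clifford}, yields a finite derivation of $c_1 = c_2$ from the relations (A3)--(A13) of Fig.~\ref{fig:2qubitclifford} together with (A1), (A2) and the coherence equations of a symmetric monoidal category. The derivations of (A1)--(A13) given just above this theorem show that each of these relations holds between the corresponding morphisms in \emph{every} model of \SPiLang, and the symmetric-monoidal coherence equations hold in any model because any model is in particular symmetric monoidal under $(\otimes, I)$. Consequently every step of the derivation preserves $\sem{-}$ in every model, whence $\sem{c_1} = \sem{c_2}$.

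The quantum-specific content --- that (A1)--(A13) are valid in arbitrary models --- is already discharged by the derivations above, so the residual work is organisational. I would need to (i) pin down that the \SPiLang\ terms ``representing a $\le 2$-qubit Clifford circuit'' correspond, functorially and surjectively on denotations, to the morphisms of Selinger's presented category, so that no stray terms intervene and the translation respects composition; and (ii) confirm that the only structural equalities implicitly used when gluing Clifford circuits together are the symmetric-monoidal coherence equations available in any rig category. This matching-up of the two formalisms, rather than any new calculation, is the step I expect to demand the most care; everything else is an application of Selinger's theorem on one side and the soundness derivations of (A1)--(A13) on the other.
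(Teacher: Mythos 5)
Your proof is correct and follows essentially the same route as the paper's: soundness comes from Prop.~\ref{prop:unitarymodel} (that \Unitary\ is a model of \SPiLang), and completeness from the fact that the Selinger relations (A3)--(A13) are derivable in any model of \SPiLang\ combined with \citet{selinger:clifford}'s completeness theorem for $\le 2$-qubit Clifford circuits. The paper's proof states this in one sentence and leaves the translation between \SPiLang\ terms and Selinger's presentation implicit; your proposal spells out that translation and the soundness direction, which is useful bookkeeping but not a different argument.
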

\begin{proof}
  The identities \ref{eq:a3}--\ref{eq:a13} are complete for $\le
  2$-qubit Clifford circuits by \cite[Prop. 7.1]{selinger:clifford}
  (see Remark 7.2 regarding the special case of $\le 2$-qubit
  circuits), and have been shown above to hold in any model of
  \SPiLang.
\end{proof}

%%%%%%%%%%%
\subsection{$n$-qubit Clifford Circuits}
\label{subsec:nclifford}

To extend Thm.~\ref{thm:2cliff} to Clifford circuits with an arbitrary
number of qubits, it suffices by a result of \citet{selinger:clifford}
to prove just four identities (shown in
Fig.~\ref{fig:nclifford}). Interestingly, by showing that models of
\SPiLang\ admit a few circuit rewriting rules and applying these, we
will see that the heavy lifting of these four identities can be done
entirely by \emph{classical} reasoning. This lets us exploit the
soundness and completeness of \PiLang{} with respect to its
permutation semantics, which greatly simplifies these proofs.

Recall that we interpret controlled gates in \SPiLang\ using the
$\Ctrl$ macro, such that, \textit{e.g.}, a controlled-$\X$ gate \raisebox{2mm}{$\scalebox{0.5}{\Qcircuit @C=.3em @R=.5em {
     & \ctrl{1} & \qw \\
     & \targ & \qw
}}$}
becomes $\Ctrl~\X$. If we're interested in a controlled gate where the
target line is above rather than below, we can simply conjugate it by
a swap, \textit{e.g.},
\[
\Qcircuit @C=.7em @R=.7em {
     & \targ & \qw \\
     & \ctrl{-1} & \qw
} \quad\raisebox{-2mm}{=}\quad
\Qcircuit @C=.7em @R=.7em {
     & \qswap & \ctrl{1} & \qswap & \qw \\
     & \qswap \qwx & \targ & \qswap \qwx & \qw
} \quad\raisebox{-2mm}{.}
\]
Thus a ``bottom-controlled'' $\X$ is interpreted in
\SPiLang\ as $\Swap \circ \Ctrl~\X \circ \Swap$. We first collect some
useful additional properties of $\Ctrl~X$ and $\Ctrl~\Z$, with proofs
located in Appendix~\ref{ap:lem:ctrlh}.

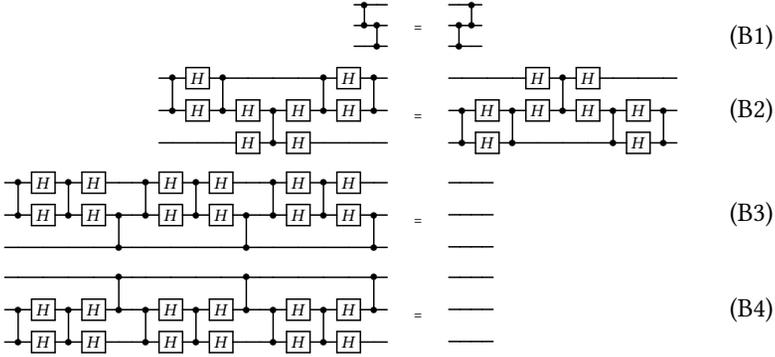
\begin{figure}
  \tiny
  \[
  \begin{array}{rcll}
    \Qcircuit @C=.5em @R=1em {
     & \ctrl{1} & \qw & \qw \\
     & \control\qw & \ctrl{1} & \qw \\
     & \qw & \control\qw & \qw
    } & \raisebox{-3mm}{=} &
    \Qcircuit @C=.5em @R=1em {
     & \qw & \control\qw & \qw \\
     & \ctrl{1} & \ctrl{-1} & \qw \\
     & \control\qw & \qw & \qw
    } & \raisebox{-4.5mm}{\normalsize \quad\text{(B1)}} \\ \\
    \Qcircuit @C=.7em @R=0.7em @!R {
      & \control\qw & \gate{H} & \control\qw & \qw      & \qw      & \qw
      & \control\qw & \gate{H} & \control\qw & \qw \\
      & \ctrl{-1}   & \gate{H} & \ctrl{-1}   & \gate{H} & \ctrl{1} & \gate{H}
      & \ctrl{-1}   & \gate{H} & \ctrl{-1}   & \qw \\
      & \qw         & \qw      & \qw         & \gate{H} & \control\qw & \gate{H}
      & \qw         & \qw      & \qw         & \qw
    } & \raisebox{-5mm}{=} &
    \Qcircuit @C=.7em @R=0.7em @!R {
      & \qw         & \qw      & \qw
      & \gate{H}    & \ctrl{1} & \gate{H}
      & \qw         & \qw      & \qw & \qw \\
      & \ctrl{1}    & \gate{H} & \ctrl{1}
      & \gate{H}    & \control\qw & \gate{H}
      & \control\qw & \gate{H} & \control\qw & \qw \\
      & \control\qw & \gate{H} & \control\qw
      & \qw         & \qw      & \qw
      & \ctrl{-1}   & \gate{H} & \ctrl{-1} & \qw
    } & \raisebox{-4.5mm}{\normalsize \quad\text{(B2)}} \\ \\
    \Qcircuit @C=.7em @R=0.7em @!R {
      & \control\qw & \gate{H} & \control\qw & \gate{H} & \qw & \qw
      & \control\qw & \gate{H} & \control\qw & \gate{H} & \qw & \qw
      & \control\qw & \gate{H} & \control\qw & \gate{H} & \qw & \qw \\
      & \ctrl{-1}   & \gate{H} & \ctrl{-1}   & \gate{H} & \ctrl{1} & \qw
      & \ctrl{-1}   & \gate{H} & \ctrl{-1}   & \gate{H} & \ctrl{1} & \qw
      & \ctrl{-1}   & \gate{H} & \ctrl{-1}   & \gate{H} & \ctrl{1} & \qw \\
      & \qw         & \qw      & \qw         & \qw & \control\qw & \qw
      & \qw         & \qw      & \qw         & \qw & \control\qw & \qw
      & \qw         & \qw      & \qw         & \qw & \control\qw & \qw
    } & \raisebox{-5mm}{=} &
    \Qcircuit @C=.7em @R=1em {
      & \qw & \qw & \qw & \qw \\ \\
      & \qw & \qw & \qw & \qw \\ \\
      & \qw & \qw & \qw & \qw
    } & \raisebox{-4.5mm}{\normalsize \quad\text{(B3)}} \\ \\
    \Qcircuit @C=.7em @R=0.7em @!R {
      & \qw         & \qw      & \qw         & \qw & \control\qw & \qw
      & \qw         & \qw      & \qw         & \qw & \control\qw & \qw
      & \qw         & \qw      & \qw         & \qw & \control\qw & \qw \\
      & \ctrl{1}   & \gate{H} & \ctrl{1}   & \gate{H} & \ctrl{-1} & \qw
      & \ctrl{1}   & \gate{H} & \ctrl{1}   & \gate{H} & \ctrl{-1} & \qw
      & \ctrl{1}   & \gate{H} & \ctrl{1}   & \gate{H} & \ctrl{-1} & \qw \\
      & \control\qw & \gate{H} & \control\qw & \gate{H} & \qw & \qw
      & \control\qw & \gate{H} & \control\qw & \gate{H} & \qw & \qw
      & \control\qw & \gate{H} & \control\qw & \gate{H} & \qw & \qw
    } & \raisebox{-5mm}{=} &
    \Qcircuit @C=.7em @R=1em {
      & \qw & \qw & \qw & \qw \\ \\
      & \qw & \qw & \qw & \qw \\ \\
      & \qw & \qw & \qw & \qw
    } & \raisebox{-4.5mm}{\normalsize \quad\text{(B4)}}
  \end{array}
  \]
  \caption{The 3-qubit identities of Clifford circuits due to
    \citet{selinger:clifford} which, together with (A3)--(A13) of
    Fig.~\ref{fig:2qubitclifford}, form a sound and complete
    equational theory of Clifford circuits.}
  \label{fig:nclifford}
\end{figure}

\begin{lemmarep}\label{lem:ctrlh}
  The following identities hold in any model of \SPiLang:
  \begin{enumerate}[label={(\roman*)}]
  \item $\id \otimes \Had \circ \Ctrl~\X \circ \id \otimes \Had =
    \Ctrl~\Z$,
  \item $\Had \otimes \id \circ \Swap \circ \Ctrl~\X \circ \Swap \circ
    \Had \otimes \id = \Ctrl~\Z$,
  \item $\id \otimes \Had \circ \Ctrl~\Z \circ \id \otimes \Had =
    \Ctrl~\X$,
  \item $\Had \otimes \id \circ \Ctrl~\Z \circ \Had \otimes \id =
    \Swap \circ \Ctrl~\X \circ \Swap$,
  \item $\Had \otimes \id \circ \Ctrl~\X \circ \Had \otimes \id = \id
    \otimes \Had \circ \Swap \circ \Ctrl~\X \circ \Swap \circ \id
    \otimes \Had$
  \end{enumerate}
\end{lemmarep}
\begin{proof}
  \label{ap:lem:ctrlh}
  For (i),
  \begin{align*}
    \id \otimes \Had \circ \Ctrl~\X \circ \id \otimes \Had
    & = \id \otimes \Had \circ \Mat^{-1} \circ \id \oplus \X \circ
    \Mat \circ \id \otimes \Had & (\text{def.}~\Ctrl) \\
    & = \Mat^{-1} \circ (\Had \oplus \Had) \circ (\id \oplus \X) \circ
    (\Had \oplus \Had) \circ \Mat & (\text{Lem.~\ref{lem:gates} (i)}) \\
    & = \Mat^{-1} \circ (\Had \circ \Had) \oplus (\Had \circ \X \circ
    \Had) \circ \Mat & (\text{bifunctoriality}~\oplus) \\
    & = \Mat^{-1} \circ (\id \oplus (\Had \circ \X \circ
    \Had)) \circ \Mat & (\text{A4}) \\
    & = \Mat^{-1} \circ (\id \oplus \Z) \circ \Mat &
    (\text{Lem.~\ref{lem:had}})\\
    & = \Ctrl~\Z & (\text{def.}~\Ctrl)
  \end{align*}
  and (ii),
  \begin{align*}
    &\Had \otimes \id \circ \Swap \circ \Ctrl~\X \circ \Swap \circ \Had
    \otimes \id \\
    &\quad  = \Swap \circ \id \otimes \Had \circ \Ctrl~\X \circ \id \otimes
    \Had \circ \Swap & (\text{naturality}~\Swap) \\
    &\quad = \Swap \circ \Ctrl~\Z \circ \Swap &
    (\text{Lem.~\ref{lem:ctrlh} (i)}) \\
    &\quad = \Ctrl~\Z & \text{(Lem.~\ref{lem:mat} (v))}
  \end{align*}
  Points (iii) and (iv) follow entirely analogously to (i) and (ii)
  respectively. As for (v),
  \begin{align*}
    & \Had \otimes \id \circ \Ctrl~\X \circ \Had \otimes \id \\
    & \quad = \Had \otimes \id \circ \id \otimes \Had \circ \Ctrl~\Z
    \circ \id \otimes \Had \circ \Had \otimes \id &
    (\text{Lem.~\ref{lem:ctrlh} (iii)}) \\
    & \quad = \Had \otimes \Had \circ \Ctrl~\Z
    \circ \Had \otimes \Had &
    (\text{bifunctoriality}~\otimes) \\
    & \quad = \Had \otimes \Had \circ \Swap \circ \Ctrl~\Z \circ \Swap
    \circ \Had \otimes \Had &
    (\text{Lem.~\ref{lem:mat} (v)}) \\
    & \quad = \Swap \circ \Had \otimes \Had \circ \Ctrl~\Z \circ
    \Had \otimes \Had \circ \Swap &
    (\text{naturality}~\Swap) \\
    & \quad = \Swap \circ \Had \otimes \id \circ \Ctrl~\X \circ
    \Had \otimes \id \circ \Swap &
    (\text{Lem.~\ref{lem:ctrlh} (iii)}) \\
    & \quad = \id \otimes \Had \circ \Swap \circ \Ctrl~\X \circ
    \Swap \circ \id \otimes \Had &
    (\text{naturality}~\Swap)
  \end{align*}
\end{proof}

These have direct interpretations as circuit identities,
which we will use to simplify \ref{eq:b1}--\ref{eq:b4}.

\begin{corollary}\label{corr:circuitid}
  The following circuit identities hold in any model of \SPiLang:
\vspace{2mm}
 \begin{multicols}{2}
  \begin{enumerate}[label={(\roman*)}]\setlength\itemsep{1em}
    \item \raisebox{2.5mm}{\scriptsize $\Qcircuit @C=.5em @R=.8em {
     & \qw & \ctrl{1} & \qw & \qw \\
     & \gate{H} & \targ & \gate{H} & \qw
    } \quad\raisebox{-2mm}{=}\quad
    \Qcircuit @C=0.5em @R=1.2em {
     & \ctrl{1} & \qw \\
     & \control \qw & \qw
    }$\normalsize} \enspace,
    \item \raisebox{2.5mm}{\scriptsize$\Qcircuit @C=.5em @R=0.8em {
     & \gate{H} & \targ & \gate{H} & \qw \\
     & \qw & \ctrl{-1} & \qw & \qw
    } \quad\raisebox{-2mm}{=}\quad
    \Qcircuit @C=.5em @R=1.2em {
     & \control \qw & \qw \\
     & \ctrl{-1} & \qw
    }$\normalsize} \enspace,
    \item \raisebox{2.5mm}{\scriptsize$\Qcircuit @C=.5em @R=0.7em {
     & \qw & \ctrl{1} & \qw & \qw \\
     & \gate{H} & \control \qw & \gate{H} & \qw
    } \quad\raisebox{-2mm}{=}\quad
    \Qcircuit @C=.5em @R=1em {
     & \ctrl{1} & \qw \\
     & \targ & \qw
    }$\normalsize} \enspace,
    \item \raisebox{2.5mm}{\scriptsize$\Qcircuit @C=.5em @R=0.7em {
     & \gate{H} & \ctrl{1} & \gate{H} & \qw \\
     & \qw & \control \qw & \qw & \qw
    } \quad\raisebox{-2mm}{=}\quad
    \Qcircuit @C=.5em @R=1em {
     & \targ & \qw \\
     & \ctrl{-1} & \qw
    }$\normalsize} \enspace,
    \item \raisebox{2.5mm}{\scriptsize$\Qcircuit @C=.5em @R=.7em {
     & \gate{H} & \ctrl{1} & \gate{H} & \qw \\
     & \qw & \targ & \qw & \qw
    } \quad\raisebox{-2mm}{=}\quad
    \Qcircuit @C=.5em @R=.7em {
     & \qw & \targ & \qw & \qw \\
     & \gate{H} & \ctrl{-1} & \gate{H} & \qw
    }$\normalsize} \enspace,
    \item \raisebox{2.5mm}{\scriptsize$\Qcircuit @C=.5em @R=.6em @!R {
     & \gate{U} & \qswap & \qw & \qw  \\
     & \qw & \qswap\qwx & \qw & \qw
    } \quad\raisebox{-2mm}{=}\quad
    \Qcircuit @C=.5em @R=.6em @!R {
     & \qw & \qswap & \qw & \qw  \\
     & \qw & \qswap\qwx & \gate{U} & \qw
    }$\normalsize} \enspace and \enspace
    \raisebox{2.5mm}{\scriptsize$\Qcircuit @C=.5em @R=.6em @!R {
     & \qw & \qswap & \qw & \qw  \\
     & \gate{U} & \qswap\qwx & \qw & \qw
    } \quad\raisebox{-2mm}{=}\quad
    \Qcircuit @C=.5em @R=.6em @!R {
     & \qw & \qswap & \gate{U} & \qw  \\
     & \qw & \qswap\qwx & \qw & \qw
    }$\normalsize}
    \enspace for any gate \enspace \scriptsize$\Qcircuit @C=.5em @R=.7em {
    & \gate{U} & \qw
    }$\normalsize~.
  \end{enumerate}
  \end{multicols}
\end{corollary}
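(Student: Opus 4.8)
The plan is to recognise that every one of the six displayed circuit identities is nothing more than the string-diagram rendering of an algebraic identity that has already been proved (or is immediate from coherence), so that the corollary follows by reading Lemma~\ref{lem:ctrlh}, plus naturality of $\sigma_\otimes$, through the standard translation of circuit pictures into morphisms of a model of \SPiLang. First I would recall the conventions fixed just before the statement: a box $\Had$ on one wire denotes the appropriate $\id \otimes \Had \otimes \id$; a controlled-$\X$ with control on the upper wire and target on the lower is $\Ctrl~\X$; and a controlled gate whose control sits on the \emph{lower} wire is obtained by conjugating with $\Swap = \sigma_\otimes$. With these conventions the left- and right-hand sides of items (i)--(v) unfold verbatim into the equations of Lemma~\ref{lem:ctrlh}.

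Concretely, (i) is exactly $\id \otimes \Had \circ \Ctrl~\X \circ \id \otimes \Had = \Ctrl~\Z$, i.e.\ Lemma~\ref{lem:ctrlh}(i). In (ii) the target is on the upper wire, so the left-hand side is $\Had \otimes \id \circ \Swap \circ \Ctrl~\X \circ \Swap \circ \Had \otimes \id$; its right-hand side depicts a controlled-$\Z$ with the control on the lower wire, which equals $\Ctrl~\Z$ because $\Z$ is symmetric, formally via $\Swap \circ \Ctrl~\Z \circ \Swap = \Ctrl~\Z$ from Lemma~\ref{lem:mat}(v); hence (ii) is Lemma~\ref{lem:ctrlh}(ii). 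Items (iii) and (iv) are Lemma~\ref{lem:ctrlh}(iii) and (iv) read the same way (using, where needed, that $\Ctrl~\Z$ is insensitive to which wire is drawn as the control, again by Lemma~\ref{lem:mat}(v)), and (v) is Lemma~\ref{lem:ctrlh}(v) verbatim.

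Finally, item (vi) asserts that a single-qubit gate $U$ can be slid from an input wire of a $\Swap$ onto the opposite output wire, together with its mirror image. Both are instances of naturality of the symmetry, $\sigma_\otimes \circ (U \otimes \id) = (\id \otimes U) \circ \sigma_\otimes$ and $\sigma_\otimes \circ (\id \otimes U) = (U \otimes \id) \circ \sigma_\otimes$, which hold in any symmetric monoidal category; in our setting they are among the classical rig-groupoid identities available for free by the full abstraction and adequacy of \PiLang.

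The only real work is bookkeeping rather than mathematics: being careful in each picture about which wire carries the control and which the target, hence where the conjugating $\Swap$s land, and pairing this with the small observation that a controlled-$\Z$ drawn with the control on either wire denotes $\Ctrl~\Z$ (so the right-hand sides of (ii) and (iv) are what the lemmas say they are). No categorical input beyond Lemmas~\ref{lem:ctrlh} and~\ref{lem:mat}(v) and naturality of $\sigma_\otimes$ is required.
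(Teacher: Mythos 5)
Your argument is correct and coincides with the paper's own proof, which simply observes that (i)--(v) are Lemma~\ref{lem:ctrlh} read through the circuit translation and (vi) is naturality of $\Swap = \sigma_\otimes$. The extra bookkeeping you supply (identifying which wire is the control, conjugating by $\Swap$ for bottom controls, and invoking Lemma~\ref{lem:mat}(v) for the symmetry of $\Ctrl~\Z$) is exactly the unwinding the paper leaves implicit.
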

\begin{proof}
  Points (i)--(v) hold by Lem.~\ref{lem:ctrlh}, while (vi) is
  naturality of $\Swap$.
\end{proof}

We can now tackle the four 3-qubit rules for Clifford circuits, named
(C12)--(C15) in the presentation of \citet{selinger:clifford}, which
we call \ref{eq:b1}--\ref{eq:b4}.

\begin{enumerate}[label={(B\arabic*)}]
\item This rule is can be derived using the circuit
  identities and classical completeness.
  {\scriptsize\begin{align*}
    \Qcircuit @C=.5em @R=1.2em {
     & \ctrl{1} & \qw & \qw \\
     & \control\qw & \ctrl{1} & \qw \\
     & \qw & \control\qw & \qw
    } & \quad \raisebox{-3.5mm}{=} \quad
    \Qcircuit @C=.5em @R=.7em {
     & \gate{H} & \gate{H} & \ctrl{1} & \gate{H} & \gate{H} & \qw & \qw \\
     & \qw & \qw & \control\qw & \ctrl{1} & \qw & \qw & \qw \\
     & \qw & \gate{H} & \gate{H} & \control\qw & \gate{H} & \gate{H} & \qw
    } & \text{\normalsize \ref{eq:a4}} \\[0.5\baselineskip]
    & \quad \raisebox{-3.5mm}{=} \quad
    \Qcircuit @C=.5em @R=.7em {
     & \gate{H} & \targ & \gate{H} & \qw & \qw \\
     & \qw & \ctrl{-1} & \ctrl{1} & \qw & \qw \\
     & \qw & \gate{H} & \targ & \gate{H} & \qw
    } & \text{\normalsize{(Cor.~\ref{corr:circuitid})}} \\[0.5\baselineskip]
    & \quad \raisebox{-3.5mm}{=} \quad
    \Qcircuit @C=.5em @R=.7em {
     & \qw & \gate{H} & \targ & \gate{H} & \qw \\
     & \qw & \ctrl{1} & \ctrl{-1} & \qw & \qw \\
     & \gate{H} & \targ & \gate{H} & \qw & \qw
    } & \text{\normalsize \eqref{eq:p1}} \\[0.5\baselineskip]
    & \quad \raisebox{-3.5mm}{=} \quad
    \Qcircuit @C=.5em @R=1.2em {
     & \qw & \control\qw & \qw \\
     & \ctrl{1} & \ctrl{-1} & \qw \\
     & \control\qw & \qw & \qw
    } & \text{\normalsize{(Cor.~\ref{corr:circuitid})}}
  \end{align*}} \label{eq:b1}
  Notice how the essential argument of this proof is the classical
  identity \eqref{eq:p1}.

\item \label{item:b2} We defer the proof of this identity to
  Appendix~\ref{eq:b2}.
  \begin{appendixproof}[Proof of ~\ref{item:b2}]
\label{eq:b2}
  Once again, we derive this from the circuit identities and a
  classical lemma:
  {\scriptsize\begin{align*}
  \Qcircuit @C=.7em @R=0.7em @!R {
    & \control\qw & \gate{H} & \control\qw & \qw      & \qw      & \qw
    & \control\qw & \gate{H} & \control\qw & \qw \\
    & \ctrl{-1}   & \gate{H} & \ctrl{-1}   & \gate{H} & \ctrl{1} & \gate{H}
    & \ctrl{-1}   & \gate{H} & \ctrl{-1}   & \qw \\
    & \qw         & \qw      & \qw         & \gate{H} & \control\qw & \gate{H}
    & \qw         & \qw      & \qw         & \qw
  } & \qquad \raisebox{-5mm}{=} \qquad
  \Qcircuit @C=.7em @R=0.7em @!R {
    & \gate{H}    & \gate{H} & \control\qw & \gate{H} & \control\qw
    & \qw         & \qw      & \qw
    & \control\qw & \gate{H} & \control\qw & \gate{H} & \gate{H} & \qw\\
    & \qw         & \qw      & \ctrl{-1}   & \gate{H} & \ctrl{-1}
    & \gate{H}    & \ctrl{1} & \gate{H}
    & \ctrl{-1}   & \gate{H} & \ctrl{-1}   & \qw      & \qw & \qw \\
    & \qw         & \qw      & \qw         & \qw      & \qw
    & \gate{H}    & \control\qw & \gate{H}
    & \qw         & \qw      & \qw         & \qw      & \qw      &\qw
  } & \text{\normalsize \ref{eq:a4}} \\[0.5\baselineskip]
  & \qquad \raisebox{-4mm}{=} \qquad
  \Qcircuit @C=.7em @R=0.5em @!R {
    & \gate{H} & \targ     & \ctrl{1} & \qw      & \ctrl{1} & \targ     & \gate{H}
    & \qw \\
    & \qw      & \ctrl{-1} & \targ    & \ctrl{1} & \targ    & \ctrl{-1} & \qw
    & \qw \\
    & \qw      & \qw       & \qw      & \targ    & \qw      & \qw       & \qw
    & \qw
  } & \text{\normalsize{(Cor.~\ref{corr:circuitid})}} \\[0.5\baselineskip]
  & \qquad \raisebox{-4mm}{=} \qquad
  \Qcircuit @C=.7em @R=0.5em @!R {
    & \gate{H} & \qswap     & \qw      & \qswap     & \gate{H} & \qw \\
    & \qw      & \qswap\qwx & \ctrl{1} & \qswap\qwx & \qw      & \qw \\
    & \qw      & \qw        & \targ    & \qw        & \qw      & \qw
  } & \text{\normalsize \eqref{eq:p2}} \\[0.5\baselineskip]
  & \qquad \raisebox{-4mm}{=} \qquad
  \Qcircuit @C=.7em @R=0.5em @!R {
    & \qswap     & \qw      & \qw      & \qw      & \qswap & \qw \\
    & \qswap\qwx & \gate{H} & \ctrl{1} & \gate{H} & \qswap\qwx & \qw \\
    & \qw        & \qw      & \targ    & \qw      & \qw    & \qw
  } & \text{\normalsize{(Cor.~\ref{corr:circuitid})}} \\[0.5\baselineskip]
  & \qquad \raisebox{-4mm}{=} \qquad
  \Qcircuit @C=.7em @R=0.5em @!R {
    & \qswap      & \qw       & \qswap     & \qw \\
    & \qswap\qwx  & \targ     & \qswap\qwx & \qw \\
    & \gate{H}    & \ctrl{-1} & \gate{H}   & \qw
  } & \text{\normalsize{(Cor.~\ref{corr:circuitid})}} \\[0.5\baselineskip]
  & \qquad \raisebox{-4mm}{=} \qquad
  \Qcircuit @C=.7em @R=0.5em @!R {
    & \qw      & \qw      & \qw       & \targ     & \qw       & \qw
    & \qw      & \qw \\
    & \qw      & \ctrl{1} & \targ     & \ctrl{-1} & \targ     & \ctrl{1}
    & \qw      & \qw \\
    & \gate{H} & \targ    & \ctrl{-1} & \qw       & \ctrl{-1} & \targ
    & \gate{H} & \qw
  } & \text{\normalsize \eqref{eq:p3}} \\[0.5\baselineskip]
  & \qquad \raisebox{-5mm}{=} \qquad
  \Qcircuit @C=.7em @R=0.7em @!R {
    & \qw         & \qw      & \qw         & \qw      & \qw
    & \gate{H}    & \ctrl{1} & \gate{H}
    & \qw         & \qw      & \qw         & \qw & \qw & \qw\\
    & \qw         & \qw      & \ctrl{1}    & \gate{H} & \ctrl{1}
    & \gate{H}    & \control\qw & \gate{H}
    & \control\qw & \gate{H} & \control\qw & \qw      & \qw & \qw \\
    & \gate{H}    & \gate{H} & \control\qw & \gate{H} & \control\qw
    & \qw         & \qw      & \qw
    & \ctrl{-1}   & \gate{H} & \ctrl{-1}   & \gate{H} & \gate{H} & \qw
  } & \text{\normalsize{(Cor.~\ref{corr:circuitid})}} \\[0.5\baselineskip]
  & \qquad \raisebox{-5mm}{=} \qquad
  \Qcircuit @C=.7em @R=0.7em @!R {
    & \qw         & \qw      & \qw
    & \gate{H}    & \ctrl{1} & \gate{H}
    & \qw         & \qw      & \qw & \qw \\
    & \ctrl{1}    & \gate{H} & \ctrl{1}
    & \gate{H}    & \control\qw & \gate{H}
    & \control\qw & \gate{H} & \control\qw & \qw \\
    & \control\qw & \gate{H} & \control\qw
    & \qw         & \qw      & \qw
    & \ctrl{-1}   & \gate{H} & \ctrl{-1} & \qw
  } & \text{\normalsize \ref{eq:a4}}
  \end{align*}}
  \end{appendixproof}
\item This identity and the next follow by reducing the circuit to one
  with a large classical subcircuit, which turns out (by classical
  completeness) to be the identity circuit.
{\scriptsize\begin{align*}
  & \Qcircuit @C=.7em @R=0.7em @!R {
    & \control\qw & \gate{H} & \control\qw & \gate{H} & \qw & \qw
    & \control\qw & \gate{H} & \control\qw & \gate{H} & \qw & \qw
    & \control\qw & \gate{H} & \control\qw & \gate{H} & \qw & \qw \\
    & \ctrl{-1}   & \gate{H} & \ctrl{-1}   & \gate{H} & \ctrl{1} & \qw
    & \ctrl{-1}   & \gate{H} & \ctrl{-1}   & \gate{H} & \ctrl{1} & \qw
    & \ctrl{-1}   & \gate{H} & \ctrl{-1}   & \gate{H} & \ctrl{1} & \qw \\
    & \qw         & \qw      & \qw         & \qw & \control\qw & \qw
    & \qw         & \qw      & \qw         & \qw & \control\qw & \qw
    & \qw         & \qw      & \qw         & \qw & \control\qw & \qw
  } \\[0.5\baselineskip]
  & \qquad \raisebox{-5mm}{=} \qquad
  \Qcircuit @C=.7em @R=0.7em @!R {
    & \qw & \qw
    & \control\qw & \gate{H} & \control\qw & \gate{H} & \qw
    & \qw & \qw
    & \control\qw & \gate{H} & \control\qw & \gate{H} & \qw
    & \qw & \qw
    & \control\qw & \gate{H} & \control\qw & \gate{H} & \qw
    & \qw & \qw & \qw \\
    & \gate{H} & \gate{H}
    & \ctrl{-1}   & \gate{H} & \ctrl{-1}   & \gate{H} & \ctrl{1}
    & \gate{H} & \gate{H}
    & \ctrl{-1}   & \gate{H} & \ctrl{-1}   & \gate{H} & \ctrl{1}
    & \gate{H} & \gate{H}
    & \ctrl{-1}   & \gate{H} & \ctrl{-1}   & \gate{H} & \ctrl{1}
    & \gate{H} & \gate{H} & \qw \\
    & \qw & \qw
    & \qw         & \qw      & \qw         & \qw & \control\qw
    & \qw & \qw
    & \qw         & \qw      & \qw         & \qw & \control\qw
    & \qw & \qw
    & \qw         & \qw      & \qw         & \qw & \control\qw
    & \qw & \qw & \qw
  } & \text{\normalsize \ref{eq:a4}} \\[0.5\baselineskip]
  & \qquad \raisebox{-5mm}{=} \qquad
  \Qcircuit @C=.7em @R=0.7em @!R {
    & \qw
    & \ctrl{1} & \targ     & \qw
    & \ctrl{1} & \targ     & \qw
    & \ctrl{1} & \targ     & \qw
    & \qw & \qw \\
    & \gate{H}
    & \targ    & \ctrl{-1} & \targ
    & \targ    & \ctrl{-1} & \targ
    & \targ    & \ctrl{-1} & \targ
    & \gate{H} & \qw \\
    & \qw
    & \qw      & \qw       & \ctrl{-1}
    & \qw      & \qw       & \ctrl{-1}
    & \qw      & \qw       & \ctrl{-1}
    & \qw & \qw
  } & \text{\normalsize{(Cor.~\ref{corr:circuitid})}} \\[0.5\baselineskip]
  & \qquad \raisebox{-5mm}{=} \qquad
  \Qcircuit @C=.7em @R=0.7em @!R {
    & \qw & \qw & \qw \\
    & \gate{H} & \gate{H} & \qw \\
    & \qw & \qw & \qw
  } & \text{\normalsize \eqref{eq:p4}} \\[0.5\baselineskip]
  & \qquad \raisebox{-3mm}{=} \qquad
  \Qcircuit @C=.7em @R=0.7em {
    & \qw & \qw \\ \\
    & \qw & \qw \\ \\
    & \qw & \qw
  } & \text{\normalsize \ref{eq:a4}} \label{eq:b3}
\end{align*}}
\item We defer the proof of this identity to Appendix~\ref{proof:b4}.\label{eq:b4}
  \begin{appendixproof}[Proof of \ref{eq:b4}]
    \label{proof:b4}
{\scriptsize\begin{align*}
  & \Qcircuit @C=.7em @R=0.7em @!R {
    & \qw         & \qw      & \qw         & \qw & \control\qw & \qw
    & \qw         & \qw      & \qw         & \qw & \control\qw & \qw
    & \qw         & \qw      & \qw         & \qw & \control\qw & \qw \\
    & \ctrl{1}   & \gate{H} & \ctrl{1}   & \gate{H} & \ctrl{-1} & \qw
    & \ctrl{1}   & \gate{H} & \ctrl{1}   & \gate{H} & \ctrl{-1} & \qw
    & \ctrl{1}   & \gate{H} & \ctrl{1}   & \gate{H} & \ctrl{-1} & \qw \\
    & \control\qw & \gate{H} & \control\qw & \gate{H} & \qw & \qw
    & \control\qw & \gate{H} & \control\qw & \gate{H} & \qw & \qw
    & \control\qw & \gate{H} & \control\qw & \gate{H} & \qw & \qw
  } \\[0.5\baselineskip]
  & \qquad \raisebox{-5mm}{=} \qquad
  \Qcircuit @C=.7em @R=0.7em @!R {
    & \qw & \qw
    & \qw         & \qw      & \qw         & \qw & \control\qw
    & \qw & \qw
    & \qw         & \qw      & \qw         & \qw & \control\qw
    & \qw & \qw
    & \qw         & \qw      & \qw         & \qw & \control\qw
    & \qw & \qw & \qw \\
    & \gate{H} & \gate{H}
    & \ctrl{1}   & \gate{H} & \ctrl{1}   & \gate{H} & \ctrl{-1}
    & \gate{H} & \gate{H}
    & \ctrl{1}   & \gate{H} & \ctrl{1}   & \gate{H} & \ctrl{-1}
    & \gate{H} & \gate{H}
    & \ctrl{1}   & \gate{H} & \ctrl{1}   & \gate{H} & \ctrl{-1}
    & \gate{H} & \gate{H} & \qw \\
    & \qw & \qw
    & \control\qw & \gate{H} & \control\qw & \gate{H} & \qw
    & \qw & \qw
    & \control\qw & \gate{H} & \control\qw & \gate{H} & \qw
    & \qw & \qw
    & \control\qw & \gate{H} & \control\qw & \gate{H} & \qw
    & \qw & \qw & \qw
  } & \text{\normalsize \ref{eq:a4}} \\[0.5\baselineskip]
  & \qquad \raisebox{-5mm}{=} \qquad
  \Qcircuit @C=.7em @R=0.7em @!R {
    & \qw
    & \qw      & \qw       & \ctrl{1}
    & \qw      & \qw       & \ctrl{1}
    & \qw      & \qw       & \ctrl{1}
    & \qw & \qw \\
    & \gate{H}
    & \targ    & \ctrl{1} & \targ
    & \targ    & \ctrl{1} & \targ
    & \targ    & \ctrl{1} & \targ
    & \gate{H} & \qw \\
    & \qw
    & \ctrl{-1} & \targ     & \qw
    & \ctrl{-1} & \targ     & \qw
    & \ctrl{-1} & \targ     & \qw
    & \qw & \qw
  } & \text{\normalsize{(Cor.~\ref{corr:circuitid})}} \\[0.5\baselineskip]
  & \qquad \raisebox{-5mm}{=} \qquad
  \Qcircuit @C=.7em @R=0.7em @!R {
    & \qw & \qw & \qw \\
    & \gate{H} & \gate{H} & \qw \\
    & \qw & \qw & \qw
  } & \text{\normalsize \eqref{eq:p5}} \\[0.5\baselineskip]
  & \qquad \raisebox{-3mm}{=} \qquad
  \Qcircuit @C=.7em @R=0.7em {
    & \qw & \qw \\ \\
    & \qw & \qw \\ \\
    & \qw & \qw
  } & \text{\normalsize \ref{eq:a4}}
\end{align*}}
\end{appendixproof}
\end{enumerate}

From this follows an equational completeness result for Clifford circuits of
arbitrary size.

\begin{theorem}[Full abstraction for Clifford circuits]\label{thm:ncliff}
  Let $c_1$ and $c_2$ be \SPiLang\ terms representing Clifford circuits of
  arbitrary size. Then $\sem{c_1} = \sem{c_2}$ iff $\stsem{c_1} =
  \stsem{c_2}$.
\end{theorem}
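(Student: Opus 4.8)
The plan is to bootstrap from the $\le 2$-qubit case (Thm.~\ref{thm:2cliff}) to circuits of arbitrary size by invoking the completeness theorem of \citet{selinger:clifford}, now that the four additional $3$-qubit relations (B1)--(B4) have been verified above.

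The soundness direction, $\sem{c_1} = \sem{c_2} \Rightarrow \stsem{c_1} = \stsem{c_2}$, is immediate and in fact holds for arbitrary \SPiLang\ terms: by Prop.~\ref{prop:unitarymodel} the standard model \Unitary\ is itself a model of \SPiLang, so $\stsem{-}$ is just $\sem{-}$ specialised to one particular model, and any equation forced by the axioms of Def.~\ref{def:model} holds there.

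For completeness I would argue as follows. A \SPiLang\ term representing a Clifford circuit is, by definition, a composite (under $\circ$ and $\otimes$) of the generators $\{\omega, \Sg, \Had, \CZ\}$ together with the structural isomorphisms $\alpha_\otimes^{\pm 1}, \lambda_\otimes^{\pm 1}, \rho_\otimes^{\pm 1}, \sigma_\otimes$; unwinding the shorthands of Fig.~\ref{fig:abbrev} (in particular $\CZ = \Ctrl~\Z$ and $\Swap = \sigma_\otimes$) and using rig-groupoid coherence, such a term denotes, in \emph{any} model, exactly the interpretation of a Clifford circuit in the sense of \citet{selinger:clifford}. Now assume $\stsem{c_1} = \stsem{c_2}$. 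Since $\stsem{-}$ assigns to such a term the usual unitary of the underlying Clifford circuit, the two circuits denote the same element of the $n$-qubit Clifford group; by the completeness half of the generators-and-relations theorem of \citet{selinger:clifford}, they are therefore joined by a finite zig-zag of applications of Selinger's relations (C1)--(C15). In the numbering of Figs.~\ref{fig:2qubitclifford} and~\ref{fig:nclifford} these are precisely (A3)--(A13) together with (B1)--(B4) (the extra relations (A1),(A2) are only needed once the $\T$ gate is admitted). Every one of these equations has been shown in Secs.~\ref{subsec:2clifford} and~\ref{subsec:nclifford} to hold in \emph{every} model of \SPiLang, and the structural rewrites implicit in rearranging circuits are valid in any rig groupoid; hence each step of the zig-zag is a legitimate equality between interpretations in an arbitrary model, and $\sem{c_1} = \sem{c_2}$ follows.

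The main obstacle is not any further quantum reasoning --- that has all been discharged in verifying (A3)--(A13) and (B1)--(B4) --- but the bookkeeping that identifies a \SPiLang\ ``representation of a Clifford circuit'' with a circuit in Selinger's sense precisely enough that his completeness theorem applies verbatim, and that lets each of his relation schemas (stated for circuits, modulo the ``obvious'' wire-permutation and reassociation equalities) be matched with one of our verified categorical identities up to coherence. This is routine but needs care, e.g.\ in checking that the coherence isomorphisms used to massage tensor products never smuggle in nontrivial phases.
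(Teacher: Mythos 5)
Your proposal follows essentially the same route as the paper: soundness holds because \Unitary\ is itself a model of \SPiLang\ (Prop.~\ref{prop:unitarymodel}), and completeness is obtained by invoking the generators-and-relations theorem of \citet{selinger:clifford} for arbitrary-size Clifford circuits, since the relations (A3)--(A13) and (B1)--(B4) have all been shown to hold in every model. The paper's own proof is more terse, but the logical content is the same; the explicit spelling-out of the soundness direction and the identification of \SPiLang-representations with Selinger's circuits is exactly what is implicitly being relied upon.
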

\begin{proof}
  The identities \ref{eq:a3}--\ref{eq:a13} and
  \ref{eq:b1}--\ref{eq:b4} are complete for Clifford circuits of
  arbitrary size by \citet[Thm.~7.1]{selinger:clifford}, and have been
  shown above to hold in any model of \SPiLang.
\end{proof}

%%%%%%%%%%%
\subsection{$\le 2$-qubit Clifford+T}
\label{subsec:2clifft}
\begin{figure}
  \tiny\[
  \begin{array}{rcll rcll}
    \text{\normalsize $\T^2$} &=& \text{\normalsize $\Sg$}
    & \text{\normalsize (A14)} &
    \text{\normalsize $(\T \Had \Sg \Sg \Had)^2$} &=&
    \text{\normalsize $\omega \cdot \id$}
    & \text{\normalsize (A15)} \\ \\
    \Qcircuit @C=.7em @R=1.2em {
      & \gate{T} & \ctrl{1}    & \qw \\
      & \qw      & \control\qw & \qw
    } &\raisebox{-2mm}{=}&
    \Qcircuit @C=.7em @R=1.2em {
      & \ctrl{1}    & \gate{T} & \qw \\
      & \control\qw & \qw      & \qw
    } & \text{\raisebox{-2mm}{\normalsize (A16)}} &
    \Qcircuit @C=.7em @R=.7em {
      & \qw      & \ctrl{1}    & \gate{H} & \ctrl{1}    & \gate{H} & \gate{T}
      & \qw \\
      & \gate{H} & \control\qw & \gate{H} & \control\qw & \qw      & \qw
      & \qw
    } &\raisebox{-2mm}{=}&
    \Qcircuit @C=.7em @R=.7em {
      & \qw      & \qw      & \ctrl{1}    & \gate{H} & \ctrl{1}    & \gate{H}
      & \qw \\
      & \gate{T} & \gate{H} & \control\qw & \gate{H} & \control\qw & \qw
      & \qw
    } & \text{\raisebox{-2mm}{\normalsize (A17)}} \\
  \end{array}
  \]
  \tiny\[
  \begin{array}{rcll}
    \Qcircuit @C=.7em @R=.7em {
      & \qw & \ctrl{1} & \qw      & \qw      & \qw          & \ctrlo{1}
      & \qw      & \qw      & \qw          & \qw \\
      & \qw & \targ    & \gate{T} & \gate{H} & \gate{T^{-1}} & \targ
      & \gate{T} & \gate{H} & \gate{T^{-1}} & \qw
    } &\raisebox{-2mm}{=}&
    \Qcircuit @C=.7em @R=.7em {
      & \qw      & \qw      & \qw          & \ctrlo{1} & \qw      & \qw
      & \qw          & \ctrl{1} & \qw \\
      & \gate{T} & \gate{H} & \gate{T^{-1}} & \targ     & \gate{T} & \gate{H}
      & \gate{T^{-1}} & \targ    & \qw
    } & \text{\raisebox{-2mm}{\normalsize (A18)}} \\
    \Qcircuit @C=.7em @R=.7em {
      & \qw & \ctrl{1} & \qw      & \qw      & \qw      & \qw      & \qw
      & \ctrlo{1} & \qw      & \qw      & \qw          & \qw      & \qw & \qw \\
      & \qw & \targ    & \gate{T} & \gate{H} & \gate{T} & \gate{H} & \gate{T^{-1}}
      & \targ     & \gate{T} & \gate{H} & \gate{T^{-1}} & \gate{H} & \gate{T^{-1}}
      & \qw
    } &\raisebox{-2mm}{=}&
    \Qcircuit @C=.7em @R=.7em {
      & \qw & \qw      & \qw      & \qw      & \qw      & \qw
      & \ctrlo{1} & \qw      & \qw      & \qw          & \qw      & \qw
      & \ctrl{1} & \qw \\
      & \qw & \gate{T} & \gate{H} & \gate{T} & \gate{H} & \gate{T^{-1}}
      & \targ     & \gate{T} & \gate{H} & \gate{T^{-1}} & \gate{H} & \gate{T^{-1}}
      & \targ & \qw
    }
    & \text{\raisebox{-2mm}{\normalsize (A19)}} \\ \\
    \Qcircuit @C=.7em @R=.7em {
      & \qw & \ctrlo{1} & \qw      & \ctrl{1} & \gate{H}   & \gate{T}
      & \gate{H} & \qw \\
      & \qw & \gate{H}  & \gate{T} & \gate{H} & \ctrlo{-1} & \qw
      & \ctrl{-1} & \qw
    } &\raisebox{-2mm}{=}&
    \Qcircuit @C=.7em @R=.7em {
      & \qw & \gate{H}  & \gate{T} & \gate{H} & \ctrlo{1} & \qw
      & \ctrl{1} & \qw \\
      & \qw & \ctrlo{-1} & \qw      & \ctrl{-1} & \gate{H}   & \gate{T}
      & \gate{H} & \qw
    } & \text{\raisebox{-2mm}{\normalsize (A20)}}
  \end{array}
  \]
  \caption{The remaining identities which, along with (A1)--(A13) of
    Fig.~\ref{fig:2qubitclifford}, form a sound and complete
    equational theory of $\le 2$-qubit Clifford+T
    circuits~\cite{bianselinger:cliffordt}.}
  \label{fig:2qubitcliffordt}
\end{figure}
We extend Thm.~\ref{thm:2cliff} to show that
models of \SPiLang\ are sound and complete for all $\le 2$-qubit
Clifford+T circuits. We do this by showing the remaining identities of
\citet{bianselinger:cliffordt} (see Fig.~\ref{fig:2qubitcliffordt}),
which, together with \ref{eq:a1}--\ref{eq:a13} from
Sec.~\ref{subsec:2clifford}, are equationally sound and complete for
$\le 2$-qubit Clifford+T circuits.  Recall that Clifford+T circuits
are those which can be formed using the scalar $\omega$ and gates
$\{\Sg, \Had, \CZ, \T\}$. This leads us to the following definition
of representations of Clifford+T circuits in models of $\SPiLang$:
\begin{definition}
In a model of \SPiLang, a \emph{representation of a Clifford+T
circuit} is any morphism which can be written in terms of morphisms
from the sets $\{\omega, \Sg, \Had, \CZ, \T\}$ and $\{\alpha_\otimes,
\alpha_\otimes^{-1}, \lambda_\otimes, \lambda_\otimes^{-1},
\rho_\otimes, \rho_\otimes^{-1}, \sigma_\otimes\}$, composed
arbitrarily in parallel (using $\otimes$) and in sequence (using
$\circ$). A representation of a $\le 2$-qubit Clifford+T circuit is
one with signature $I \oplus I \to I \oplus I$ or $(I \oplus I)
\otimes (I \oplus I) \to (I \oplus I) \otimes (I \oplus I)$.
\end{definition}
We start by showing an equivalence of representations of negatively
controlled gates, as the definition of $\nCtrl$ in
Fig.~\ref{fig:abbrev} may be considered non-standard. One usually
thinks of a negatively controlled gate as a positively controlled one
conjugated by $\X$ on the control line, and we show that our
definition \nCtrl{} is a convenient reduced form for stating this.
\citet{bianselinger:cliffordt} uses yet another representation of
negatively controlled $\X$ and $\Had$, which we also show to be
equivalent.
\begin{lemma}[Negative control]\label{lem:nctrl}
  Let $f : X \to X$ be a map in a rig category. Then
  \begin{enumerate}[label={(\roman*)}]
  \item $\nCtrl~f = \X \otimes \id \circ \Ctrl~f \circ \X \otimes \id$,
  \item $\nCtrl~f = \Ctrl~f \circ \id \otimes f$ when $f$ is involutive.
  \end{enumerate}
\end{lemma}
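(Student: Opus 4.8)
The plan is to unfold both statements into the ``abstract block-matrix'' picture supplied by $\Mat$, where a controlled gate is $\Mat^{-1} \circ (\id \oplus f) \circ \Mat$ and a negatively controlled one is $\Mat^{-1} \circ (f \oplus \id) \circ \Mat$, and then push the gate sitting on the control wire through $\Mat$ using the transport laws already established in Lemma~\ref{lem:mat}.

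For (i), I would start from the right-hand side $\X \otimes \id \circ \Ctrl~f \circ \X \otimes \id$ and expand $\Ctrl~f = \Mat^{-1} \circ (\id \oplus f) \circ \Mat$. Applying Lemma~\ref{lem:mat}(viii) — in the evident generalization in which $\id_{I \oplus I}$ on the target wire is replaced by $\id_X$ for the arbitrary object $X$ that $f$ acts on, which holds by exactly the same Laplaza coherence argument — gives $\Mat \circ (\X \otimes \id) = \sigma_\oplus \circ \Mat$, and, taking inverses and using that $\X$ and $\sigma_\oplus$ are involutive, $(\X \otimes \id) \circ \Mat^{-1} = \Mat^{-1} \circ \sigma_\oplus$. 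Substituting both turns the expression into $\Mat^{-1} \circ \sigma_\oplus \circ (\id \oplus f) \circ \sigma_\oplus \circ \Mat$. Naturality of $\sigma_\oplus$ rewrites $\sigma_\oplus \circ (\id \oplus f)$ as $(f \oplus \id) \circ \sigma_\oplus$, and the two adjacent copies of $\sigma_\oplus$ then cancel since $\sigma_\oplus \circ \sigma_\oplus = \id$, leaving $\Mat^{-1} \circ (f \oplus \id) \circ \Mat$, which is $\nCtrl~f$ by definition.

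For (ii), assuming $f \circ f = \id$, I would start from $\Ctrl~f \circ (\id \otimes f) = \Mat^{-1} \circ (\id \oplus f) \circ \Mat \circ (\id \otimes f)$ and use Lemma~\ref{lem:mat}(i), $\Mat \circ (\id \otimes f) = (f \oplus f) \circ \Mat$, to obtain $\Mat^{-1} \circ (\id \oplus f) \circ (f \oplus f) \circ \Mat$. Bifunctoriality of $\oplus$ collapses $(\id \oplus f) \circ (f \oplus f)$ to $(f \oplus (f \circ f)) = (f \oplus \id)$ using involutivity of $f$, so the whole composite is $\Mat^{-1} \circ (f \oplus \id) \circ \Mat = \nCtrl~f$.

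Neither part poses a genuine obstacle. The only points needing a little care are the mild generalization of Lemma~\ref{lem:mat}(viii) from $\id_{I \oplus I}$ to $\id_X$ on the target wire (so that the statement covers a general $f : X \to X$ rather than only single-qubit targets), and the contravariance bookkeeping when transporting $\X \otimes \id$ across $\Mat^{-1}$ in part (i). Everything else is bifunctoriality of $\oplus$, naturality of $\sigma_\oplus$, and the involutivity of $\X$ and $\sigma_\oplus$.
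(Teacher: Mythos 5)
Your proof is correct and matches the paper's argument essentially step-for-step: part (i) expands $\Ctrl$ by definition, pushes $\X\otimes\id$ through $\Mat$ and $\Mat^{-1}$ via Lem.~\ref{lem:mat}(viii), applies naturality of $\sigma_\oplus$, and cancels $\sigma_\oplus\circ\sigma_\oplus$; part (ii) uses Lem.~\ref{lem:mat}(i), bifunctoriality of $\oplus$, and involutivity of $f$. Your remark that Lem.~\ref{lem:mat}(viii) needs to be read with $\id_X$ on the target wire rather than $\id_{I\oplus I}$ is a fair bit of bookkeeping the paper glosses over, but as you note the Laplaza coherence argument is identical, so nothing substantive changes.
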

\begin{proof}
  We derive (i) by
  \begin{align*}
    & \X \otimes \id \circ \Ctrl~f \circ \X \otimes \id \\
    & \quad = \X \otimes \id \circ \Mat^{-1} \circ (\id \oplus f)
    \circ \Mat \circ \X \otimes \id & (\text{definition}~\Ctrl) \\
    & \quad = \Mat^{-1} \circ \sigma_\oplus \circ (\id \oplus f) \circ
    \sigma_\oplus \circ \Mat & (\text{Lem.~\ref{lem:mat} (viii)}) \\
    & \quad = \Mat^{-1} \circ (f \oplus \id) \circ \sigma_\oplus \circ
    \sigma_\oplus \circ \Mat & (\text{naturality}~\sigma_\oplus) \\
    & \quad = \Mat^{-1} \circ (f \oplus \id) \circ \Mat &
    (\sigma_\oplus~\text{involutive}) \\
    & \quad = \nCtrl~f & (\text{definition}~\nCtrl)
  \end{align*}
  and we show (ii) by
  \begin{align*}
    \Ctrl~f \circ (\id \otimes f)
    & = \Mat^{-1} \circ (\id \oplus f) \circ \Mat \circ (\id \otimes
    f) & (\text{definition}~\Ctrl) \\
    & = \Mat^{-1} \circ (\id \oplus f) \circ (f \oplus f) \circ \Mat &
    (\text{Lem.~\ref{lem:mat} (i)}) \\
    & = \Mat^{-1} \circ (f \oplus (f \circ f)) \circ \Mat &
    (\text{bifunctoriality}~\oplus) \\
    & = \Mat^{-1} \circ (f \oplus \id) \circ \Mat & (f~\text{involutive}) \\
    & = \nCtrl~f & (\text{definition}~\nCtrl)
  \end{align*}
\end{proof}
%Thus, we are justified in using $\nCtrl~f$ to represent the negatively
%controlled $f$-gate.
We are now ready to derive the remaining identities.

\begin{enumerate}[label={(A\arabic*)}]\setcounter{enumi}{13}
\item By Lem.~\ref{lem:gates} and definition of $\Sg$ and $\T$,
  $\T^2 = \Pg(\omega)^2 = \Pg(\omega^2) = \Sg$. \label{eq:a14}
\item We derive
  \begin{align*}
    (\T \circ \Had \circ \Sg \circ \Sg \circ \Had)^2
    & = (\T \circ \Had \circ \Z \circ \Had)^2 & (\Sg^2 = \Z) \\
    & = (\T \circ \X)^2 & (\text{Lem.~\ref{lem:had}}) \\
    & = \T \circ \X \circ \T \circ \X & (\text{expand}) \\
    & = (\omega \sprod \X) \circ \X & (\text{Lem.~\ref{lem:gates}}) \\
    & = \omega \sprod (\X \circ \X) & (\text{Prop.~\ref{prop:scalars}}) \\
    & = \omega \sprod \id & (\X^2 = \id)
  \end{align*} \label{eq:a15}
\item This is a special case of commutativity of phase gates:
  \begin{align*}
    \Ctrl~\Z \circ (\T \otimes \id)
    & = \Swap \circ \Ctrl~\Z \circ \Swap \circ (\T \otimes \id) &
    (\text{Lem.~\ref{lem:mat}}) \\
    & = \Swap \circ \Ctrl~\Z \circ (\id \otimes \T) \circ \Swap &
    (\text{naturality}~\Swap) \\
    & = \Swap \circ (\id \otimes \T) \circ \Ctrl~\Z \circ \Swap &
    (\text{Lem.~\ref{lem:mat}}) \\
    & = (\T \otimes \id) \circ \Swap \circ \Ctrl~\Z \circ \Swap &
    (\text{naturality}~\Swap) \\
    & = (\T \otimes \id) \circ \Ctrl~\Z & (\text{Lem.~\ref{lem:mat}})
  \end{align*} \label{eq:a16}
\item By first applying circuit identities from
  Cor.~\ref{corr:circuitid}, this identity amounts to showing that
  {\scriptsize\begin{equation*}
  \Qcircuit @C=.7em @R=0.7em @!R {
    & \ctrl{1} & \targ & \gate{T} & \qw \\
    & \targ & \ctrl{-1} & \qw & \qw
  } \qquad \raisebox{-2mm}{=} \qquad
  \Qcircuit @C=.7em @R=0.7em @!R {
    & \qw      & \ctrl{1} & \targ     & \qw \\
    & \gate{T} & \targ    & \ctrl{-1} & \qw
  }
  \end{equation*}}
  We then derive this:
  {\small\begin{align*}
    &(\T \otimes \id) \circ \Swap \circ \Ctrl~\X \circ \Swap \circ \Ctrl~\X \\
    & \quad = (\T \otimes \id) \circ \Ctrl~\X \circ \Ctrl~\X \circ \Swap \circ
    \Ctrl~\X \circ \Swap \circ \Ctrl~\X & ((\Ctrl~\X)^2 = \id) \\
    & \quad = (\T \otimes \id) \circ (\id \otimes \Had) \circ \Ctrl~\Z
    \circ (\id \otimes \Had) \circ \Ctrl~\X \circ \Swap \circ \Ctrl~\X
    \circ \Swap \circ \Ctrl~\X & (\text{Lem.~\ref{lem:ctrlh}}) \\
    & \quad = (\id \otimes \Had) \circ (\T \otimes \id) \circ \Ctrl~\Z
    \circ (\id \otimes \Had) \circ \Ctrl~\X \circ \Swap \circ \Ctrl~\X
    \circ \Swap \circ \Ctrl~\X & (\text{bifunctoriality}~\oplus) \\
    & \quad = (\id \otimes \Had) \circ \Ctrl~\Z \circ (\T \otimes \id)
    \circ (\id \otimes \Had) \circ \Ctrl~\X \circ \Swap \circ \Ctrl~\X
    \circ \Swap \circ \Ctrl~\X & \ref{eq:a16} \\
    & \quad = (\id \otimes \Had) \circ \Ctrl~\Z \circ (\id \otimes
    \Had) \circ (\T \otimes \id) \circ \Ctrl~\X \circ \Swap \circ
    \Ctrl~\X \circ \Swap \circ \Ctrl~\X &
    (\text{bifunctoriality}~\oplus) \\
    & \quad = \Ctrl~\X \circ (\T \otimes \id) \circ \Ctrl~\X \circ \Swap \circ
    \Ctrl~\X \circ \Swap \circ \Ctrl~\X &
    (\text{Lem.~\ref{lem:ctrlh}}) \\
    & \quad = \Ctrl~\X \circ (\T \otimes \id) \circ \Swap &
    \eqref{eq:p6} \\
    & \quad = \Ctrl~\X \circ \Swap \circ (\id \otimes \T) &
    (\text{naturality}~\Swap) \\
    & \quad = \Ctrl~\X \circ \Ctrl~\X \circ \Swap \circ \Ctrl~\X \circ
    \Swap \circ \Ctrl~\X \circ (\id \otimes \T) &
    \eqref{eq:p6} \\
    & \quad = \Swap \circ \Ctrl~\X \circ
    \Swap \circ \Ctrl~\X \circ (\id \otimes \T) &
    ((\Ctrl~\X)^2 = \id)
  \end{align*}}\label{eq:a17}
\item \label{eq:a18} As noted by \citet{bianselinger:cliffordt}, this identity and
  the next are both of the form
  {\small\begin{equation*}
  \Qcircuit @C=.7em @R=0.7em @!R {
    & \ctrl{1} & \ctrlo{1} & \qw \\
    & \gate{U} & \gate{W} & \qw
  } \qquad \raisebox{-2mm}{=} \qquad
  \Qcircuit @C=.7em @R=0.7em @!R {
    & \ctrlo{1} & \ctrl{1} & \qw \\
    & \gate{W} & \gate{U} & \qw
  }
  \end{equation*}}
  for some $U : I \oplus I \to I \oplus
  I$ and $W : I \oplus I \to I \oplus I$. This is because
  \begin{align*}
    & \id \otimes g^{-1} \circ \nCtrl~f \circ \id \otimes g \\
    & \quad = \id \otimes g^{-1} \circ \Mat^{-1} \circ (f \oplus \id)
    \circ \Mat \circ \id \otimes g & (\text{definition}~\nCtrl) \\
    & \quad = \Mat^{-1} \circ (g^{-1} \oplus g^{-1}) \circ (f \oplus \id)
    \circ (g \oplus g) \circ \Mat & (\text{Lem.~\ref{lem:mat} (i)}) \\
    & \quad = \Mat^{-1} \circ \circ ((g^{-1} \circ f \circ g) \oplus
    (g^{-1} \circ g) \circ \Mat & (\text{bifunctoriality}~\oplus) \\
    & \quad = \Mat^{-1} \circ ((g^{-1} \circ f \circ g) \oplus
    \id) \circ \Mat & (g~\text{invertible})
  \end{align*}
  In other words, conjugating a negatively controlled $f$-gate by $g$
  on the target line yields a negatively controlled $g^{-1} \circ f
  \circ g$-gate (idem for positively
  controlled gates). Thus, it suffices to show that positively
  controlled gates commute with negatively controlled gates.
  \begin{align*}
    & \Ctrl~f \circ \nCtrl~g \\
    & = \Mat^{-1} \circ (\id \oplus f) \circ \Mat \circ \Mat^{-1}
    \circ (g \oplus \id) \circ \Mat & (\text{definition}~\Ctrl, \nCtrl) \\
    & = \Mat^{-1} \circ (\id \oplus f)
    \circ (g \oplus \id) \circ \Mat & (\Mat~\text{invertible}) \\
    & = \Mat^{-1} \circ (g \oplus \id) \circ (\id \oplus f) \circ \Mat
    & (\text{bifunctoriality}~\oplus) \\
    & = \Mat^{-1} \circ (g \oplus \id) \circ \Mat \circ \Mat^{-1}
    \circ (\id \oplus f) \circ \Mat & (\Mat~\text{invertible}) \\
    & = \nCtrl~g \circ \Ctrl~f &
    (\text{definition}~\Ctrl, \nCtrl)
  \end{align*}
\item \label{eq:a19} As above.
\item \label{eq:a20} We defer the derivation of this identity to Appendix~\ref{ap:a20}.
  \begin{appendixproof}[Proof of ~\ref{eq:a20}]
    \label{ap:a20}
  To start, this identity involves a controlled Hadamard
  gate, which by \cite{bianselinger:cliffordt} is taken as the shorthand
  {\scriptsize \begin{equation*}
  \Qcircuit @C=.7em @R=0.7em @!R {
  & \ctrl{1} & \qw \\
  & \gate{H} & \qw
  } \qquad \raisebox{-3mm}{=} \qquad
  \Qcircuit @C=.7em @R=0.7em @!R {
  & \qw & \qw & \qw & \ctrl{1} & \qw & \qw & \qw & \qw \\
  & \gate{S} & \gate{H} & \gate{T} & \targ & \gate{T^{-1}}
  & \gate{H} & \gate{S^{-1}} & \qw
  }
  \end{equation*}}
  Since this representation is very inconvenient, we start by
  showing that it is equal to the far simpler $\Ctrl~\Had$. Since,
  as previously observed regarding controlled gates conjugated by other
  gates on the target line,
  {\scriptsize \begin{equation*}
    \Qcircuit @C=.7em @R=0.7em @!R {
    & \qw & \qw & \qw & \ctrl{1} & \qw & \qw & \qw & \qw \\
    & \gate{S} & \gate{H} & \gate{T} & \targ & \gate{T^{-1}}
    & \gate{H} & \gate{S^{-1}} & \qw
    }
  \end{equation*}}
  is a controlled $\Sg^{-1} \circ \Had \circ \T^{-1} \circ \X \circ \T
  \circ \Had \circ \Sg$ gate, it suffices to show that $\Sg^{-1} \circ
  \Had \circ \T^{-1} \circ \X \circ \T \circ \Had \circ \Sg$ is
  nothing more than $\Had$, which follows by
  \begin{align*}
    \Sg^{-1} \circ \Had \circ \T^{-1} \circ \X \circ \T \circ \Had
    \circ \Sg & =
    \Sg \circ \Z \circ \Had \circ \Z \circ \Sg \circ \T \circ \X \circ
    \T \circ \Had \circ \Sg & (\text{Lem.~\ref{lem:gates} (iv)}) \\
    & = \Sg \circ \Z \circ \Had \circ \Z \circ \Sg \circ (\omega
    \sprod \X) \circ \Had \circ \Sg & (\text{Lem.~\ref{lem:gates} (vi)}) \\
    & = \omega \sprod \Sg \circ \Z \circ \Had \circ \Z \circ \Sg \circ
    \X \circ \Had \circ \Sg & (\text{Prop.~\ref{prop:scalars}}) \\
    & = \omega \sprod \Sg \circ \Z \circ \Had \circ \Z \circ \X \circ
    (i \sprod \Z \circ \Sg) \circ \Had \circ \Sg &
    (\text{Lem..~\ref{lem:gates} (vi)}) \\
    & = i \circ \omega \sprod \Sg \circ \Z \circ \Had \circ \Z \circ
    \X \circ \Z \circ \Sg \circ \Had \circ \Sg &
    (\text{Prop.~\ref{prop:scalars}}) \\
    & = i \circ \omega \sprod \Sg \circ \Z \circ \Had \circ (-1 \sprod
    \X) \circ \Sg \circ \Had \circ \Sg &
    (\text{Lem..~\ref{lem:gates} (vi)}) \\
    & = -1 \circ i \circ \omega \sprod \Sg \circ \Z \circ \Had \circ
    \X \circ \Sg \circ \Had \circ \Sg &
    (\text{Prop.~\ref{prop:scalars}}) \\
    & = \omega^{-1} \sprod \Sg \circ \Z \circ \Had \circ
    \X \circ \Sg \circ \Had \circ \Sg &
    (\text{Prop.~\ref{prop:scalars}}) \\
    & = \omega^{-1} \sprod \Sg \circ \Z \circ \Z \circ \Had \circ \Sg
    \circ \Had \circ \Sg & (\text{Lem.~\ref{lem:had}}) \\
    & = \omega^{-1} \sprod \Sg \circ \Had \circ \Sg
    \circ \Had \circ \Sg & (\text{Lem.~\ref{lem:gates} (v)}) \\
    & = \omega^{-1} \sprod (\omega \sprod \Had) & (\text{see \ref{eq:a12}}) \\
    & = (\omega^{-1} \circ \omega) \sprod \Had
    & (\text{Prop.~\ref{prop:scalars}}) \\
    & = \Had & (\omega~\text{invertible})
  \end{align*}
  With that shown, we can move on to showing the final identity. We do
  this by showing four smaller identities which, together, imply this
  last identity, namely equations (5)--(8) in \cite{bianselinger:cliffordt}.
  \begin{enumerate}[label={(\roman*)}]
  \item Directly by Lem.~\ref{lem:mat} (v).
  \item This is straightforwardly derived as
    \begin{align*}
      & \Ctrl~\T \circ \nCtrl~\T \\
      & \quad = \Mat^{-1} \circ (\id \oplus \T) \circ \Mat \circ \Mat^{-1}
      \circ (\T \oplus \id) \circ \Mat & (\text{definition}~\Ctrl,\nCtrl) \\
      & \quad = \Mat^{-1} \circ (\id \oplus \T)
      \circ (\T \oplus \id) \circ \Mat
      & (\Mat~\text{invertible}) \\
      & \quad = \Mat^{-1} \circ (\T \oplus \T) \circ \Mat &
      (\text{bifunctoriality}~\oplus) \\
      & \quad = \Mat^{-1} \circ \Mat \circ (\id \otimes \T) &
      (\text{Lem.~\ref{lem:mat} (1)}) \\
      & \quad = \id \otimes \T & (\Mat~\text{invertible})
    \end{align*}
  \item We first see that
    {\small\begin{align*}
      & \Swap \circ \nCtrl~\T \circ \Swap \circ (\id \otimes \T) \\
      & \quad = \Swap \circ \Mat^{-1} \circ (\T \oplus \id) \circ \Mat
      \circ \Swap \circ (\id \otimes \T) & (\text{def.}~\nCtrl) \\
      & \quad = \Mat^{-1} \circ \Midswap \circ (\T \oplus \id) \circ
      \Midswap \circ \Mat \circ (\id \otimes \T)
      & (\text{Lem.~\ref{lem:mat} (ii, iii)}) \\
      & \quad = \Mat^{-1} \circ \Midswap \circ (\T \oplus \id) \circ
      \Midswap \circ (\T \oplus \T) \circ \Mat
      & (\text{Lem.~\ref{lem:mat} (i)}) \\
      & \quad = \Mat^{-1} \circ \Midswap \circ ((\id \oplus \omega)
      \oplus \id) \circ \Midswap \circ (\T \oplus \T) \circ \Mat &
      (\text{def.}~\T) \\
      & \quad = \Mat^{-1} \circ ((\id \oplus \id) \oplus (\omega
      \oplus \id)) \circ \Midswap \circ \Midswap \circ (\T \oplus \T)
      \circ \Mat & (\text{naturality}~\Midswap) \\
      & \quad = \Mat^{-1} \circ ((\id \oplus \id) \oplus (\omega
      \oplus \id)) \circ (\T \oplus \T)
      \circ \Mat & (\Midswap~\text{invertible}) \\
      & \quad = \Mat^{-1} \circ ((\id \oplus \omega) \oplus (\omega
      \oplus \omega)) \circ \Mat & (\text{def.}~\T, \text{bifunctoriality}~\oplus)
    \end{align*}}
    and then derive
    {\begin{align*}
      & \nCtrl~\T \circ (\T \otimes \id) \\
      & \quad = \nCtrl~\T \circ \Swap \circ
      (\id \otimes \T) \circ \Swap & (\text{naturality}~\Swap)\\
      & \quad = \Swap \circ \Swap \circ \nCtrl~\T \circ \Swap \circ (\id
      \otimes \T) \circ \Swap & (\Swap~\text{invertible})\\
      & \quad = \Swap \circ \Mat^{-1} \circ ((\id \oplus \omega) \oplus
      (\omega \oplus \omega)) \circ \Mat \circ \Swap & (\text{above}) \\
      & \quad = \Swap \circ \Mat^{-1} \circ ((\id \oplus \omega) \oplus
      (\omega \oplus \omega)) \circ \Midswap \circ \Mat
      & (\text{Lem.~\ref{lem:mat} (ii)}) \\
      & \quad = \Swap \circ \Mat^{-1} \circ \Midswap \circ ((\id
      \oplus \omega) \oplus (\omega \oplus \omega)) \circ \Mat &
      (\text{naturality}~\Midswap) \\
      & \quad = \Swap \circ \Swap \circ \Mat^{-1} \circ ((\id \oplus
      \omega) \oplus (\omega \oplus \omega)) \circ \Mat
      & (\text{Lem.~\ref{lem:mat} (iii)}) \\
      & \quad = \Mat^{-1} \circ ((\id \oplus \omega) \oplus
      (\omega \oplus \omega)) \circ \Mat & (\Swap~\text{invertible}) \\
      & \quad = \Swap \circ \nCtrl~\T \circ \Swap \circ (\id \otimes \T)
      & (\text{above})
    \end{align*}}
  \item We derive
    \begin{align*}
      & \Ctrl~\Had \circ (\id \otimes \T) \circ \nCtrl~\Had \\
      & \quad = \Mat^{-1} \circ (\id \oplus \Had) \circ \Mat \circ (\id
      \otimes \T) \circ \Mat^{-1} \circ (\Had \oplus \id) \circ \Mat
      & (\text{def.}~\Ctrl, \nCtrl)\\
      & \quad = \Mat^{-1} \circ (\id \oplus \Had) \circ (\T \oplus \T)
      \circ \Mat \circ \Mat^{-1} \circ (\Had \oplus \id) \circ \Mat
      & (\text{Lem.~\ref{lem:mat} (1)})\\
      & \quad = \Mat^{-1} \circ (\id \oplus \Had) \circ (\T \oplus \T)
      \circ (\Had \oplus \id) \circ \Mat
      & (\Mat~\text{invertible})\\
      & \quad = \Mat^{-1} \circ ((\T \circ \Had) \oplus (\Had \circ
      \T)) \circ \Mat
      & (\text{bifunctoriality}~\oplus)\\
      & \quad = \Mat^{-1} \circ (\T \oplus \id) \circ (\Had \oplus
      \Had) \circ (\id \oplus \T) \circ \Mat
      & (\text{bifunctoriality}~\oplus)\\
      & \quad = \Mat^{-1} \circ (\T \oplus \id) \circ (\Had \oplus
      \Had) \circ \Mat \circ \Mat^{-1} \circ (\id \oplus \T) \circ \Mat
      & (\Mat~\text{invertible})\\
      & \quad = \Mat^{-1} \circ (\T \oplus \id) \circ \Mat \circ (\id
      \otimes \Had) \circ \Mat^{-1} \circ (\id \oplus \T) \circ \Mat
      & (\text{Lem.~\ref{lem:mat} (1)})\\
      & \quad = \nCtrl~\T \circ (\id \otimes \Had) \circ \Ctrl~\T
      & (\text{def.}~\Ctrl, \nCtrl)
    \end{align*}
  \end{enumerate}
  \end{appendixproof}
\end{enumerate}

%Combining these with the identities for Clifford circuits, we obtain
%equational completeness for Clifford+T circuits of at most two qubits.
Summing up:

\begin{theorem}\label{thm:clifft2}
  Let $c_1$ and $c_2$ be \SPiLang\ terms representing Clifford+T circuits
  of at most two qubits. Then $\sem{c_1} = \sem{c_2}$ iff $\stsem{c_1}
  = \stsem{c_2}$.
\end{theorem}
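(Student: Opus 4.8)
The plan is to mirror the proof of Theorem~\ref{thm:2cliff}. The key external input is the result of \citet{bianselinger:cliffordt} that the equations (A1)--(A20) displayed in Figures~\ref{fig:2qubitclifford} and~\ref{fig:2qubitcliffordt} form a sound and \emph{complete} equational theory for $\le 2$-qubit Clifford+T circuits: two such circuits built from $\{\omega,\Sg,\Had,\CZ,\T\}$ and the structural isomorphisms denote the same unitary exactly when they are connected by a finite chain of rewrites, each an instance of (A1)--(A20) or a rig-groupoid coherence. Since a \SPiLang\ term representing a $\le 2$-qubit Clifford+T circuit is, by the definition in Section~\ref{subsec:2clifft}, precisely such a circuit expression, $\stsem{-}$ is its denotation as a unitary, and I can feed this completeness theorem the circuits denoted by $c_1$ and $c_2$.

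First I would dispatch the forward direction: if $\sem{c_1} = \sem{c_2}$ then, by definition of $\sem{-}$, the equation holds in every model of \SPiLang; by Proposition~\ref{prop:unitarymodel}, \Unitary\ is such a model, so $\stsem{c_1} = \stsem{c_2}$. For the converse I would argue: if $\stsem{c_1} = \stsem{c_2}$, the Bian--Selinger completeness theorem yields a rewrite chain from $c_1$ to $c_2$ whose every step is an instance of one of (A1)--(A20) or a coherence law of rig groupoids. Each of (A1)--(A13) has been shown to hold in an arbitrary model of \SPiLang\ in Section~\ref{subsec:2clifford}, each of (A14)--(A20) in the derivations immediately above, and the rig-groupoid coherences hold by construction. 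Since $\sem{-}$ is an interpretation compatible with $\circ$, $\oplus$, $\otimes$ and with inverses, equality of interpretations is a congruence, so the chain transports to $\sem{c_1} = \sem{c_2}$.

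The substantive difficulty is not in this assembly but in the prerequisite verifications, which are already in place. The most delicate of these is reconciling \citet{bianselinger:cliffordt}'s concrete presentations of negatively-controlled $\X$ and $\Had$ and of controlled-$\Had$ with our abbreviations \nCtrl\ and $\Ctrl~\Had$, which is exactly what Lemma~\ref{lem:nctrl} and the reduction carried out at the start of the proof of (A20) handle. A minor additional check, analogous to the one cited for the Clifford case, is that the completeness result specialises correctly to the two signatures $I \oplus I \to I \oplus I$ and $(I \oplus I)\otimes(I \oplus I) \to (I \oplus I)\otimes(I \oplus I)$ allowed by the definition of a $\le 2$-qubit representation.
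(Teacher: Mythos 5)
Your proposal is correct and follows the same route as the paper's proof: cite \citet{bianselinger:cliffordt} for completeness of (A1)--(A20) for $\le 2$-qubit Clifford+T circuits, note that these have all been verified in any model of \SPiLang{} (Sections~\ref{subsec:2clifford} and~\ref{subsec:2clifft}), and observe that soundness is immediate from Proposition~\ref{prop:unitarymodel}. The paper's proof is a one-line version of your argument, leaving the soundness direction and the congruence-closure step implicit; your explicit mention of Lemma~\ref{lem:nctrl} and the (A20) reduction as the places where Bian--Selinger's concrete gate presentations are reconciled with the \SPiLang{} abbreviations accurately identifies where the real work was done.
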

\begin{proof}
  \ref{eq:a1}--\ref{eq:a20} are sound and complete for Clifford+T
  circuits of at most two qubits~\cite{bianselinger:cliffordt}, and
  have been shown to hold in any model of \SPiLang{} (see also
  Thm.~\ref{thm:2cliff}).
\end{proof}

%%%%%%%%%%%
\subsection{Unitaries with entries in $\mathbb{Z}[\frac{1}{2},i]$}
\label{subsec:unitaries_unz}
\begin{figure}
\centering
{\small\[
\begin{array}{rcl l rcl l}
  i_{[j]}^4 &=& \id & \text{(D1)} &
  i_{[k]} X_{[j,k]} &=& X_{[j,k]} i_{[j]}  & \text{(D10)} \\
  \X^2_{[j,k]} &=& \id & \text{(D2)} &
  \X_{[k,l]} \X_{[j,k]} &=& \X_{[j,k]} \X_{[j,l]} & \text{(D11)} \\
  \K^8_{[j,k]} &=& \id & \text{(D3)} &
  \X_{[j,l]} \X_{[k,l]} &=& \X_{[k,l]} \X_{[j,k]} & \text{(D12)} \\
  &&&&
  \K_{[k,l]} \X_{[j,k]} &=& \X_{[j,k]} \K_{[j,l]} & \text{(D13)} \\
  i_{[j]} i_{[k]} &=& i_{[k]} i_{[j]} & \text{(D4)} &
  \K_{[j,l]} \X_{[k,l]} &=& \X_{[k,l]} \K_{[j,k]} & \text{(D14)} \\
  i_{[j]} \X_{[k,l]} &=& \X_{[k,l]} i_{[j]} & \text{(D5)} &
  \K_{[j,k]} i^2_{[k]} &=& \X_{[j,k]} \K_{[j,k]} & \text{(D15)} \\
  i_{[j]} \K_{[k,l]} &=& \K_{[k,l]} i_{[j]} & \text{(D6)} &
  \K_{[j,k]} i^3_{[k]} &=& i_{[k]} \K_{[j,k]} i_{[k]} \K_{[j,k]} & \text{(D16)} \\
  \X_{[j,k]} \X_{[l,m]} &=& \X_{[l,m]} \X_{[j,k]} & \text{(D7)} &
  \K_{[j,k]} i_{[j]} i_{[k]} &=& i_{[j]} i_{[k]} \K_{[j,k]} & \text{(D17)} \\
  \X_{[j,k]} \K_{[l,m]} &=& \K_{[l,m]} \X_{[j,k]} & \text{(D8)} &
  \K^2_{[j,k]} i_{[j]} i_{[k]} &=& \id & \text{(D18)} \\
  \K_{[j,k]} \K_{[l,m]} &=& \K_{[l,m]} \K_{[j,k]} & \text{(D9)} &
  \K_{[j,k]} \K_{[l,m]} \K_{[j,l]} \K_{[k,m]} &=&
  \K_{[j,l]} \K_{[k,m]} \K_{[j,k]} \K_{[l,m]} & \text{(D19)}
\end{array}
\]}
\caption{The sound and complete equational theory of Gaussian dyadic
  rational unitaries due to \cite{bianselinger:unz}.}
\label{fig:unitaries_unz}
\end{figure}
We now show that models of \SPiLang\ are equationally
sound and complete for unitaries with entries from
the ring $\mathbb{Z}[\frac{1}{2},i]$
(\textit{i.e.}, the ring of integers extended with $\frac{1}{2}$ and $i$).
We call these \emph{Gaussian dyadic rational unitaries}. It was shown by
\citet{Amy2020numbertheoretic} that every circuit in the
computationally universal \emph{Gaussian Clifford+T} gate set has an
\emph{exact} representation as a unitary matrix with entries in
$\mathbb{Z}[\frac{1}{2},i]$. A sound and complete equational theory
for these unitaries was given by \citet{bianselinger:unz} (see
Fig.~\ref{fig:unitaries_unz}). In other words, these unitaries are
enough to approximate any other finite quantum computation to any
desired degree of error, and they can be reasoned about using a sound
and complete equational theory.

In this section, we show that this equational theory is subsumed by
that of \SPiLang. Then we show that the
easy direction of \cite{Amy2020numbertheoretic} can also be
internalised in models of \SPiLang, thus proving equational soundness
and completeness for Gaussian Clifford+T circuits.

Unlike the previous results, which concerned circuits (formed using
$\otimes$), this result concerns only matrices (formed using
$\oplus$). This also means that the presentation (in
Fig.~\ref{fig:unitaries_unz}) is quite different. Gaussian dyadic
rational unitaries are generated by $i$, $\X$, and $\K$, where $\K$ is
a variant of the Hadamard gate given by $\K = \omega^{-1} \sprod
\Had$\footnote{Note the slight discrepancy in the literature that
\citet{bianselinger:unz} take $\K = \omega^{-1} \sprod \Had$ while
\citet{Amy2020numbertheoretic} use $\K = \omega \sprod \Had$. However,
since one definition is inverse to the other, and
$U_n(\mathbb{Z}[\frac{1}{2},i])$ is closed under inversion, the
particular choice doesn't matter so long as it is done
consistently.}. In Fig.~\ref{fig:unitaries_unz}, these are
additionally given indices, assumed distinct, corresponding to the
component(s) that the generator is applied to. When proving these
identities, we further assume indices to start from $1$ and to be
consecutive in the order written. We are free to do so since we can
simply conjugate by the appropriate permutation to make it so
(recalling that \PiLang{} can express all permutations).  Likewise, we
will assume identities to be minimal, and only consider the case that
uses the number of distinct indices; any other case reduces to this by
appending an identity morphism as necessary using the direct sum and
conjugating by a permutation. For example, in the context on an $n
\times n$ unitary (\textit{i.e.}, a morphism $I^{\oplus n} \to
I^{\oplus n}$, where $I^{\oplus n}$ is taken as usual to mean the
$n$-fold direct sum of $I$ with itself), $\X_{[2,3]}$ is taken to mean
$\id_I \oplus \X \oplus \id_{I^{\oplus n-3}}$ (up to
associativity). To form $\X_{[2,4]}$ would require us to conjugate
this by the permutation swapping the third and fourth components.
\begin{definition}
In a model of \SPiLang, a \emph{representation of a Gaussian dyadic
rational unitary} is any morphism which can be written in terms of
morphisms from the sets $\{i, \K\}$ and $\{\alpha_\oplus,
\alpha_\oplus^{-1}, \lambda_\oplus, \lambda_\oplus^{-1}, \rho_\oplus,
\rho_\oplus^{-1},\allowbreak \sigma_\oplus\}$, composed arbitrarily in parallel
(using $\oplus$) and in sequence (using $\circ$).
\end{definition}
Note that the above definition permits the use of $\X$ since $\X =
\sigma_\oplus$ by definition. It is additionally important to realise
that the notion of parallel composition is different between the above
the previous definitions concerning circuits, as this uses the direct
sum $\oplus$ for parallel composition whereas the circuits used the
tensor product $\otimes$.

We show that the identities of Fig.~\ref{fig:unitaries_unz} are all
satisfied in any model of \SPiLang.
\begin{enumerate}[label={(D\arabic*)}]
\item $i^4 = (\omega^2)^4 = \omega^8 = \id$ by \ref{ax:omega}. \label{eq:d1}
\item $\X^2 = \sigma_\oplus^2 = \id$ by the rig axioms. \label{eq:d2}
\item We start by seeing that \label{eq:d3}
  \begin{align*}
    \K^2 & = (\omega^{-1} \sprod \Had) \circ (\omega^{-1} \sprod \Had)
    & (\text{def.}~\K) \\
    & = (\omega^{-1} \circ \omega^{-1}) \sprod \Had \circ \Had
    & (\text{Prop.~\ref{prop:scalars}})\\
    & = (\omega^7 \circ \omega^7) \sprod \id
    & \text{\ref{eq:a4}}\\
    & = (\omega^8 \circ \omega^6) \sprod \id
    & (\circ~\text{associative})\\
    & = \omega^6 \sprod \id & \text{\ref{ax:omega}}
  \end{align*}
  and so $K^8 = (K^2)^4 = (\omega^6 \sprod \id)^4 = \omega^{24} \sprod
  \id = (\omega^8 \circ \omega^8 \circ \omega^8) \sprod \id = \id$ by
  \ref{ax:omega} and Prop.~\ref{prop:scalars}.
\item[(D4--9)] These are all instances of bifunctoriality for
  $\oplus$, \textit{i.e.}, $(f \oplus \id) \circ (\id \oplus g) = (\id \oplus
  g) \circ (f \oplus \id)$.\setcounter{enumi}{9}\label{eq:d4-d9}
\item We have \label{eq:d10}
  \begin{align*}
    (\id \oplus i) \circ \X
    & = (\id \oplus i) \circ \sigma_\oplus & (\text{definition}~\X) \\
    & = \sigma_\oplus \circ (i \oplus \id) & (\text{naturality}~\sigma_\oplus) \\
    & = \X \circ (i \oplus \id) & (\text{definition}~\X)
  \end{align*}
\item \label{eq:d11} We show the more general case for any $f$, from which this
  identity follows as the case of $f = \X$. Marking lines in the string
  diagram by indices, we see that this is nothing but
  \begin{center}
    \includegraphics[scale=0.8]{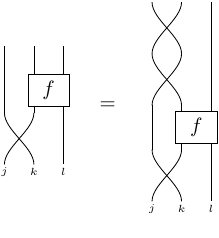}
  \end{center}
  which follows by invertibility of the symmetry.
\item \label{eq:d12} Likewise, we show the more general case for any
  $f$, from which this identity will follows as the case where $f =
  \X$. Marking lines in the string diagram by indices, we get
  \begin{center}
    \includegraphics[scale=0.8]{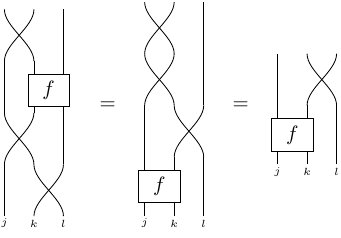}
  \end{center}
  which follows by (respectively) naturality and invertibility of the
  symmetry.
\item \label{eq:d13} This follows by the generalised form of
  \ref{eq:d11} with $f = \K$.
\item \label{eq:d14} This follows by the generalised form of
  \ref{eq:d12} with $f = \K$.
\item \label{eq:d15} We have
  \begin{align*}
    \K \circ \Z
    & = \K \circ \Z \circ \Had \circ \Had & \text{\ref{eq:a4}} \\
    & = \K \circ \Z \circ \Had \circ (\omega \sprod \K) &
    (\text{definition}~\Had) \\
    & = (\omega \sprod \K) \circ \Z \circ \Had \circ \K &
    (\text{Prop.~\ref{prop:scalars}}) \\
    & = \Had \circ \Z \circ \Had \circ \K & (\text{definition}~\Had) \\
    & = \X \circ \K & (\text{Lem.~\ref{lem:had}})
  \end{align*}
\item \label{eq:d16} We reduce
  \begin{align*}
    \K \circ \Z \circ \Sg
    & = \X \circ \K \circ \Sg & \text{\ref{eq:d15}} \\
    & = \X \circ \X \circ \Sg \circ \V \circ \Sg \circ \X \circ \Sg &
    (\text{definition}~\K) \\
    & = \Sg \circ \V \circ \Sg \circ \X \circ \Sg &
    (\X~\text{involutive}) \\
    & = \Sg \circ \V \circ (i \sprod \X) &
    (\text{Lem.~\ref{lem:gates} (vi)}) \\
    & = i \sprod \Sg \circ \V \circ \X &
    (\text{Prop.~\ref{prop:scalars}})
  \end{align*}
  and
  \begin{align*}
    \Sg \circ \K \circ \Sg \circ \K
    & = \Sg \circ \X \circ \Sg \circ \V \circ \Sg \circ \X \circ \Sg
    \circ \X \circ \Sg \circ \V \circ \Sg \circ \X &
    (\text{definition}~\K) \\
    & = (i \sprod \X) \circ \V \circ \Sg \circ \X \circ (i \sprod \X)
    \circ \V \circ \Sg \circ \X
    & (\text{Lem.~\ref{lem:gates} (vi)}) \\
    & = i^2 \sprod \X \circ \V \circ \Sg \circ \X \circ \X \circ \V
    \circ \Sg \circ \X
    & (\text{Prop.~\ref{prop:scalars}}) \\
    & = -1 \sprod \X \circ \V \circ \Sg \circ \V \circ \Sg \circ \X
    & (\X~\text{involutive}) \\
    & = -1 \sprod \X \circ \V \circ (-i \sprod \V \circ \Sg \circ \V) \circ \X
    & \text{\ref{ax:sqy}} \\
    & = -1 \circ -i \sprod \X \circ \V \circ \V \circ \Sg \circ \V \circ \X
    & (\text{Prop.~\ref{prop:scalars}}) \\
    & = i \sprod \X \circ \X \circ \Sg \circ \V \circ \X
    & \text{\ref{ax:v}} \\
    & = i \sprod \Sg \circ \V \circ \X
    & (\X~\text{involutive})
  \end{align*}
  so $\K \circ \Z \circ \Sg = i \sprod \Sg \circ \V \circ \X = \Sg
  \circ \K \circ \Sg \circ \K$.
\item \label{eq:d17} It follows that
  \begin{align*}
    \K \circ (i \oplus i) & = \K \circ (i \sprod (\id \oplus \id))
    & (\text{Prop.~\ref{prop:scalars}}) \\
    & = i \sprod \K \circ \id
    & (\text{bifunctoriality}~\oplus) \\
    & = i \sprod \K
    & (\text{Prop.~\ref{prop:scalars}}) \\
    & = i \sprod (\id \oplus \id) \circ \K
    & (\text{bifunctoriality}~\oplus) \\
    & = (i \oplus i) \circ \K
    & (\text{Prop.~\ref{prop:scalars}})
  \end{align*}
\item \label{eq:d18} We derive
  \begin{align*}
    \K^2 \circ (i \oplus i)
    & = \K^2 \circ (i \sprod (\id \oplus \id))
    & (\text{Prop.~\ref{prop:scalars}}) \\
    & = i \sprod \K^2
    & (\text{Prop.~\ref{prop:scalars}}) \\
    & = i \sprod (\omega^{-1} \sprod \Had) \circ (\omega^{-1} \sprod \Had)
    & (\text{definition}~\K) \\
    & = i \circ \omega^{-1} \circ \omega^{-1} \sprod \Had \circ \Had
    & (\text{Prop.~\ref{prop:scalars}}) \\
    & = i \circ -i \sprod \id
    & \text{\ref{eq:a4}} \\
    & = \id
    & \text{\ref{ax:omega}}
  \end{align*}
\item \label{eq:d19} We derive this final identity by showing that it
  is an instance of bifunctoriality of the tensor product in disguise:
  {\small\begin{align*}
    & \Midswap \circ (\K \oplus \K) \circ \Midswap \circ (\K \oplus \K) \\
    & \quad = \Mat \circ \Mat^{-1} \circ \Midswap \circ (\K \oplus \K) \circ
    \Midswap \circ (\K \oplus \K) \circ \Mat \circ \Mat^{-1}
    & (\Mat~\text{invertible})\\
    & \quad = \Mat \circ \Mat^{-1} \circ \Midswap \circ (\K \oplus \K) \circ
    \Midswap \circ \Mat \circ (\id \otimes \K) \circ \Mat^{-1}
    & (\text{Lem.~\ref{lem:mat}} (i)) \\
    & \quad = \Mat \circ \Mat^{-1} \circ \Midswap \circ (\K \oplus \K) \circ
    \Mat \circ \Swap \circ (\id \otimes \K) \circ \Mat^{-1}
    & (\text{Lem.~\ref{lem:mat}} (ii)) \\
    & \quad = \Mat \circ \Mat^{-1} \circ \Midswap \circ \Mat \circ
    (\id \otimes \K) \circ \Swap \circ (\id \otimes \K) \circ
    \Mat^{-1}
    & (\text{Lem.~\ref{lem:mat}} (i)) \\
    & \quad = \Mat \circ \Mat^{-1} \circ \Mat \circ \Swap \circ
    (\id \otimes \K) \circ \Swap \circ (\id \otimes \K) \circ
    \Mat^{-1}
    & (\text{Lem.~\ref{lem:mat}} (ii)) \\
    & \quad = \Mat \circ \Mat^{-1} \circ \Mat \circ \Swap \circ
    (\id \otimes \K) \circ (\K \otimes \id) \circ \Swap \circ
    \Mat^{-1}
    & (\text{naturality}~\Swap) \\
    & \quad = \Mat \circ \Mat^{-1} \circ \Mat \circ \Swap \circ
    (\K \otimes \id) \circ (\id \otimes \K) \circ \Swap \circ
    \Mat^{-1}
    & (\text{bifunctoriality}~\oplus) \\
    & \quad = \Mat \circ \Mat^{-1} \circ \Mat \circ (\id \otimes \K)
    \circ \Swap \circ (\id \otimes \K) \circ \Swap \circ \Mat^{-1}
    & (\text{naturality}~\Swap) \\
    & \quad = \Mat \circ \Mat^{-1} \circ (\K \oplus \K) \circ \Mat
    \circ \Swap \circ (\id \otimes \K) \circ \Swap \circ \Mat^{-1}
    & (\text{Lem.~\ref{lem:mat}} (i)) \\
    & \quad = \Mat \circ \Mat^{-1} \circ (\K \oplus \K) \circ \Midswap \circ \Mat
    \circ (\id \otimes \K) \circ \Swap \circ \Mat^{-1}
    & (\text{Lem.~\ref{lem:mat}} (ii)) \\
    & \quad = \Mat \circ \Mat^{-1} \circ (\K \oplus \K) \circ \Midswap
    \circ (\K \oplus \K) \circ \Mat \circ \Swap \circ \Mat^{-1}
    & (\text{Lem.~\ref{lem:mat}} (i)) \\
    & \quad = \Mat \circ \Mat^{-1} \circ (\K \oplus \K) \circ \Midswap
    \circ (\K \oplus \K) \circ \Midswap \circ \Mat \circ \Mat^{-1}
    & (\text{Lem.~\ref{lem:mat}} (ii)) \\
    & \quad = (\K \oplus \K) \circ \Midswap
    \circ (\K \oplus \K) \circ \Midswap
    & (\Mat~\text{invertible})
  \end{align*}}
\end{enumerate}

% Since these identities thus hold in any model of \SPiLang,
We obtain yet another equational completeness result:

\begin{theorem}[Full abstraction for Gaussian dyadic rational unitaries]
  \label{thm:comp_unz}
  Let $c_1$ and $c_2$ be \SPiLang\ terms representing unitaries with
  entries in the ring $\mathbb{Z}[\frac{1}{2},i]$. Then $\sem{c_1} =
  \sem{c_2}$ iff $\stsem{c_1} = \stsem{c_2}$.
\end{theorem}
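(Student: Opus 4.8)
The plan is to proceed exactly as in the proofs of Thm.~\ref{thm:ncliff} and Thm.~\ref{thm:clifft2}: reduce the claim to the sound and complete equational theory of $U_n(\mathbb{Z}[\frac{1}{2},i])$ due to \citet{bianselinger:unz}, every identity \ref{eq:d1}--\ref{eq:d19} of which has just been shown to hold in every model of \SPiLang. The ``soundness'' half of the biconditional ($\sem{c_1} = \sem{c_2}$ implies $\stsem{c_1} = \stsem{c_2}$) is immediate: by Prop.~\ref{prop:unitarymodel} the standard model \Unitary\ is a model of \SPiLang, so any equality of \SPiLang\ terms that is forced by the axioms of Def.~\ref{def:model} holds, in particular, in \Unitary.

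For the ``completeness'' half, assume $\stsem{c_1} = \stsem{c_2}$. Since both terms denote unitaries, this equality forces the two types to be $I^{\oplus n} \to I^{\oplus n}$ for a common $n$ (up to the canonical reassociation of $\oplus$), and both terms denote the same $U \in U_n(\mathbb{Z}[\frac{1}{2},i])$. By definition, a \SPiLang\ representation of a Gaussian dyadic rational unitary is a word built from the generators $i$, $\X = \sigma_\oplus$, and $\K$ together with the $\oplus$-coherence isomorphisms. The first step is a translation lemma: using full abstraction of \PiLang\ for its permutation semantics — which makes every equality between composites of $\alpha_\oplus, \lambda_\oplus, \rho_\oplus, \sigma_\oplus$, and every identity that merely permutes components, reassociates $\oplus$, or inserts and deletes unit summands, valid in \emph{every} model of \SPiLang — each such word can be rewritten, by equalities valid in all models, into the indexed form $i_{[j]}, \X_{[j,k]}, \K_{[j,k]}$ of \citet{bianselinger:unz}, conjugating by a suitable permutation to make the indices consecutive. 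This is exactly the bookkeeping already exploited in the derivations of \ref{eq:d1}--\ref{eq:d19}. Thus $c_1$ and $c_2$ translate to \citet{bianselinger:unz}-words $w_1$ and $w_2$ with $\stsem{w_i} = \stsem{c_i}$ and, crucially, $\sem{w_i} = \sem{c_i}$ in every model.

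Now apply the completeness half of \citet{bianselinger:unz}: from $\stsem{w_1} = \stsem{w_2} = U$ we obtain a finite chain of rewrites connecting $w_1$ to $w_2$, each an instance of (D1)--(D19) or of a structural relation of the ambient rig groupoid (bifunctoriality of $\oplus$, coherence, and the permutation conjugations above — the last two subsumed by \PiLang-completeness). Every (D$k$) holds in any model of \SPiLang\ by the derivations above, and the structural relations hold in any rig groupoid; hence $\sem{w_1} = \sem{w_2}$ in every model, and therefore $\sem{c_1} = \sem{w_1} = \sem{w_2} = \sem{c_2}$. Equivalently: the standard interpretation of \citet{bianselinger:unz}-words, taken modulo (D1)--(D19) and the structural relations, is injective by their completeness theorem, so any two such words with equal interpretation in \Unitary\ are identified by equalities that (being instances of relations valid in every model) also equate their interpretations in an arbitrary model.

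The main obstacle is the translation lemma: pinning down the correspondence between \SPiLang's explicit coherence layer — associators, unitors, symmetries, and scalar multiplication realised through the unitors — and \citet{bianselinger:unz}'s implicit index calculus, and verifying that the ``assume the indices are $1,\dots,k$ and consecutive, else conjugate by a permutation'' manoeuvres pervading \ref{eq:d1}--\ref{eq:d19} are genuinely justified in an arbitrary model and not merely in \Unitary. This is where full abstraction of \PiLang\ for its permutation semantics does the real work: it certifies that reassociating, permuting components, and padding with identity summands are all equalities derivable from the level-2 combinators, hence valid in every model of \SPiLang, so the entire reduction to \citet{bianselinger:unz}'s presentation stays inside the equational theory we are proving complete.
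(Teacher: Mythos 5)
Your proposal is correct and takes essentially the same approach as the paper's proof, which simply cites that (D1)--(D19) form a sound and complete equational theory for Gaussian dyadic rational unitaries~\cite{bianselinger:unz}, these identities having just been shown to hold in every model of \SPiLang. You merely spell out what the paper leaves implicit — the soundness direction via Prop.~\ref{prop:unitarymodel}, and the translation of \SPiLang\ terms into Bian--Selinger's indexed presentation (justified by \PiLang\ classical completeness, exactly the bookkeeping the paper alludes to when it says ``we can simply conjugate by the appropriate permutation to make it so'') before applying their completeness theorem.
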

\begin{proof}
  Identities \ref{eq:d1}--\ref{eq:d19} form a sound and complete
  equational theory for Gaussian dyadic rational
  unitaries~\cite{bianselinger:unz}.
\end{proof}

%%%%%%%%%%
\subsection{Gaussian Clifford+T Circuits}
\label{subsec:gaussianclifft}

We mentioned in Sec.~\ref{subsec:unitaries_unz} the one-to-one
correspondence (due to \cite{Amy2020numbertheoretic}) between circuits
in the (computationally universal) Gaussian Clifford+T gate
set $\{\X, \CX, \CCX, \K, \Sg\}$ and Gaussian dyadic rational
unitaries. 
\begin{definition}
In a model of \SPiLang, a \emph{representation of a Gaussian
Clifford+T circuit} is any morphism which can be written in terms of
morphisms from the sets $\{\X, \CX, \CCX, \K, \Sg\}$ and
$\{\alpha_\otimes, \alpha_\otimes^{-1}, \lambda_\otimes,
\lambda_\otimes^{-1}, \allowbreak \rho_\otimes, \rho_\otimes^{-1},
\sigma_\otimes\}$, composed arbitrarily in parallel (using $\otimes$)
and in sequence (using $\circ$).
\end{definition}
We argue that we can reason about Gaussian Clifford+T circuits in
models of \SPiLang\ by reasoning about their matrices, using the
coherence theorem for rig categories. Recall that a
\emph{bipermutative category} is a rig category where both symmetric
monoidal structures are strict, and the annihilators and right
distributor are all identities. (The explicit definition can be found in
\cite{may:eringspectra}.)

The coherence theorem for rig categories can be stated in terms of
bipermutative categories as follows:
\begin{theorem}
  Any rig category is rig equivalent to a bipermutative category.
\end{theorem}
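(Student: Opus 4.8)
The plan is to invoke the strictification/coherence machinery for rig categories rather than reprove it from scratch, since a full proof of this coherence theorem is substantial and is exactly the kind of result one cites. Specifically, the statement ``any rig category is rig equivalent to a bipermutative category'' is (a mild repackaging of) the coherence theorem of Laplaza~\cite{laplaza:distributivity}, refined in the language of bipermutative categories by May~\cite{may:eringspectra} and treated in detail in the modern textbook by Johnson and Yau~\cite{johnsonyau:bimonoidal}. So the ``proof'' I would write is essentially a pointer plus a sketch of why the cited construction applies in our (weaker, not-necessarily-strict, not-necessarily-symmetric-on-the-nose) setting.

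\textbf{First}, I would recall the shape of the argument. Given a rig category $(\cat{C},\otimes,\oplus,O,I)$, one builds a bipermutative category $\cat{C}'$ whose objects are finite formal $\oplus$-$\otimes$-expressions in the objects of $\cat{C}$ (a free rig-expression monoid, quotiented only by the strict bipermutative laws: strict associativity and unitality of both monoidal products, strict annihilation $O \otimes X = O = X \otimes O$, and a strict right distributor), and whose morphisms are inherited from $\cat{C}$ by evaluating expressions. The coherence theorem guarantees that ``evaluate the expression'' is well-defined up to the \emph{unique} coherence isomorphism between any two parenthesizations/orderings that the Laplaza axioms force to agree, so $\cat{C}'$ is a genuine category, it is bipermutative by construction, and the evident functor $\cat{C}' \to \cat{C}$ is full, faithful, and essentially surjective, hence a rig equivalence. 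The only subtlety worth flagging explicitly is that a bipermutative category is asymmetric in its treatment of the two sides: the \emph{right} distributor and \emph{both} annihilators are required to be identities, but the \emph{left} distributor $\delta_L$ is still allowed to be a nontrivial natural isomorphism (it must be, since one cannot strictify both distributors simultaneously without collapsing $X \oplus X \cong X$). I would point out that this is harmless for our downstream use in Section~\ref{subsec:gaussianclifft}, because the matrix reasoning there only needs strictness of $\oplus$, $\otimes$, and the annihilators to identify block-matrix composites with on-the-nose composites.

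\textbf{The main obstacle}, and the reason I would not attempt a self-contained proof, is verifying the coherence input: one must check that Laplaza's axioms (I)--(XXIV) for rig categories are exactly the finite set of diagrams whose commutativity makes the ``evaluate'' functor well-defined, i.e.\ that there are no further coherence obligations hiding. This is precisely the content of Laplaza's theorem and is genuinely delicate (it is the rig-category analogue of Mac\,Lane's coherence theorem and required real work to establish). So my proof would consist of: (1) stating that $\cat{C}'$ as above is a bipermutative category, citing \cite[Ch.~X]{johnsonyau:bimonoidal} or \cite{may:eringspectra} for the definition and the fact that the strictified laws are mutually consistent; (2) citing Laplaza's coherence theorem to conclude that the evaluation functor $\cat{C}' \to \cat{C}$ is well-defined; (3) observing directly that it is fully faithful and essentially surjective, hence an equivalence that preserves both monoidal structures strictly up to the identifications made, i.e.\ a rig equivalence. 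A remark that the construction is functorial in $\cat{C}$, and that it preserves the groupoid property (so a rig \emph{groupoid} becomes a bipermutative \emph{groupoid}), would round it off, since we will apply it to models of \SPiLang.
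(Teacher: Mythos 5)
Your proposal is correct and takes essentially the same route as the paper: the paper's entire proof is a one-line citation to May~\cite[VI, Prop.\ 3.5]{may:eringspectra}, which is exactly the strictification result you invoke. The additional sketch you give of the evaluation functor and the remark about the asymmetry of bipermutative categories are accurate but not present in the paper, which relies on the reference alone.
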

\begin{proof}
  See \cite[VI, Prop. 3.5]{may:eringspectra}.
\end{proof}
We can use this theorem to make the rig structure in any
model of \SPiLang\ bipermutative. This is very handy since we notice
that in a bipermutative category, the isomorphism $\Mat : (I \oplus I)
\otimes A \to A \oplus A$ is the identity, as it is composed of
the right distributor and some unitors; similarly, $\Midswap : (A
\oplus B) \oplus (C \oplus D) \to (A \oplus C) \oplus (B \oplus D)$ is
$\id \oplus \sigma_\oplus \oplus \id$ (we don't need to
worry about associativity due to strictness). Since in a general model
of \SPiLang\ we have
$$
\CX = \Ctrl~\X = \Mat^{-1} \circ (\id \oplus \X) \circ \Mat,
$$
in a bipermutative model of \SPiLang\ we have $\CX =
\id \oplus \X$; and $\CCX = (\id \oplus (\id \oplus
\X))$. As
$$
\Swap = \Mat^{-1} \circ \Mat \circ \Swap = \Mat^{-1} \circ \Midswap
\circ \Mat
$$ by invertibility of $\Mat$ and Lem.~\ref{lem:mat}, we have that
$\Swap = \Midswap = \id \oplus \X \oplus \id$ in the bipermutative
case, so even swapping two circuit lines reduces to applying $\X$. As
such, $\X$, $\CX$, $\CCX$, $\K$, $\Sg$, and $\Swap$ are all Gaussian
dyadic rational unitaries in a bipermutative model of \SPiLang.
This is the key observation in obtaining equational soundness and
completeness for Gaussian Clifford+T circuits (as it was for classical
reversible circuits as well~\cite{10.1145/3498667}).

We will need a small lemma (with proof in Appendix~\ref{ap:swapassoc}). Let
$\mathsf{SWAPASSOC} : (I \oplus I) \otimes ((I \oplus I) \otimes A)
\to (I \oplus I) \otimes ((I \oplus I) \otimes A)$ denote the natural
isomorphism $\alpha_\otimes \circ \Swap \otimes \id \circ
\alpha^{-1}_\otimes$.
\begin{lemmarep}\label{lem:swapassoc}
  In any model of \SPiLang, we have
  $$(\Mat \oplus \Mat) \circ \Mat \circ \mathsf{SWAPASSOC} =
  \Midswap \circ (\Mat \oplus \Mat) \circ \Mat.$$
\end{lemmarep}
\begin{proof}
\label{ap:swapassoc}
  This follows by commutativity of the diagram in Fig.~\ref{fig:diag}.

  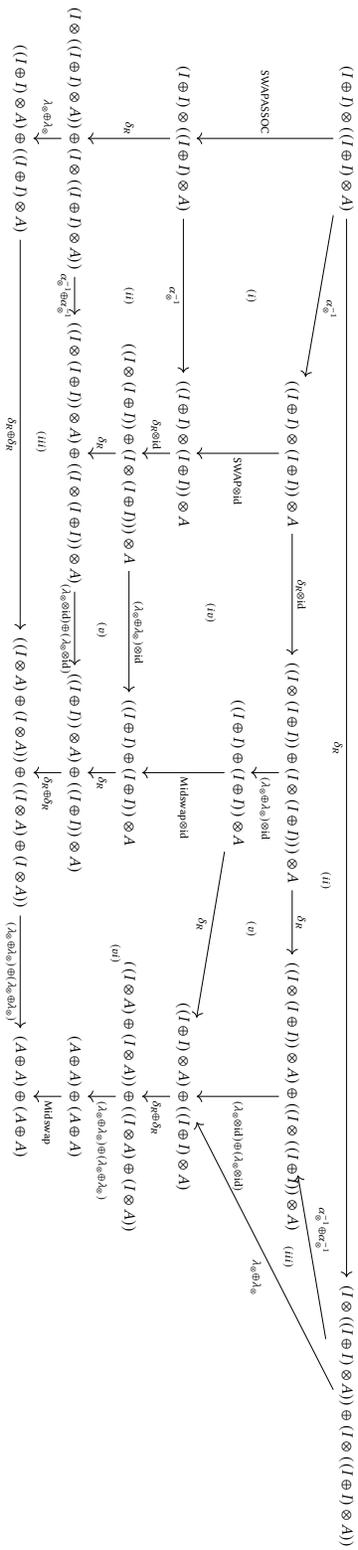
\begin{figure}
\adjustbox{scale=0.6,angle=-90,center}{\begin{tikzcd}
	{(I \oplus I) \otimes ((I \oplus I) \otimes A)} &&&& {(I \otimes ((I \oplus I) \otimes A)) \oplus (I \otimes ((I \oplus I) \otimes A))} \\
	& {((I \oplus I) \otimes (I \oplus I)) \otimes A} & {((I \otimes (I \oplus I)) \oplus (I \otimes (I \oplus I))) \otimes A} & {((I \otimes (I \oplus I)) \otimes A) \oplus ((I \otimes ((I \oplus I)) \otimes A)} \\
	&& {((I \oplus I) \oplus (I \oplus I)) \otimes A} \\
	{(I \oplus I) \otimes ((I \oplus I) \otimes A)} & {((I \oplus I) \otimes (I \oplus I)) \otimes A} && {((I \oplus I) \otimes A) \oplus ((I \oplus I) \otimes A)} \\
	& {((I \otimes (I \oplus I)) \oplus (I \otimes (I \oplus I))) \otimes A} & {((I \oplus I) \oplus (I \oplus I)) \otimes A} & {((I \otimes A) \oplus (I \otimes A)) \oplus ((I \otimes A) \oplus (I \otimes A))} \\
	{(I \otimes ((I \oplus I) \otimes A)) \oplus (I \otimes ((I \oplus I) \otimes A))} & {((I \otimes (I \oplus I)) \otimes A) \oplus ((I \otimes (I \oplus I)) \otimes A)} & {((I \oplus I)) \otimes A) \oplus ((I \oplus I)) \otimes A)} & {(A \oplus A) \oplus (A \oplus A)} \\
	{((I \oplus I) \otimes A) \oplus ((I \oplus I) \otimes A)} && {((I \otimes A) \oplus (I \otimes A)) \oplus ((I \otimes A) \oplus (I \otimes A))} & {(A \oplus A) \oplus (A \oplus A)}
	\arrow["{\mathsf{SWAPASSOC}}"', from=1-1, to=4-1]
	\arrow[""{name=0, anchor=center, inner sep=0}, "{\alpha_\otimes^{-1}}", from=1-1, to=2-2]
	\arrow["{\Swap \otimes \id}", from=2-2, to=4-2]
	\arrow[""{name=1, anchor=center, inner sep=0}, "{\alpha_\otimes^{-1}}"', from=4-1, to=4-2]
	\arrow["{\delta_R}"', from=4-1, to=6-1]
	\arrow[""{name=2, anchor=center, inner sep=0}, "{\alpha_\otimes^{-1} \oplus \alpha_\otimes^{-1}}"', from=6-1, to=6-2]
	\arrow["{\delta_R \otimes \id}"', from=4-2, to=5-2]
	\arrow["{\delta_R}"', from=5-2, to=6-2]
	\arrow["{\lambda_\otimes \oplus \lambda_\otimes}"', from=6-1, to=7-1]
	\arrow[""{name=3, anchor=center, inner sep=0}, "{(\lambda_\otimes \otimes \id) \oplus (\lambda_\otimes \otimes \id)}"', from=6-2, to=6-3]
	\arrow[""{name=4, anchor=center, inner sep=0}, "{\delta_R \oplus \delta_R}"', from=7-1, to=7-3]
	\arrow["{\delta_R \oplus \delta_R}", from=6-3, to=7-3]
	\arrow[""{name=5, anchor=center, inner sep=0}, "{(\lambda_\otimes \oplus \lambda_\otimes) \otimes \id}", from=5-2, to=5-3]
	\arrow["{\delta_R}", from=5-3, to=6-3]
	\arrow[""{name=6, anchor=center, inner sep=0}, "{\delta_R \otimes \id}", from=2-2, to=2-3]
	\arrow["{(\lambda_\otimes \oplus \lambda_\otimes) \otimes \id}", from=2-3, to=3-3]
	\arrow["{\Midswap \otimes \id}", from=3-3, to=5-3]
	\arrow[""{name=7, anchor=center, inner sep=0}, "{\delta_R}", from=2-3, to=2-4]
	\arrow[""{name=8, anchor=center, inner sep=0}, "{\alpha_\otimes^{-1} \oplus \alpha_\otimes^{-1}}"', shift left=2, from=1-5, to=2-4]
	\arrow[""{name=9, anchor=center, inner sep=0}, "{\delta_R}"', from=1-1, to=1-5]
	\arrow[""{name=10, anchor=center, inner sep=0}, "{\delta_R}"', from=3-3, to=4-4]
	\arrow["{(\lambda_\otimes \otimes \id) \oplus (\lambda_\otimes \otimes \id)}", from=2-4, to=4-4]
	\arrow[""{name=11, anchor=center, inner sep=0}, "{\lambda_\otimes \oplus \lambda_\otimes}", from=1-5, to=4-4]
	\arrow["{\delta_R \oplus \delta_R}", from=4-4, to=5-4]
	\arrow["{(\lambda_\otimes \oplus \lambda_\otimes) \oplus (\lambda_\otimes \oplus \lambda_\otimes)}", from=5-4, to=6-4]
	\arrow["\Midswap", from=6-4, to=7-4]
	\arrow[""{name=12, anchor=center, inner sep=0}, "{(\lambda_\otimes \oplus \lambda_\otimes) \oplus (\lambda_\otimes \oplus \lambda_\otimes)}"', from=7-3, to=7-4]
	\arrow["{(i)}"{description}, draw=none, from=0, to=1]
	\arrow["{(ii)}"{description}, draw=none, from=1, to=2]
	\arrow["{(ii)}"{description}, draw=none, from=9, to=2-4]
	\arrow["{(iii)}"{description}, draw=none, from=6-2, to=4]
	\arrow["{(iii)}"{description}, draw=none, from=8, to=11]
	\arrow["{(iv)}"{description}, draw=none, from=6, to=5]
	\arrow["{(v)}"{description}, draw=none, from=5, to=3]
	\arrow["{(v)}"{description}, draw=none, from=7, to=10]
	\arrow["{(vi)}"{description}, draw=none, from=10, to=12]
\end{tikzcd}}
\caption{\label{fig:diag}Diagram for proving Lem.~\ref{lem:swapassoc}.}
\end{figure}

Here (i) commutes by definition , (ii) by Laplaza (VII), (iii)
monoidal coherence for $\otimes$, (iv) by Lem.~\ref{lem:mat}, (v) by
naturality of $\delta_R$, and (vi) using Laplaza (I).
\end{proof}

\begin{theorem}[Full abstraction for Gaussian Clifford+T circuits]\label{thm:gclifft}
  Let $c_1$ and $c_2$ be \SPiLang\ terms representing Gaussian
  Clifford+T circuits. Then $\sem{c_1} = \sem{c_2}$ iff $\stsem{c_1} =
  \stsem{c_2}$.
\end{theorem}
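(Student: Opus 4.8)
The plan is to reduce this statement about Gaussian Clifford+T \emph{circuits} (built with $\otimes$) to the completeness already obtained for Gaussian dyadic rational \emph{unitaries} (built with $\oplus$) in Thm.~\ref{thm:comp_unz}, by turning circuits into matrices inside a bipermutative model. Soundness is immediate: if $\sem{c_1} = \sem{c_2}$ holds in every model of \SPiLang, then in particular it holds in \Unitary\ (Prop.~\ref{prop:unitarymodel}), so $\stsem{c_1} = \stsem{c_2}$.

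For the converse, assume $\stsem{c_1} = \stsem{c_2}$ and let $\cat{C}$ be an arbitrary model. First I would pass to a bipermutative model: by the coherence theorem for rig categories there is a rig equivalence $\cat{C}\simeq\cat{D}$ with $\cat{D}$ bipermutative, and transporting $\omega$ and $\V$ along the equivalence makes $\cat{D}$ a model, since the axioms \ref{ax:omega}--\ref{ax:sqy} are formulated entirely using $\circ$, $\oplus$, $\sigma_\oplus$ and scalar multiplication, all of which a rig equivalence preserves. As the equivalence is faithful and carries the interpretation of a \SPiLang-term in $\cat{C}$ to its interpretation in $\cat{D}$, it suffices to prove $\sem{c_1}^{\cat{D}} = \sem{c_2}^{\cat{D}}$. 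Now in $\cat{D}$ we have $\Mat = \id$, $\Midswap = \id\oplus\sigma_\oplus\oplus\id$, $\Swap = \Midswap$, $\mathsf{SWAPASSOC} = \Midswap$ (the last by Lem.~\ref{lem:swapassoc}), $(I\oplus I)^{\otimes n} = I^{\oplus 2^n}$, and the $\otimes$-coherence isomorphisms occurring in circuit representations are identities except for $\sigma_\otimes = \Swap$; hence $\X = \sigma_\oplus$, $\Sg = \id\oplus i$, $\CX = \id\oplus\X$, $\CCX = \id\oplus(\id\oplus\X)$, $\Swap = \id\oplus\X\oplus\id$, and $\K$ is a generator. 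I would then show, by induction on the structure of a term $c$ representing a Gaussian Clifford+T circuit, that $\sem{c}^{\cat{D}} = \sem{d}^{\cat{D}}$ for some term $d$ representing a Gaussian dyadic rational unitary: the generator and coherence cases are the identities just listed, the $\circ$ case is immediate, and the $\otimes$ case follows because the class of Gaussian dyadic rational unitary morphisms is closed under $\id\otimes(-)$ and $(-)\otimes\id$ — in the bipermutative picture these become $f\mapsto f\oplus f$ and $f\mapsto\Midswap\circ(f\oplus f)\circ\Midswap$ by Lem.~\ref{lem:mat}, with $\mathsf{SWAPASSOC} = \Midswap$ handling permutations of non-adjacent wires — together with bifunctoriality of $\otimes$ and strictness of $\delta_R$. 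Finally, the same rewriting read in \Unitary\ identifies $\stsem{c}$ with $\stsem{d}$ up to a fixed coherence isomorphism depending only on the type of $c$ — this identification is precisely the circuit-to-matrix correspondence of \citet{Amy2020numbertheoretic} — so $\stsem{c_1} = \stsem{c_2}$ yields $\stsem{d_1} = \stsem{d_2}$, whence $\sem{d_1} = \sem{d_2}$ by Thm.~\ref{thm:comp_unz}, and therefore $\sem{c_1}^{\cat{D}} = \sem{c_2}^{\cat{D}}$, which transfers back to $\cat{C}$.

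The main obstacle I anticipate is the $\otimes$ step of the induction: making precise that tensoring a $\oplus$-built morphism with an identity (or with another $\oplus$-built morphism) stays within the Gaussian dyadic rational unitaries once the rig structure is made bipermutative. This is exactly where Lems.~\ref{lem:mat} and~\ref{lem:swapassoc} do the real work, and where the associativity and bracketing bookkeeping that bipermutativity suppresses must be discharged. A secondary, mostly cosmetic point is the uniformity needed for the last step — that $\stsem{c_1} = \stsem{c_2}$ really forces $\stsem{d_1} = \stsem{d_2}$ — which rests on the standard fact, underlying \cite{Amy2020numbertheoretic}, that the matrix of a Gaussian Clifford+T circuit has entries in $\mathbb{Z}[\frac{1}{2},i]$ and is recovered by the evident interpretation of the rewritten term.
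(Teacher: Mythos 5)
Your proposal follows the paper's proof: both reduce the $\otimes$-level statement about circuits to the $\oplus$-level completeness result for Gaussian dyadic rational unitaries (Thm.~\ref{thm:comp_unz}) by passing to a bipermutative model via rig-coherence and then showing that the Gaussian Clifford+T generators and their tensor and sequential combinations coincide there with $\oplus$-built representations, with Lems.~\ref{lem:mat} and~\ref{lem:swapassoc} carrying the $\otimes$ step. You are somewhat more explicit than the paper about transporting $\omega$ and $\V$ along the equivalence and about the type-indexed coherence isomorphism relating $\stsem{c_i}$ to $\stsem{d_i}$ in \Unitary, and somewhat more schematic where the paper is concrete (for tensoring by a multi-qubit identity the paper conjugates by $\sigma_\otimes$ and uses Lem.~\ref{lem:swapassoc} to decompose it into $\X$-gates, whereas you gesture at iterating the $\Midswap$-conjugation and let $\mathsf{SWAPASSOC}$ handle non-adjacent wires), but the strategy, the key lemmas, and the overall reduction are the same.
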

\begin{proof}
  Let $c_1, c_2 : (I \oplus I)^{\otimes n} \to (I \oplus I)^{\otimes
    n}$. By coherence, we may assume every model of \SPiLang\ in
  sight to be bipermutative.

  As noted above, the gates of the Gaussian Clifford+T gate set are
  all representations of Gaussian dyadic rational unitaries in this
  bipermutative model: $\X$ and $\K$ are so directly, and $\Sg = \id
  \oplus i$, $\CX = \id \oplus \X$ and $\CCX = \id \oplus (\id \oplus
  \X)$ are so too by closure under direct sums. To see that the tensor
  product of two representations is also a representation, it
  suffices to show that tensoring by identities on $(I \oplus
  I)^{\otimes m}$ on either side preserves this property, since we
  have $(f \otimes \id) \circ (\id \otimes g) = f \otimes g$:
  \begin{itemize}
    \item By Lem.~\ref{lem:mat}, tensoring by $\id_{I \oplus I}$ on
      the left yields $\id_{I \oplus I} \otimes f = \Mat^{-1} \circ (f
      \oplus f) \circ \Mat$, so in the bipermutative case $\id_{I
        \oplus I} \otimes f = f \oplus f$, which is again a
      representation of a Gaussian dyadic rational unitary unitary
      when $f$ is, by closure under direct sum. But then we can repeat
      this process $m-1$ times to tensor by $\id_{(I \oplus
        I)^{\otimes m}}$.
    \item By naturality, $f \otimes \id_{(I \oplus I)^{\otimes m}} =
      \sigma_\otimes \circ \id_{(I \oplus I)^{\otimes m}} \otimes f
      \circ \sigma_\otimes$, so this reduces to the case above since
      (in the bipermutative case, using Lems.~\ref{lem:swapassoc} and
      \ref{lem:mat}) the symmetry $\sigma_\otimes$ on $(I \oplus
      I)^{\otimes p} \otimes (I \oplus I)^{\otimes q}$ is nothing but
      a series of direct sums of identities and $\oplus$-symmetries on
      $I \oplus I$ (\textit{i.e.}, $\X$ gates).
  \end{itemize}
  Finally, since representations of Gaussian dyadic rational unitaries
  are also closed under composition, it follows that any
  representation of a Gaussian Clifford+T circuit in a bipermutative
  category is directly also a representation of a Gaussian dyadic
  rational unitary.

  From this it follows for terms $c_1$ and $c_2$ representing Gaussian
  Clifford+T circuits that $\sem{c_1} = \sem{c_2}$ iff they are equal
  as representations of Gaussian dyadic rational unitaries, which in
  turn happens (by Thm.~\ref{thm:comp_unz}) iff they are equal as
  actual unitaries in $\Unitary$ (so specifically as Gaussian
  Clifford+T circuits), \textit{i.e.}, iff $\stsem{c_1} = \stsem{c_2}$.
\end{proof}

\begin{toappendix}
  \section{Definition of bipermutative category}
  \label{def:bipermutative}
\begin{definition}
  A \emph{bipermutative category} is a rig category where
  \begin{enumerate}
  \item the associators $\alpha_\otimes : (A \otimes B) \otimes C \to
    A \otimes (B \otimes C)$ and $\alpha_\oplus : (A \oplus B) \oplus
    C \to A \oplus (B \oplus C)$ and unitors $\lambda_\oplus : O
    \oplus A \to A$, $\rho_\oplus : A \oplus O \to A$,
    $\lambda_\otimes : I \otimes A \to A$, and $\rho_\otimes : A
    \otimes I \to A$ are all identities.
  \item the annihilators $\delta^0_R : A \otimes O \to O$
    and $\delta^0_L : O \otimes A \to O$ and right
    distributor $\delta_R : (A \oplus B) \otimes C \to (A \otimes C)
    \oplus (B \otimes C)$ are identities, and the following diagram
    commutes:
    % https://q.uiver.app/#q=WzAsNCxbMCwwLCIoQSBcXG9wbHVzIEIpIFxcb3RpbWVzIEMiXSxbMSwwLCIoQSBcXG90aW1lcyBDKSBcXG9wbHVzIChCIFxcb3RpbWVzIEMpIl0sWzAsMSwiKEIgXFxvcGx1cyBBKSBcXG90aW1lcyBDIl0sWzEsMSwiKEIgXFxvdGltZXMgQykgXFxvcGx1cyAoQSBcXG90aW1lcyBDKSJdLFswLDEsIj0iXSxbMiwzLCI9IiwyXSxbMCwyLCJcXHNpZ21hX1xcb3BsdXMgXFxvdGltZXMgXFxpZCIsMl0sWzEsMywiXFxzaWdtYV9cXG9wbHVzIl1d
\[\begin{tikzcd}
	{(A \oplus B) \otimes C} & {(A \otimes C) \oplus (B \otimes C)} \\
	{(B \oplus A) \otimes C} & {(B \otimes C) \oplus (A \otimes C)}
	\arrow["{=}", from=1-1, to=1-2]
	\arrow["{=}"', from=2-1, to=2-2]
	\arrow["{\sigma_\oplus \otimes \id}"', from=1-1, to=2-1]
	\arrow["{\sigma_\oplus}", from=1-2, to=2-2]
\end{tikzcd}\]
  \item The left distributor $\delta_L : A \otimes (B \oplus C) \to (A
    \otimes B) \oplus (A \otimes C)$ makes the diagrams below commute:
    % https://q.uiver.app/#q=WzAsNCxbMCwwLCJBIFxcb3RpbWVzIChCIFxcb3BsdXMgQykiXSxbMCwxLCIoQSBcXG90aW1lcyBCKSBcXG9wbHVzIChBIFxcb3RpbWVzIEMpIl0sWzEsMCwiKEIgXFxvcGx1cyBDKSBcXG90aW1lcyBBIl0sWzEsMSwiKEIgXFxvdGltZXMgQSkgXFxvcGx1cyAoQyBcXG90aW1lcyBBKSJdLFswLDEsIlxcZGVsdGFfTCIsMl0sWzAsMiwiXFxzaWdtYV9cXG90aW1lcyJdLFsyLDMsIj0iXSxbMywxLCJcXHNpZ21hX1xcb3RpbWVzIFxcb3BsdXMgXFxzaWdtYV9cXG90aW1lcyJdXQ==
\[\begin{tikzcd}
	{A \otimes (B \oplus C)} & {(B \oplus C) \otimes A} \\
	{(A \otimes B) \oplus (A \otimes C)} & {(B \otimes A) \oplus (C \otimes A)}
	\arrow["{\delta_L}"', from=1-1, to=2-1]
	\arrow["{\sigma_\otimes}", from=1-1, to=1-2]
	\arrow["{=}", from=1-2, to=2-2]
	\arrow["{\sigma_\otimes \oplus \sigma_\otimes}", from=2-2, to=2-1]
\end{tikzcd}\]
% https://q.uiver.app/#q=WzAsNSxbMCwwLCIoQSBcXG9wbHVzIEIpIFxcb3RpbWVzIChDIFxcb3BsdXMgRCkiXSxbMCwxLCIoKEEgXFxvcGx1cyBCKSBcXG90aW1lcyBDKSBcXG9wbHVzICgoQSBcXG9wbHVzIEIpIFxcb3RpbWVzIEQpIl0sWzEsMCwiKEEgXFxvdGltZXMgKEMgXFxvcGx1cyBEKSkgXFxvcGx1cyAoQiBcXG90aW1lcyAoQyBcXG9wbHVzIEQpKSJdLFsxLDEsIihBIFxcb3RpbWVzIEMpIFxcb3BsdXMgKEEgXFxvdGltZXMgRCkgXFxvcGx1cyAoQiBcXG90aW1lcyBDKSBcXG9wbHVzIChCIFxcb3RpbWVzIEQpIl0sWzEsMiwiKEEgXFxvdGltZXMgQykgXFxvcGx1cyAoQiBcXG90aW1lcyBDKSBcXG9wbHVzIChBIFxcb3RpbWVzIEQpIFxcb3BsdXMgKEIgXFxvdGltZXMgRCkiXSxbMCwyLCI9Il0sWzAsMSwiXFxkZWx0YV9MIiwyXSxbMiwzLCJcXGRlbHRhX0wgXFxvcGx1cyBcXGRlbHRhX0wiXSxbMSw0LCI9IiwyXSxbMyw0LCJcXGlkIFxcb3BsdXMgXFxzaWdtYV9cXG9wbHVzIFxcb3BsdXMgXFxpZCJdXQ==
\[\begin{tikzcd}
	{(A \oplus B) \otimes (C \oplus D)} & {(A \otimes (C \oplus D)) \oplus (B \otimes (C \oplus D))} \\
	{((A \oplus B) \otimes C) \oplus ((A \oplus B) \otimes D)} & {(A \otimes C) \oplus (A \otimes D) \oplus (B \otimes C) \oplus (B \otimes D)} \\
	& {(A \otimes C) \oplus (B \otimes C) \oplus (A \otimes D) \oplus (B \otimes D)}
	\arrow["{=}", from=1-1, to=1-2]
	\arrow["{\delta_L}"', from=1-1, to=2-1]
	\arrow["{\delta_L \oplus \delta_L}", from=1-2, to=2-2]
	\arrow["{=}"', from=2-1, to=3-2]
	\arrow["{\id \oplus \sigma_\oplus \oplus \id}", from=2-2, to=3-2]
\end{tikzcd}\]
  \end{enumerate}
\end{definition}
\end{toappendix}

%%%%%%%%%%%%%%%%%%%%%%%%%%%%%%%%%%%%%%%%%%%%%%%%%%%%%%%%%%%%%%%%%%%%%%%%
\section{Circuit Equivalences}
\label{sec:eq}
As a supplement to this paper, we have developed an Agda library and
used it to formalise some of our results. We discuss its use in
proving the Sleator-Weinfurter decomposition of $\CCX$ mentioned in
Sec.~\ref{sec:comb}, as well as keys aspects of the implementation.

\subsection{Decomposing $\CCX$}
In the previous section, we noted that every gate in the Gaussian
Clifford+T gate set has a ``matrix representation'', \textit{i.e.}, that it can
be written as $\Mat^{-1} \circ g \circ \Mat$ for some $g$ that only
uses $\K$, $\X$, $i$, direct sums and composition. To prove the
correctness of the Sleator-Weinfurter decomposition (see
Fig.~\ref{fig:sw} on page~\pageref{fig:sw}), we will use a common
technique: find the matrix form of each gate, compose them to form the
circuit, and use elementary reasoning to take care of the rest.

The first step seems simple given that each elementary gate has a
matrix representation, but additional work is required in the case of
multi-qubit circuits. This is because the exact positioning of the
gate alters its representation. For example, to find the matrix
representation of a $\CX$ applied to the top two qubits of a three
qubit circuit, we apply it instead to the bottom two qubits and apply
$\Swap$ gates to ``rewire'' the circuit appropriately, as in

{\scriptsize\[
\Qcircuit @C=.7em @R=0.7em @!R {
    & \ctrl{1} & \qw \\
    & \targ    & \qw \\
    & \qw      & \qw 
} \quad\raisebox{-4mm}{=}\quad
\Qcircuit @C=.7em @R=0.7em @!R {
    & \qw        & \qswap     & \qw      & \qswap     & \qw        & \qw \\
    & \qswap     & \qswap\qwx & \ctrl{1} & \qswap\qwx & \qswap     & \qw \\
    & \qswap\qwx & \qw        & \targ    & \qw        & \qswap\qwx & \qw
}
\]}

\noindent
This form allows us to use Lems.~\ref{lem:mat} and \ref{lem:swapassoc}
to find its matrix representation, which turns out (with a bit of
work) to be
$$
\Mat^{-1} \circ (\Mat^{-1} \oplus \Mat^{-1}) \circ
(\id \oplus \sigma_\oplus^{I \oplus I,I \oplus I}) \circ
(\Mat \oplus \Mat) \circ
\Mat \enspace.
$$

\noindent
We use the same technique to find the matrix representation of the
remaining gates in the circuit and compose them, yielding (after
removing a number of superfluous $\Mat^{-1} \circ \Mat$)
\begin{align*}
& \Mat^{-1} \circ (\Mat^{-1} \oplus \Mat^{-1}) \circ
(\id \oplus (\V \oplus \V)) \circ
(\id \oplus \sigma_\oplus^{I \oplus I, I \oplus I}) \circ
((\id \oplus \V^{-1}) \oplus (\id \oplus \V^{-1}))~\circ \\
& \quad
(\id \oplus \sigma_\oplus^{I \oplus I, I \oplus I}) \circ
((\id \oplus \V) \oplus (\id \oplus \V)) \circ
(\Mat \oplus \Mat) \circ \Mat
\end{align*}

\noindent
Expanding out and applying naturality of $\sigma_\oplus$,
invertibility of $\V$, and bifunctoriality a few times show that this
is equivalent to our previous definition of $\CCX$, \textit{i.e.}
$$
\Mat^{-1}\circ\left(\id \oplus
  \left(\Mat^{-1}\circ\left(\id \oplus X\right) \circ \Mat\right) 
\right)\circ \Mat.
$$

\noindent
An Agda program implementing the formal proof can be found in the
supplementary material. The equational proofs are reasonably readable
by humans (much more so than tactic proofs would be) but not so
enlightening that including them here would be warranted.

\subsection{Agda implementation}

Presented with the choice of working in the syntax
of~\SPiLang~(Sec.~\ref{sec:sqrtpi}) or in its generic models
(Def.~\ref{def:model}), we chose to work in the
latter for purely practical considerations: the library
\texttt{agda-categories} already contains a wealth of reasoning
combinators for both categories and monoidal categories that we would
have to reproduce in the syntax of the language. Furthermore, it also has proofs of useful results,
such as Kelly's various coherence lemmas, and defines useful extra
combinators like ``middle exchange'' (our \Midswap). As we would have
had to reproduce all of that, this seemed like a simple choice.

However, everything in \texttt{agda-categories} is \emph{weak}, so that
we have to worry about units and association in our formal proofs.
Doing this manually is overwhelmingly tedious. Luckily, there are a lot of
combinators already defined that make this essentially bearable.
The translation from the proofs presented in the paper, which ignore
associativity altogether, does require some care.

We have not yet had a chance to formalise everything. We did formalise
all of Sec.~\ref{sec:den}, all results in Sec.~\ref{subsec:2clifford},
Lem.~\ref{lem:ctrlh} of Sec.~\ref{subsec:nclifford},
Lem.~\ref{lem:nctrl}, and \ref{eq:a14} to \ref{eq:a17} in
Sec.~\ref{subsec:2clifft}. We foresee no additional difficulties for
other parts, except that many of the later equations are larger. Going
at ``full speed,'' a proof like that of Sleator-Weinfurter takes a
little over an hour of dedicated work.  However, identities like
\ref{eq:b1}--\ref{eq:b4} and \ref{eq:a20} are likely to take several
hours each.

We did not find any errors in any of the paper proofs while
formalising them. We did find several cross-referencing errors
(\textit{i.e.}, the wrong lemma justifying the step had been written down),
which were subsequently corrected. Interestingly, we did find an error
in \texttt{agda-categories} itself: it was missing some coherences for
\texttt{RigCategory}. This error has been fixed in the library.

We did find that some classical coherences used in the proofs of
Lem.~\ref{prop:scalars} and~\ref{lem:gates} were significantly more
work to prove than the diagrammatic sketches let on. Three of the
sub-parts of these ``preliminary lemmas'' accounted for more than a
day's work each.

Nevertheless, we conclude that doing categorical meta-theory for
quantum programming languages absolutely can be formalised at a
reasonable cost.

%%%%%%%%%%%%%%%%%%%%%%%%%%%%%%%%%%%%%%%%%%%%%%%%%%%%%%%%%%%%%%%%%%%%%%%%
\section{Concluding Remarks} 
\label{sec:speed}
 
In this paper we have studied square roots from a purely axiomatic
perspective. We have shown that with a remarkably small extension to
the classical reversible programming language \PiLang{}, one can
obtain a language which is computational universal as well as sound
and complete for a variety of modes of unitary quantum computing.  A
key feature of our approach (also found in other successful calculi
such as the ZX-calculus) is the treatment of gates as white boxes that
can be decomposed and recomposed during rewriting. This is in contrast
to the circuit based approach that treat gates as black boxes. For
example, while a circuit theory will allow one to derive that $\T\T =
\Sg$, it is unable to provide justification for this in terms of the
definitions of $\Sg$ and $\T$. On the other hand, our approach reduces
this equation to the bifunctoriality of $\oplus$ and the definition of
$\Sg$ and $\T$. This style of reasoning is very close to the kind of
semi-formal reasoning used to justify matrix equalities (employed,
\text{e.g.}, in \cite{bianselinger:cliffordt} to justify their
relations).

Physically, square roots are a key feature of quantum
hardware. To understand this point, we briefly delve under the
computational abstraction to the level of energy flow. At that level,
the quantum mechanical description of a system is expressed using a
Hamiltonian that is continuous in time (and assumed here to be time
independent). Given a Hamiltonian $H$ and some initial state
$\ket{\psi(0)}$, the state of the system at a subsequent time $t$ is
given by:
\[
    \ket{\psi(t)} ~=~ e^{-iHt}\ket{\psi(0)}
\]
In the circuit model of quantum computing, the quantity $e^{-iHt}$
denotes a unitary $U$ that is implemented by a gate or collection of
gates. Mathematically, it is clearly legitimate to decompose $U =
e^{-iHt}$ into $\sqrt{U} \circ \sqrt{U} = e^{-iHt/2} \cdot
e^{-iHt/2}$.  This decomposition has a simple operational realisation:
if the application of $U$ requires an energy pulse lasting $k$ units
of time, then applying the pulse for $k/2$ units implements
$\sqrt{U}$~\cite[VII.F.2]{google:supremacy}. It turns out that the
classical computing abstraction generally does not allow such
decompositions, whereas quantum computing is distinguished by this
feature.

\begin{toappendix}
\label{ap:hask}
\begin{minted}{haskell}
module Demo where

-- A class for booleans with, possibly, 
-- a square root of negation

class Enum a => B a where
  falseB    :: a
  trueB     :: a
  notB      :: a -> a
  sqrtNotB  :: a -> a
  evenB     :: a -> Bool
  evenB     = even . fromEnum 

-- The classical instance has no square root

instance B Bool where
  falseB    = False
  trueB     = True
  notB      = not
  sqrtNotB  = error "No classical sqrt of not"

-- Now define "big" booleans: Zero and Two are the
-- classical booleans; One and Three are intermediate 
-- values along the negation trajectories

data Four = Zero | One | Two | Three

-- Create the trajectories for boolean negation:
-- Zero -> One -> Two
-- Two -> Three -> Zero

instance Enum Four where
  toEnum 0 = Zero
  toEnum 1 = One
  toEnum 2 = Two
  toEnum 3 = Three
  toEnum n = toEnum (n `mod` 4)
  fromEnum Zero = 0
  fromEnum One = 1
  fromEnum Two = 2
  fromEnum Three = 3

instance B Four where
  falseB    = Zero
  trueB     = Two
  notB      = succ . succ
  sqrtNotB  = succ

-- When boolean negation is applied to Zero, it produces Two after
-- "internally" visiting the intermediate value One. Although
-- the particular internal values are not exposed, the evenB
-- method reveals whether the underlying value is a "whole" or
-- "partial" boolean. 

data Classification = Balanced | Constant

-- An analogue of Deutsch's problem.
-- We have four functions defined on abstract booleans:
-- two constant functions (f0 and f1) and two balanced
-- functions (f2 and f3)

f0, f1, f2, f3 :: B a => a -> a
f0 a = falseB
f1 a = trueB
f2 a = a
f3 a = notB a

-- Classically the given function is applied twice to
-- classify it as Balanced or Constant

deutschC :: (Bool -> Bool) -> Classification
deutschC f = if f False == f True then Constant else Balanced

-- If we can observe the values introduced by the square
-- roots, we only need one application! 

deutschF :: (Four -> Four) -> Classification
deutschF f = if evenB (f (sqrtNotB falseB)) then Constant else Balanced
\end{minted}
\end{toappendix}
  
The fact that a function and its square root operate at different time
scales suggests evidence for the widely-believed exponential speedup
that distinguishes quantum from classical computing. Indeed the simple
Haskell module in Appendix~\ref{ap:hask} shows that, if we arrange for
boolean negation to take \emph{two steps}, then it is possible to
model the analogue of a square root of boolean negation by just taking
one of the two steps, and most importantly, this leads to the same
quantum speedup observed Deutsch's
problem~\cite{deutsch,deutschJozsa}. Taking this idea further, it is
arguably the case that more and more square roots, for example by
providing additional roots of unity, would unlock additional speedup
opportunities. We consider a formal investigation of these connections
to be an important direction of future work. 

%%%%%%%%%%%%%%%%%%%%%%%%%%%%%%%%%%%%%%%%%%%%%%%%%%%%%%%%%%%%%%%%%%%%%%%%

\section*{Acknowledgements}
\label{sec:acks}
We are indebted to the reviewers for their thoughtful and detailed
comments. Jacques Carette is supported by NSERC grant RGPIN-2018-05812. 
Amr Sabry was supported by US National Science Foundation grant OMA-1936353.

%%%%%%%%%%%%%%%%%%%%%%%%%%%%%%%%%%%%%%%%%%%%%%%%%%%%%%%%%%%%%%%%%%%%%%%%
\printbibliography

@unpublished{ aharonov:toffolihadamard,
	author = {D. Aharonov},
	title = {A simple proof that {T}offoli and {H}adamard are quantum universal},
	note = {\href{https://arxiv.org/abs/quant-ph/0301040}{arXiv:quant-ph/0301040}},
	year = {2003}
}

@BOOK{ heunenvicary:cqt,
	author = {C. Heunen and J. Vicary},
	title = {Categories for quantum theory},
	publisher = {Oxford University Press},
	year = {2019}
}

@ARTICLE{ coeckeduncan:zx,
	author = {B. Coecke and R. Duncan},
	title = {Interacting quantum observables: categorical algebra and diagrammatics},
	journal = {New Journal of Physics},
	year = {2011},
	volume = {13},
	pages = {043016}
}

@inproceedings{jamessabry:infeff,
  author = {R. P. James and A. Sabry},
  title = {Information Effects},
  booktitle = {POPL '12: Proceedings of the 39th Annual ACM SIGPLAN-SIGACT 
  Symposium on Principles of programming languages},
  year = {2012},
  pages = {73--84},
  publisher = {ACM},
  doi={10.1145/2103656.2103667}
}

@INPROCEEDINGS{ backenskissinger:zh,
	author = {M. Backens and A. Kissinger},
	title = {{ZH}: A complete graphical calculus for quantum computations involving classical non-linearity},
	booktitle = {Quantum Physics and Logic},
	year = {2019},
	series = {Electronic Proceedings in Theoretical Computer Science},
	number = {287},
	pages = {23--42}
}

@InProceedings{10.1007/3-540-10003-2104,
author="T. Toffoli",
editor="J. {de Bakker} and J. {van Leeuwen}",
title="Reversible computing",
booktitle="Automata, Languages and Programming",
year="1980",
publisher="Springer Berlin Heidelberg",
address="Berlin, Heidelberg",
pages="632--644",
abstract="The theory of reversible computing is based on invertible primitives and composition rules that preserve invertibility. With these constraints, one can still satisfactorily deal with both functional and structural aspects of computing processes; at the same time, one attains a closer correspondence between the behavior of abstract computing systems and the microscopic physical laws (which are presumed to be strictly reversible) that underly any concrete implementation of such systems.",
isbn="978-3-540-39346-7"
}

@INPROCEEDINGS{ laplaza:distributivity,
	author = {M. L. Laplaza},
	title = {Coherence for distributivity},
	booktitle = {Coherence in categories},
	year = {1972},
	pages = {29--65},
	publisher = {Springer},
	series = {Lecture Notes in Mathematics},
	number = {281}
}

@BOOK{ nielsenchuang:quantum,
  title={Quantum Computation and Quantum Information},
  author={M. A. Nielsen and I. Chuang},
  year={2002},
  publisher={Cambridge University Press}
}

@InProceedings{10.1007/978-3-662-49498-1,
author="J. Carette and A. Sabry",
editor="P. Thiemann",
title="Computing with Semirings and Weak Rig Groupoids",
booktitle="Programming Languages and Systems",
year="2016",
publisher="Springer Berlin Heidelberg",
address="Berlin, Heidelberg",
pages="123--148",
abstract="The original formulation of the Curry--Howard correspondence relates propositional logic to the simply-typed {\$}{\$}{\backslash}lambda {\$}{\$}$\lambda$-calculus at three levels: the syntax of propositions corresponds to the syntax of types; the proofs of propositions correspond to programs of the corresponding types; and the normalization of proofs corresponds to the evaluation of programs. This rich correspondence has inspired our community for half a century and has been generalized to deal with more advanced logics and programming models. We propose a variant of this correspondence which is inspired by conservation of information and recent homotopy theoretic approaches to type theory.",
isbn="978-3-662-49498-1"
}

@article{10.1145/3498667,
author = {V. Choudhury and J. Karwowski and A. Sabry},
title = {Symmetries in Reversible Programming: From Symmetric Rig Groupoids to Reversible Programming Languages},
year = {2022},
issue_date = {January 2022},
publisher = {Association for Computing Machinery},
address = {New York, NY, USA},
volume = {6},
number = {POPL},
url = {https://doi.org/10.1145/3498667},
doi = {10.1145/3498667},
abstract = {The Pi family of reversible programming languages for boolean circuits is presented as a syntax of combinators witnessing type isomorphisms of algebraic data types. In this paper, we give a denotational semantics for this language, using weak groupoids \`{a} la Homotopy Type Theory, and show how to derive an equational theory for it, presented by 2-combinators witnessing equivalences of type isomorphisms. We establish a correspondence between the syntactic groupoid of the language and a formally presented univalent subuniverse of finite types. The correspondence relates 1-combinators to 1-paths, and 2-combinators to 2-paths in the universe, which is shown to be sound and complete for both levels, forming an equivalence of groupoids. We use this to establish a Curry-Howard-Lambek correspondence between Reversible Logic, Reversible Programming Languages, and Symmetric Rig Groupoids, by showing that the syntax of Pi is presented by the free symmetric rig groupoid, given by finite sets and bijections. Using the formalisation of our results, we perform normalisation-by-evaluation, verification and synthesis of reversible logic gates, motivated by examples from quantum computing. We also show how to reason about and transfer theorems between different representations of reversible circuits.},
journal = {Proc. ACM Program. Lang.},
month = 1,
articleno = {6},
numpages = {32},
keywords = {homotopy type theory, reversible computing, type isomorphisms, groupoids, reversible programming languages, univalent foundations, permutations, rewriting, groups}
}

@incollection{CARETTE202215,
title = {Embracing the laws of physics: Three reversible models of computation},
editor = {A. R. Hurson},
series = {Advances in Computers},
publisher = {Elsevier},
volume = {126},
pages = {15-63},
year = {2022},
issn = {0065-2458},
doi = {https://doi.org/10.1016/bs.adcom.2021.11.009},
url = {https://www.sciencedirect.com/science/article/pii/S0065245821000838},
author = {J. Carette and R. P. James and A. Sabry},
keywords = {Reversible programming, Reversible Boolean circuits, Monoidal categories, Type isomorphisms, Commutative semirings, Homotopy-type theory, Quantum circuits, Permutations}
}

@article{heunenkaarsgaard:qie,
  title={Quantum Information Effects},
  author={C. Heunen and R. Kaarsgaard},
  journal={Proceedings of the ACM on Programming Languages},
  volume={6},
  number={POPL},
  pages={1--27},
  year={2022},
  publisher={ACM}
}

@inproceedings{glueckkaarsgaardyokoyama:reversible,
  title={Reversible programs have reversible semantics},
  author={R. Gl{\"u}ck and R. Kaarsgaard and T. Yokoyama},
  booktitle={Formal Methods. FM 2019 International Workshops},
  series={Lecture Notes in Computer Science},
  volume={12232},
  pages={413--427},
  year={2019},
  organization={Springer}
}

@inproceedings{heunenkaarsgaardkarvonen:arrows,
  title={Reversible effects as inverse arrows},
  author={C. Heunen and R. Kaarsgaard and M. Karvonen},
  booktitle={Proceedings of the Thirty-Fourth Conference on the
  Mathematical Foundations of Programming Semantics (MFPS XXXIV)},
  series={Electronic Notes in Theoretical Computer Science},
  volume={341},
  pages={179--199},
  year={2018},
  publisher={Elsevier}
}

@ARTICLE{ satohetal:ctrlv,
	author = {T. Satoh and S. Oomura and M. Sugawara and N. Yamamoto},
	title = {Pulse-engineered controlled-{V} gate and its applications on superconducting quantum device},
	journal = {IEEE Transactions on Quantum Engineering},
	year = {2022},
	doi = {10.1109/TQE.2022.3170008},
	volume = {3},
	pages = {3101610}
}

@ARTICLE{ hayes:sqrtnot,
	author = {B. Hayes},
	title = {The square root of {NOT}},
	journal = {American Scientist},
	year = {1995},
	volume = {83},
	pages = {304--308},
    doi = {https://www.jstor.org/stable/29775474}
}

@ARTICLE{ google:supremacy,
	author = {F. {Arute et. al.}},
	title = {Quantum supremacy using a programmable superconducting processor - supplementary information},
	journal = {Nature},
	year = {2019},
	volume = {574},
	pages = {505--510}
}

@INPROCEEDINGS{ clementetal:lov,
  author = {A. Cl{\'e}ment and N. Heurtel and S. Mansfield and S. Perdrix and B. Valiron},
  title = {A complete equational theory for quantum circuits},
  booktitle = {Logic in Computer Science},
  year = {2023},
  doi = {10.48550/arXiv.2206.10577}
}

@INPROCEEDINGS{ duncanperdrix:euler,
  author = {R. Duncan and S. Perdrix},
  title = {Graph states and the necessity of Euler decomposition},
  booktitle = {Computability in Europe},
  year = {2009},
  pages = {167--177},
  series = {Lecture Notes in Computer Science},
  volume = {5635},
  publisher = {Springer},
  doi = {10.1007/978-3-642-03073-4_18}
}

@inproceedings{10.1145/3437992.3439922,
author = {J. Hu and J. Carette},
title = {Formalizing Category Theory in Agda},
year = {2021},
isbn = {9781450382991},
publisher = {Association for Computing Machinery},
address = {New York, NY, USA},
url = {https://doi.org/10.1145/3437992.3439922},
doi = {10.1145/3437992.3439922},
abstract = {The generality and pervasiveness of category theory in modern mathematics makes it a frequent and useful target of formalization. It is however quite challenging to formalize, for a variety of reasons. Agda currently (i.e. in 2020) does not have a standard, working formalization of category theory. We document our work on solving this dilemma. The formalization revealed a number of potential design choices, and we present, motivate and explain the ones we picked. In particular, we find that alternative definitions or alternative proofs from those found in standard textbooks can be advantageous, as well as "fit" Agda's type theory more smoothly. Some definitions regarded as equivalent in standard textbooks turn out to make different "universe level" assumptions, with some being more polymorphic than others. We also pay close attention to engineering issues so that the library integrates well with Agda's own standard library, as well as being compatible with as many of supported type theories in Agda as possible.},
booktitle = {Proceedings of the 10th ACM SIGPLAN International Conference on Certified Programs and Proofs},
pages = {327–342},
numpages = {16},
keywords = {Agda, formal mathematics, category theory},
location = {Virtual, Denmark},
series = {CPP 2021}
}

@INPROCEEDINGS{ beaudrapkissingerwetering:circuitextraction,
  author = {N. {de Beaudrap} and A. Kissinger and J. {van de Wetering}},
  title = {Circuit extraction for {ZX}-diagrams can be \#P-hard},
  booktitle = {ICALP},
  year = {2022},
  pages = {119:1--119:19}
}

@INPROCEEDINGS{ gottesmanknill,
  author = {D. Gottesman},
  title = {The {H}eisenberg representation of quantum computers},
  booktitle = {Proceedings of the XXII International Colloquium on Group Theoretical Methods in Physics},
  pages = {32--43},
  doi = {10.48550/arXiv.quant-ph/9807006},
  year = {1999}
}

@INPROCEEDINGS{ bianselinger:unz,
  author = {X. Bian and P. Selinger},
  title = {Generators and Relations for $U_n(\mathbb{Z}[\frac{1}{2},i]$)},
  booktitle = {Quantum Physics and Logic},
  year = {2021},
  pages = {145--164},
  doi = {10.4204/EPTCS.343.8},
  series = {Electronic Proceedings in Theoretical Computer Science},
  volume = {343}
}

@article{PhysRevLett.74.4087,
  title = {Realizable Universal Quantum Logic Gates},
  author = {T. Sleator and H. Weinfurter},
  journal = {Phys. Rev. Lett.},
  volume = {74},
  issue = {20},
  pages = {4087--4090},
  numpages = {0},
  year = {1995},
  month = 5,
  publisher = {American Physical Society},
  doi = {10.1103/PhysRevLett.74.4087},
  url = {https://link.aps.org/doi/10.1103/PhysRevLett.74.4087}
}

@article{10.5555/2011791.2011799,
author = {V. V. Shende and I. L. Markov},
title = {On the CNOT-Cost of TOFFOLI Gates},
year = {2009},
issue_date = {May 2009},
publisher = {Rinton Press, Incorporated},
address = {Paramus, NJ},
volume = {9},
number = {5},
issn = {1533-7146},
abstract = {The three-input TOFFOLI gate is the workhorse of circuit synthesis for classical logic oper-ations on quantum data, e.g., reversible arithmetic circuits. In physical implementations,however, TOFFOLI gates are decomposed into six CNOT gates and several one-qubit gates.Though this decomposition has been known for at least 10 years, we provide here thefirst demonstration of its CNOT-optimality. We study three-qubit circuits which containless than six CNOT gates and implement a block-diagonal operator, then show that theyimplicitly describe the cosine-sine decomposition of a related operator. Leveraging thecanonical nature of such decompositions to limit one-qubit gates appearing in respectivecircuits, we prove that the n-qubit analogue of the TOFFOLI requires at least 2n CNOTgates. Additionally, our results offer a complete classification of three-qubit diagonaloperators by their CNOT-cost, which holds even if ancilla qubits are available.},
journal = {Quantum Info. Comput.},
month = 5,
pages = {461–486},
numpages = {26}
}

@article{Amy2020numbertheoretic,
  doi = {10.22331/q-2020-04-06-252},
  url = {https://doi.org/10.22331/q-2020-04-06-252},
  title = {Number-{T}heoretic {C}haracterizations of {S}ome {R}estricted {C}lifford+{T} {C}ircuits},
  author = {M. Amy and A. N. Glaudell and N. J. Ross},
  journal = {{Quantum}},
  issn = {2521-327X},
  publisher = {{Verein zur F{\"{o}}rderung des Open Access Publizierens in den Quantenwissenschaften}},
  volume = {4},
  pages = {252},
  month = apr,
  year = {2020}
}

@article{selinger:clifford,
  TITLE = {{Generators and relations for n-qubit Clifford operators}},
  AUTHOR = {P. Selinger},
  URL = {https://lmcs.episciences.org/1570},
  DOI = {10.2168/LMCS-11(2:10)2015},
  JOURNAL = {{Logical Methods in Computer Science}},
  VOLUME = {{Volume 11, Issue 2}},
  YEAR = {2015},
  MONTH = Jun,
  KEYWORDS = {Quantum Physics ; Computer Science - Emerging Technologies ; Computer Science - Logic in Computer Science},
}

@INPROCEEDINGS{ bianselinger:cliffordt,
  author = {X. Bian and P. Selinger},
  title = {Generators and Relations for 2-qubit Clifford+T operators},
  booktitle = {Quantum Physics and Logic},
  year = {2022},
  series = {Electronic Proceedings in Theoretical Computer Science}
}

@inproceedings{staton:effects,
  author = {S. Staton},
  title = {Algebraic Effects, Linearity, and Quantum Programming Languages},
  year = {2015},
  publisher = {ACM},
  booktitle = {Proceedings of the 42nd Annual ACM SIGPLAN-SIGACT Symposium on Principles of Programming Languages},
  pages = {395–406},
  series = {POPL '15}
}

@book{may:eringspectra,
  author = {J. P. May},
  title = {$E_\infty$ Ring Spaces and $E_\infty$ Ring Spectra},
  year = {1977},
  publisher = {Springer}
}

@Article{deutsch,
  author = 	 {D. Deutsch},
  title = 	 {Quantum theory, the {C}hurch–{T}uring principle and the universal quantum computer},
  journal = 	 {Proc. R. Soc. Lond. A 400},
  year = 	 1985}

@Article{deutschJozsa,
  author = 	 {D. Deutsch and R. Jozsa},
  title = 	 {Rapid solution of problems by quantum computation},
  journal = 	 {Proc. R. Soc. Lond. A 439},
  year = 	 1992}

@InProceedings{thomsenkaarsgaardsoeken:ricercar,
  author={M. K. Thomsen and R. Kaarsgaard and M. Soeken},
  title={Ricercar: A Language for Describing and Rewriting Reversible Circuits with Ancillae and Its Permutation Semantics},
  booktitle={Reversible Computation},
  year={2015},
  publisher={Springer International Publishing},
  pages={200--215}
}

@article{gilesselinger:exact,
  title = {Exact synthesis of multiqubit Clifford+$T$ circuits},
  author = {B. Giles and P. Selinger},
  journal = {Phys. Rev. A},
  volume = {87},
  issue = {3},
  year = {2013},
  publisher = {American Physical Society}
}

@inproceedings{yehwetering:quditcliffordt,
  title = {Constructing All Qutrit Controlled Clifford+T gates in Clifford+T},
  author = {L. Yeh and J. van de Wetering},
  booktitle = {Reversible Computation},
  year = {2022},
  publisher = {Springer},
  pages = {28-50}
}

@book{yanofskymannucci:qccs,
  title = {Quantum Computing for Computer Scientists},
  author = {N. Yanofsky and M. A. Mannucci},
  publisher = {Cambridge University Press},
  year = {2008}
}

@book{nielsenchuang:qcqi,
  title={Quantum Computation and Quantum Information},
  author={M. A. Nielsen and I. Chuang},
  publisher={Cambridge University Press},
  year={2010}
}

@unpublished{johnsonyau:bimonoidal,
  author = {N. Johnson and D. Yau},
  title = {Bimonoidal Categories, $E_n$-Monoidal Categories, and
           Algebraic $K$-Theory},
  note = {\href{https://arxiv.org/abs/2107.10526}{arXiv:2107.10526}},
  year = {2021}
}

@book{awodey:cat,
  title={Category Theory},
  author={S. Awodey},
  publisher={Oxford University Press},
  year={2010}
}

\appendix

%%%%%%%%%%%%%%%%%%%%%%%%%%%%%%%%%%%%%%%%%%%%%%%%%%%%%%%%%%%%%%%%%%%%%%%%
\end{document}